\def\Pro{{\rm \Pi}}
\def\degr{{\rm deg}}
\def\ham{{\cal {H}}_\Lambda}
\def\hamb{{\cal {H}}}
\def\grupo{{\cal {G}}_\Lambda}
\def\map{{\cal {T}}}
\def\evo{{\cal {E}}_0}
\def\rr{{\rho}}
\def\mm{{r}}
\def\JJ{{I}}
\def\KK{{J}}
\def\LL{{K}}
\def\casU{{\cal C}} % Casimir in Ug
\def\casS{{C}} % Casimir in Sg
\def\ve{{v}}
\def\FF{{F}}
\def\aA{{A}}
\def\bB{{B}}
\def\cC{{\Upsilon}}
\def\MM{{\mu}}
\theoremstyle{plain}
\newtheorem{theorem}{Theorem}
\newtheorem{corollary}[theorem]{Corollary}
\newtheorem{lemma}[theorem]{Lemma}
\newtheorem{proposition}[theorem]{Proposition}
\theoremstyle{definition}
\newtheorem{definition}[theorem]{Definition}
\begin{document}

\centerline{\Large \bf From constants of motion to superposition rules}
\vskip 0.25cm
\centerline{\Large \bf for Lie--Hamilton systems}
\vskip 0.5cm
\centerline{A Ballesteros$^1$, J F Cari\~nena$^2$, F J Herranz$^1$, J de Lucas$^3$ and C~Sard\'on$^4$}
\vskip 0.5cm

\centerline{$^1$Department of Physics, University of Burgos, 09.001 Burgos, Spain.}
\vskip 0.25cm
\centerline{$^2$Department of Theoretical Physics and IUMA,
University of Zaragoza,}
\centerline {c. Pedro Cerbuna 12, 50.009 Zaragoza, Spain.}
\vskip 0.25cm
\centerline{$^3$Faculty of Mathematics and Natural Sciences,
Cardinal Stefan}
\centerline{ Wyszy\'nski University, ul. W\'oycickiego 1/3, 01-938 Warsaw, Poland.}
\vskip 0.25cm
\centerline{$^4$Department of Fundamental Physics, Faculty of
Sciences,}
\centerline{ University of Salamanca, Plaza de la Merced s/n, 37.008 Salamanca, Spain.}
\vskip 0.25cm

\begin{abstract}
A {\it Lie system} is a nonautonomous system of first-order differential equations possessing a {\it superposition rule}, i.e. a map
expressing its general solution in terms of a generic finite family of particular solutions and some constants.
Lie--Hamilton systems form a subclass of Lie systems whose dynamics is governed by
a curve in a finite-dimensional real Lie algebra of functions on a Poisson
manifold. It is shown that Lie--Hamilton systems are naturally endowed with a Poisson coalgebra structure. This allows us to devise methods to derive in an algebraic way their constants of motion and superposition rules. We illustrate our methods by studying Kummer--Schwarz equations,  Riccati equations, Ermakov systems and Smorodinsky--Winternitz systems with time-dependent frequency.
\end{abstract}

%Uncomment for PACS numbers title message
{\bf PACS: }{45.10.Na, 02.40.Yy, 02.40.Hw}\\

{\bf AMS:} {34A26 (primary) 70G45, 70H99 (secondary)}\\
% Keywords required only for MST, PB, PMB, PM, JOA, JOB?
%\vspace{2pc}

\noindent{\it Keywords}: Kummer--Schwarz equation, Lie--Hamilton system, Lie system, superposition rule, Poisson coalgebra, Poisson structure, symplectic geometry, Vessiot--Guldberg Lie algebra, Smorodinsky--Winternitz oscillator.

% Uncomment for Submitted to journal title message
% Comment out if separate title page not required
%\maketitle

\section{Introduction}
The importance of geometric methods for the study of differential equations is
unquestionable. They have led to the description of many
interesting properties of differential equations, which in turn have been
applied in remarkable mathematical and physical problems
\cite{Ol93}-\cite{Be84}. Among geometric methods, we here focus on those of the theory of Lie systems
\cite{LS}-\cite{Dissertationes}.

Lie systems have lately been analysed thoroughly (see \cite{Dissertationes} and
references therein). This
has given rise to new techniques that have been employed to
study
relevant differential equations occurring in physics \cite{SIGMA}-\cite{FLV10},
mathematics \cite{CGL12}, control theory \cite{Clem06}, economy \cite{cal}, etc.

The celebrated {\it Lie--Scheffers Theorem} \cite{LS,CGM07}
states that a Lie system amounts to a $t$-dependent vector field taking values in a
finite-dimensional real Lie algebra of vector fields, a so-called {\it
Vessiot--Guldberg Lie algebra} of the system \cite{LS,CGM07}. In this work we concentrate on
analysing a subclass of Lie systems on Poisson manifolds, the
referred to as {\it Lie--Hamilton systems} \cite{CLS12Ham}, that admit a Vessiot--Guldberg Lie algebra of Hamiltonian vector fields
with
respect to a Poisson structure.

As an example of Lie--Hamilton systems, consider the  first-order system of differential equations on $\mathcal{O}=\{(x,p)\in{\rm T}^*\mathbb{R}\mid p< 0\}$
of the form
\begin{equation}\label{SecOrdRic}
\left\{
\begin{array}{ll}
\frac{{\rm d} x}{{\rm d} t}&=\frac{1}{\sqrt{-p}}-a_0(t)-a_1(t)x-a_2(t)x^2,\\
\frac{{\rm d} p}{{\rm d} t}&= p(a_1(t)+2a_2(t)x),
\end{array}\right.
\end{equation}
with $a_0(t),a_1(t),a_2(t)$ being arbitrary functions, occurring in the
study of some second-order Riccati equations \cite{CLS12Ham,CLS12}. This system is a
Lie--Hamilton system as it describes the integral curves of the $t$-dependent
vector field
\[
X_t=X_1-a_0(t)X_2-a_1(t)X_3-a_2(t)X_4
,\]
satisfying that the vector fields $\{X_t\}_{t\in\mathbb{R}}$ are contained in the five-dimensional real Lie algebra of vector fields on  $\mathcal{O}$  spanned by 
\begin{eqnarray}\label{VF}
\qquad X_1=\frac{1}{\sqrt{-p}}\frac{\partial}{\partial x},\quad
X_2=\frac{\partial}{\partial x},\quad
X_3=x\frac{\partial}{\partial x}-p\frac{\partial}{\partial p},\quad
X_4=x^2\frac{\partial}{\partial x}-2xp\frac{\partial}{\partial p},\nonumber \\   \qquad
X_5=\frac{x}{\sqrt{-p}}\frac{\partial}{\partial
x}+2\sqrt{-p}\frac{\partial}{\partial p},
\end{eqnarray}
which are additionally Hamiltonian vector fields
relative to the Poisson bivector
$\Lambda=\partial/\partial x\wedge\partial/\partial p$ on $\mathcal{O}$. Indeed, they admit the 
Hamiltonian functions 
\begin{eqnarray}\label{equFun}
h_1=-2\sqrt{-p}, \quad  h_2=p,\quad h_3=xp,\quad h_4=x^2p,\quad  h_5=-2x\sqrt{-p},
\end{eqnarray}
which span along with $h_6=1$ a six-dimensional real Lie algebra of functions with respect to the Poisson bracket induced
by $\Lambda$ (see \cite{CLS12} for details).

  Lie--Hamilton systems can be employed to investigate remarkable dynamical sytems and enjoy a plethora of geometric properties \cite{CLS12Ham,ADR11,BBHMR09,Ru10}. For example,
the Vessiot--Guldberg Lie algebras of Hamiltonian vector fields associated to these systems
give rise to $t$-dependent Hamiltonians that can be understood as curves in finite-dimensional
real Lie algebras of functions, the hereafter called {\it Lie--Hamilton algebras} \cite{CLS12Ham}.
This property has recently been employed to study $t$-independent constants of motion of
Lie--Hamilton systems \cite{CLS12Ham}. Note that $t$ generally stands for the time when referring to Lie--Hamilton
systems describing a physical system.

In this work, we first employ Lie--Hamilton algebras so as to analyse the
constants of motion for Lie--Hamilton systems. This provides some generalisations of the results given in
\cite{CLS12Ham} about $t$-independent constants of motion and in \cite{Ru10} about $t$-dependent constants of motion for Lie--Hamilton systems. In addition, we demonstrate that several types of constants of motion of Lie--Hamilton systems form a Poisson algebra, and we devise algebraic procedures to derive them or, at least, to
simplify their calculation.

 We propose the use of {\it Poisson coalgebra} techniques to obtain superposition rules for Lie--Hamilton systems in an algebraic manner. We recall that Poisson coalgebras are Poisson algebras
endowed with a Poisson algebra homomorphism called the coproduct map~\cite{CP}. We show that the coproduct map applied onto a Casimir function of an appropriate Poisson coalgebra naturally related to a Lie--Hamilton system $X$ gives rise to $t$-independent constants of motion that can be employed to study the superposition rules and the $N$-dimensional generalisations of $X$.

Since our procedures are algebraic, they are much simpler than standard methods \cite{PW,CGM07}, which require the integration of  systems of partial/ordinary
differential equations.
As an application, we derive constants of motion and
superposition rules for some Lie--Hamilton systems of interest: Kummer--Schwarz equations
in Hamiltonian form \cite{CLS12Ham},  systems of Riccati equations \cite{CGM07}, Smorodinsky--Winternitz systems with time-dependent frequency~\cite{WSUF65,WSUF67}, 
and some classical mechanical systems \cite{ADR11,Pi50,Er08}.

The structure of the paper goes as follows. We describe  the conventions and the most fundamental notions to be used throughout our
paper in Section 2. In Section 3, we introduce the fundamental
features of Lie--Hamilton systems. In Section 4, we analyse the algebraic structure of
general constants of motion for Lie--Hamilton systems. Subsequently, 
we focus on the study of several relevant particular types of such constants  in Section 5. This
involves the description of some new methods for their calculation.  In Section 6, Poisson coalgebras are shown to provide new simpler methods to derive constants of motion and superposition rules
for Lie--Hamilton systems. Several examples are analysed in Section 7. Finally, we summarise our results and plan of future research in Section 8.

\section{Preliminaries}\label{LSLS}

In this section we survey $t$-dependent vector fields \cite{Dissertationes}, Poisson algebras \cite{CLS12Ham} and Poisson coalgebra structures~\cite{CP,BR}. For simplicity, we generally assume functions and geometric structures
to be real, smooth, and globally defined. This permits us to omit minor
technical problems so as to highlight the main aspects of our results.

A Lie algebra is a  pair $(V,[\cdot,\cdot])$, where $V$ stands for a
real linear space equipped with a Lie
 bracket $[\cdot\,,\cdot]:V\times V\rightarrow V$. We define ${\rm
Lie}(\mathcal{B},V,[\cdot,\cdot])$ to be the smallest Lie subalgebra of $(V,[\cdot,\cdot])$ containing $\mathcal{B}$. When its meaning is clear, we write $V$ and  ${\rm Lie}(\mathcal{B})$ instead of $(V,[\cdot,\cdot])$ and ${\rm
Lie}(\mathcal{B},V,[\cdot,\cdot])$, respectively, 

A {\it $t$-dependent vector field} on    a manifold $N$ is a map $X:(t,x)\in\mathbb{R}\times
N\mapsto X(t,x)\in {\rm T}N$
such that $\tau_N\circ X=\pi_2$, with $\pi_2:(t,x)\in\mathbb{R}\times N\mapsto
x\in N$ and $\tau_N:{\rm T}N\rightarrow N$ being the projection associated to the
tangent bundle ${\rm T}N$. This amounts to saying that $X$
is a $t$-parametrized family of standard vector fields $\{X_t\}_{t\in\mathbb{R}}$, with $X_t:x\in N\mapsto
X(t,x)\in {\rm T}N$. We call {\it minimal Lie algebra} of $X$ the smallest real Lie algebra of vector fields, $V^X$, containing
$\{X_t\}_{t\in\mathbb{R}}$, namely $V^X={\rm
Lie}(\{X_t\}_{t\in\mathbb{R}})$. We can also consider each $X$ as
a vector field on $\mathbb{R}\times N$
 by defining $(Xf)(t,x)=(X_tf_t)(x)$, where $f\in C^\infty(\mathbb{R}\times N)$
and $f_t\in C^\infty(N)$ stands for $f_t:x\in N\mapsto f(t,x)\in\mathbb{R}$.

We call {\it integral curves} of $X$ the integral
curves of its {\it
suspension}, i.e.~the vector field $\bar
X(t,x)=\partial/\partial
t+X(t,x)$ on $\mathbb{R}\times N$ \cite{Dissertationes,Foundations,CR89}. Every integral curve
$\gamma:t\in\mathbb{R}\mapsto (t,x(t))\in\mathbb{R}\times N$ of $X$ satisfies the {\it associated system} to $X$, i.e.
\begin{equation}\label{aso}
\frac{{\rm d}(\pi_2 \circ \gamma)}{{\rm d} t}(t)=(X\circ \gamma)( t).
\end{equation}
Conversely, there exists a unique
$X$ whose integral curves of the form $(t,x(t))$ describe
the solutions of a system of first-order systems in normal form.
This establishes a bijection between $t$-dependent vector fields and such systems, which justifies to use $X$ to designate both $X$ and (\ref{aso}).

 A {\it Poisson algebra}  is a triple $(A,\star, \{\cdot,\cdot\})$
consisting of an $\mathbb{R}$-linear space $A$ along with a
 product $\star:A\times A\rightarrow A$ so that $(A,\star)$ becomes an
associative $\mathbb{R}$-algebra, and a Lie bracket $\{\cdot,\cdot\}$ on $A$,
the so-called {\it Poisson bracket}, such that $(A,\{\cdot,\cdot\})$ is
a (possibly infinite-dimensional) Lie algebra and
\[
\{b\star c,a\}=b\star \{c,a\}+\{b,a\}\star c,\qquad \forall a,b,c\in A.
\]
The above compatibility condition, the called {\it Leibnitz rule},
amounts to saying that the Poisson bracket is a derivation of $(A,\star)$ on each factor. Contrarily to the usual
convention, the product $\star$ of our definition of Poisson algebras may be non-commutative, which simplifies posterior definitions.
We call {\it Casimir element} of $A$ an $a\in A$ such that $\{c,a\}=0$ for all $c\in A$. The set ${\rm Cas}(A)$ of Casimir elements of $A$
is an ideal of $(A,\{\cdot,\cdot\})$. For simplicity, we will sometimes write $ab$ for the  $a\star b$. A {\it Poisson algebra morphism} is a morphism $\map:(A,\star_A,\{\cdot,\cdot\}_A)\rightarrow (B,\star_B,\{\cdot,\cdot\}_B)$ of $\mathbb{R}$-algebras
$\map:(A,\star_A)\rightarrow (B,\star_B)$ that also satisfies that $\map(\{a,b\}_A)=\{\map(a),\map(b)\}_B$ for every $a,b\in A$. 

One of the types of Poisson algebras to be used in this paper are symmetric and universal algebras. Let us describe their
main characteristics. Given a finite-dimensional real Lie algebra $(\mathfrak{g},[\cdot,\cdot]_\mathfrak{g})$, its universal algebra, $U_\mathfrak{g}$, is obtained from the quotient
$T_\mathfrak{g}/\mathcal{R}$ of the tensor algebra $(T_\mathfrak{g},\otimes)$ of $\mathfrak{g}$
by the bilateral ideal  $\mathcal{R}$ spanned by the elements
$v\otimes w-w\otimes v-[v,w]$, with $v,w\in \mathfrak{g}$. Given the quotient map $\pi:T_\mathfrak{g}\rightarrow U_\mathfrak{g}$, the space $U_\mathfrak{g}$ becomes an $\mathbb{R}$-algebra  $(U_\mathfrak{g},\widetilde \otimes)$ when endowed with the product $\widetilde \otimes: U_\mathfrak{g}\times U_\mathfrak{g}\rightarrow U_\mathfrak{g}$  given by
$\pi(P)\,\widetilde\otimes\,\pi(Q)\equiv \pi(P\otimes Q)$, for every $P,Q\in T_\mathfrak{g}$. The Lie
bracket on $\mathfrak{g}$ can be extended to a Lie bracket $\{\cdot,\cdot\}_{U_\mathfrak{g}}$ on
$U_\mathfrak{g}$ by imposing it to be a derivation of
$(U_\mathfrak{g},\widetilde\otimes)$ on each factor.  This turns
$U_\mathfrak{g}$ into a Poisson algebra $(U_\mathfrak{g},\widetilde\otimes,\{\cdot,\cdot\}_{U_\mathfrak{g}})$ \cite{CL99}. The elements
of its Casimir subalgebra are henceforth dubbed as {\it Casimir elements} of $\mathfrak{g}$ \cite{Be81}.

If we set $\mathcal{R}$ to be the bilateral ideal spanned by the elements $v\otimes w-w\otimes v$  in the above procedure, we obtain a new commutative Poisson algebra $S_\mathfrak{g}$ called
{\it symmetric algebra} of $(\mathfrak{g},[\cdot,\cdot]_\mathfrak{g})$. The elements of $S_\mathfrak{g}$ are polynomials on the elements of $\mathfrak{g}$. Via the isomorphism 
$\mathfrak{g}\simeq (\mathfrak{g}^*)^*$,
they can naturally be understood as polynomial functions on $\mathfrak{g}^*$ \cite{CL99,AMA75}. The Casimir elements of
this Poisson algebra are called {\it Casimir functions} of $\mathfrak{g}$.

The Poisson algebras $U_\mathfrak{g}$ and $S_\mathfrak{g}$ are related by the
{\it symmetrizer map} \cite{Be81,AMA75,Var98}, i.e. the linear
isomorphism $\lambda:S_\mathfrak{g}\rightarrow 
U_\mathfrak{g}$ of the form
\begin{equation}
\label{symmap}
\lambda(v_{i_1})=\pi(v_{i_1}),\quad
\lambda(v_{i_1}v_{i_2}\ldots v_{i_l})=\frac{1}{l!}\sum_{s\in
\Pi_l}\lambda(v_{s(i_1)})\, \widetilde\otimes\ldots\widetilde\otimes\, \lambda (v_{s(i_l)}),
\end{equation}
for all $v_{i_1},\ldots,v_{i_l}\in\mathfrak{g}$ and with $\Pi_l$ being the set of permutations of $l$ elements. Moreover, 
\begin{equation}\label{UgSg}\nonumber
\lambda^{-1} \left(\{ v,P\}_{U_\mathfrak{g}} \right)=\{v,\lambda^{-1}(P)\}_{S_\mathfrak{g}}, \qquad \forall P\in
U_\mathfrak{g},\quad\forall v\in\mathfrak{g}.
\end{equation}
So, $\lambda^{-1}$ maps the Casimir elements of $\mathfrak{g}$ into Casimir elements of $S_\mathfrak{g}$.

If $(A,\star_A,\{\cdot,\cdot\}_A)$ and $(B,\star_B,\{\cdot,\cdot\}_B)$ are Poisson algebras and $\star_A,$ $\star_B$ are commutative, then
$A\otimes B$ becomes a Poisson algebra $(A\otimes B,\star_{A\otimes B},\{\cdot,\cdot\}_{A\otimes
B})$ by defining
\begin{eqnarray*}
  (a\otimes b)\star_{A\otimes B} (c\otimes d)&=(a\star_A c)\otimes (b\star_B d),\qquad \forall a,c\in A,\quad \forall b,d\in B,\\
  \{a\otimes b, c\otimes d\}_{A\otimes B}&=\{a, c\}_A\otimes b\star_B d+a\star_A
c\otimes\{b,d\}_B .
\end{eqnarray*}
Similarly, a Poisson structure on $A^{(m)}\equiv\   \stackrel{m-{\rm times}}{\overbrace{  A\otimes \ldots\otimes A}}$ can be constructed by induction.

We say that $(A,\star_A,\{\cdot,\cdot\}_A,\Delta)$ is a {\it Poisson coalgebra} if $(A,\star_A,\{\cdot,\cdot\}_A)$ is a Poisson
algebra and $\Delta:(A,\star_A,\{\cdot,\cdot\}_A)\rightarrow (A\otimes
A,\star_{A\otimes A},\{\cdot,\cdot\}_{A\otimes A})$, the so-called {\it coproduct}, is a
Poisson algebra
homomorphism
which is  {\it coassociative} \cite{CP}, i.e. $
(\Delta \otimes {\rm Id}) \circ \Delta=({\rm Id} \otimes \Delta) \circ \Delta$.
Then, the {$m$-th coproduct} map
$\Delta ^{(m)}:  A\rightarrow  A^{(m)}$
can be defined recursively as follows
\begin{equation}\label{copr}
{\Delta}^{(m)}= ({\stackrel{(m-2)-{\rm times}}{\overbrace{{\rm
Id}\otimes\ldots\otimes{\rm Id}}}}\otimes {\Delta^{(2)}})\circ \Delta^{(m-1)},\qquad m>2,
\end{equation}
and $\Delta\equiv \Delta^{(2)}$. Such an induction ensures that $\Delta^{(m)}$ is also a Poisson map.

In particular, $S_\mathfrak{g}$ is a Poisson coalgebra with {\em primitive coproduct map} given by
$\Delta(v)= v\otimes 1+1\otimes v,$ for all $v\in\mathfrak{g}\subset S_\mathfrak{g}$.
The coassociativity of $\Delta$ is straightforward, and its $m$-th generalization reads
\[
  \Delta^{(m)}(v)=v\otimes{ \stackrel{(m-1)-{\rm times}}{\overbrace{{\rm
1}\otimes\ldots\otimes{\rm 1}}}}
 +\, {\rm 1}\otimes v   \otimes{ \stackrel{(m-2)-{\rm times}}{\overbrace{{\rm
1}\otimes\ldots\otimes{\rm 1}}}}  +
\ldots
  +  { \stackrel{(m-1)-{\rm times}}{\overbrace{{\rm
1}\otimes\ldots\otimes{\rm 1}}}}  \otimes v, \quad \forall v\in\mathfrak{g}\subset S_\mathfrak{g}.
\]

A {\it Poisson manifold} is a pair $(N,\{\cdot,\cdot\})$ such that
$(C^\infty(N),\star, \{\cdot,\cdot\})$, where ``$\star$''
 stands for the standard product of
functions, is a Poisson algebra. We call $\{\cdot,\cdot \}$ the {\it Poisson structure} of the Poisson manifold.
For instance,
the dual space $\mathfrak{g}^*$ of a Lie algebra
$(\mathfrak{g},[\,\cdot,\cdot\,]_{\mathfrak{g}})$ can be  endowed
with the {\it Lie--Poisson structure} \cite{Va94}, namely
\[
\{f,g\}(\theta)=\langle [df_\theta,d g_\theta]_{\mathfrak{g}},\theta\rangle,\qquad f,g\in
C^\infty(\mathfrak{g}^*),\qquad \theta\in\mathfrak{g}^*.
\]

As every $\{\cdot,f\}$, with $f\in C^\infty(N)$, is a derivation on
$(C^\infty(N),\star)$, there exists a unique vector field $X_f$ on $N$, the
referred to as {\it Hamiltonian vector field} associated with $f$,
such that $X_fg=\{g,f\}$ for all $g\in C^{\infty}(N)$. The Jacobi identity  for
the Poisson structure then entails $X_{\{f,g\}}=-[X_f,X_g]$,  $\forall f,g\in
C^{\infty}(N)$.
Hence, the mapping $f\mapsto X_f$ is a Lie algebra anti-homomorphism
from $(C^{\infty}(N),\{\cdot,\cdot\})$ to $(\Gamma({\rm T}N),[\cdot,\cdot])$, where $\Gamma({\rm T}N)$ is  the space of sections of the tangent bundle to $N$.

As every Poisson structure is a derivation in each entry, it determines a unique
bivector field $\Lambda\in \Gamma(\bigwedge^2 {\rm T}N)$ such that
\begin{equation}\label{ForPoi}
\{f,g\}=\Lambda(df,dg),\qquad \forall f,g\in C^{\infty}(N).
\end{equation}
We call $\Lambda$ the {\it Poisson bivector} of the Poisson manifold
$(N,\{\cdot,\cdot\})$.
In view of the Jacobi identity for $\{\cdot,\cdot\}$, it follows that
$[\Lambda,\Lambda]_{\rm SN}=0$,
with $[\cdot,\cdot]_{\rm SN}$ being the {\it Schouten--Nijenhuis Lie bracket} \cite{Va94}.
Conversely, every bivector field
 $\Lambda$ on $N$ satisfying this condition gives rise to a Poisson
structure by formula (\ref{ForPoi}).
This justifies to refer to Poisson manifolds as $(N,\{\cdot,\cdot\})$ or
$(N,\Lambda)$ indistinctly.

Observe that $(N,\Lambda)$ induces a unique bundle morphism
$\widehat\Lambda:{\rm T}^*N\rightarrow {\rm T}N$ satisfying
 $\theta'(\widehat \Lambda(\theta))=\Lambda(\theta,\theta')$ for every $\theta, \theta'\in \Gamma(T^*N)$.
 So, $X_f=-\widehat \Lambda (df)$, for any  $f\in C^\infty(N)$.

\section{Lie systems and Lie--Hamilton systems}\label{LSLH}

Recall that Lie systems are systems of first-order differential equations
admitting a superposition rule (see \cite{CGM07,Dissertationes} for details).

\begin{definition} A {\it superposition rule} for a system $X$ on $N$ is a function
$\Phi:N^{m}\times N\rightarrow
N$ of the form $x=\Phi(x_{(1)}, \ldots,x_{(m)};k)$ allowing us to write
the general
 solution $x(t)$ of $X$ as
  $x(t)=\Phi(x_{(1)}(t), \ldots,x_{(m)}(t);k),$
where $x_{(1)}(t),\ldots,x_{(m)}(t)$ is a generic family of
particular solutions and $k$ is a point of $N$.
 \end{definition}

The conditions ensuring that a system $X$ possesses a superposition rule are
described by the {\it Lie--Scheffers Theorem} \cite{LS,CGM07,Dissertationes}.

\begin{theorem} A system $X$ on $N$ admits a superposition rule if and only if $X_t=\sum_{\alpha=1}^rb_\alpha(t)X_\alpha$
for a family $b_1(t),\ldots,b_r(t)$  of $t$-dependent functions and a
set of vector fields $X_1,\ldots,X_r$ on $N$ spanning an $r$-dimensional real Lie algebra,
a so-called  Vessiot--Guldberg Lie algebra associated to
$X$~\cite{CRC96}.
In other words, $X$ is a Lie system if and only if $V^X$ is finite-dimensional.
\end{theorem}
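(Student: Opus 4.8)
The plan is to split the statement into the easy algebraic equivalence and the genuine analytic content. First note that the decomposition $X_t=\sum_{\alpha=1}^r b_\alpha(t)X_\alpha$ with the $X_\alpha$ spanning an $r$-dimensional Lie algebra is equivalent to $\dim V^X<\infty$: if such a decomposition exists then $V^X=\mathrm{Lie}(\{X_t\}_{t\in\mathbb{R}})$ sits inside the finite-dimensional Lie algebra generated by the $X_\alpha$, hence is finite-dimensional; conversely, if $\dim V^X=r<\infty$, any basis $X_1,\dots,X_r$ of $V^X$ yields the decomposition since each $X_t\in V^X$. So it remains to prove that $X$ admits a superposition rule if and only if $\dim V^X<\infty$. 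My main tool is the diagonal prolongation: for a vector field $Y$ on $N$ I write $\widetilde Y$ for its prolongation to $N^{m+1}$, acting as $Y$ on each of the $m+1$ factors. The construction $Y\mapsto \widetilde Y$ is a Lie-algebra homomorphism, so $\widetilde{[Y,Z]}=[\widetilde Y,\widetilde Z]$, a fact I use repeatedly.

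For the implication \emph{superposition rule} $\Rightarrow$ $\dim V^X<\infty$, I would start from a superposition rule $x=\Phi(x_{(1)},\dots,x_{(m)};k)$ and solve it locally, via the implicit function theorem, for $k$. This produces $n=\dim N$ functionally independent functions $F_1,\dots,F_n$ on $N^{m+1}$ with $F_i(x_{(1)},\dots,x_{(m)},x)=k_i$ and with the $n\times n$ Jacobian $\partial(F_1,\dots,F_n)/\partial x$ in the last copy nonsingular. The defining property of a superposition rule, namely that it reconstructs the general solution from any generic family of particular solutions, translates into the statement that each $F_i$ is a common first integral of all the prolonged fields $\widetilde X_t$ on $N^{m+1}$ (differentiate $F_i=k_i$ along the prolonged motion and use that $F_i$ carries no explicit $t$-dependence). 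Since $\widetilde X_s F_i=\widetilde X_t F_i=0$ forces $[\widetilde X_s,\widetilde X_t]F_i=0$ and prolongation commutes with brackets, the whole Lie algebra $\{\widetilde Y:Y\in V^X\}$ annihilates $F_1,\dots,F_n$. Writing out $\widetilde Y F_i=0$ expresses the derivative of $F_i$ along $Y$ at the last copy as a combination of the derivatives at the first $m$ copies; by nonsingularity of $\partial F/\partial x$ this determines $Y(x)$ from the data $Y(x_{(1)}),\dots,Y(x_{(m)})$. Fixing $m$ generic points $p_1,\dots,p_m$, the evaluation $Y\mapsto(Y(p_1),\dots,Y(p_m))\in\mathbb{R}^{mn}$ is therefore injective on the Lie algebra $\mathfrak{V}=\{Y:\widetilde Y F_i=0,\ \forall i\}$, whence $\dim V^X\le\dim\mathfrak{V}\le mn<\infty$.

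For the converse $\dim V^X<\infty\Rightarrow$ \emph{superposition rule}, I take a basis $X_1,\dots,X_r$ of $V^X$; the prolongations $\widetilde X_1,\dots,\widetilde X_r$ span a distribution $\mathcal{D}$ on $N^{m+1}$ that is involutive because $[\widetilde X_\alpha,\widetilde X_\beta]=\widetilde{[X_\alpha,X_\beta]}$ is again a combination of the $\widetilde X_\gamma$. Choosing $m$ large enough that the generic rank of $\mathcal{D}$ equals $r$ and $mn\ge r$, the Frobenius theorem supplies $(m+1)n-r\ge n$ functionally independent common first integrals, from which I extract $n$ of them, $F_1,\dots,F_n$, with nonsingular Jacobian $\partial(F_1,\dots,F_n)/\partial x$ on the last factor. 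Solving $F_i(x_{(1)},\dots,x_{(m)},x)=k_i$ for $x$ defines $x=\Phi(x_{(1)},\dots,x_{(m)};k)$. This is a superposition rule: along any integral curve of the suspension $\bar X=\partial/\partial t+X$, the prolonged curve $(x_{(1)}(t),\dots,x_{(m)}(t),x(t))$ is an integral curve of $\partial/\partial t+\widetilde X_t$, so the $F_i$, being first integrals of every $\widetilde X_t$, stay constant, and hence $x(t)=\Phi(x_{(1)}(t),\dots,x_{(m)}(t);k)$.

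I expect the main obstacle to be in this constructive converse, precisely the guarantee that $m$ can be fixed so that, among the common first integrals furnished by Frobenius, one can select $n$ with the Jacobian $\partial(F_1,\dots,F_n)/\partial x$ nonsingular on the last copy. This transversality condition is what couples the autonomy of $\mathcal{D}$ with a genericity argument and what ultimately determines the number $m$ of particular solutions the superposition rule requires; everything else reduces to the bracket-closure of diagonal prolongations and a routine application of the implicit function theorem.
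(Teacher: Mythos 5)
The paper does not prove this statement: it is the classical Lie--Scheffers theorem, quoted with references to Lie--Scheffers, Cari\~nena--Grabowski--Marmo and the Dissertationes monograph. Your argument reproduces the standard modern geometric proof via diagonal prolongations, which is precisely the framework the paper adopts afterwards (Definition 3 and the remarks following it), so there is nothing to contrast with an in-paper proof; I can only assess the proposal on its own terms. The necessity direction is sound: solving the rule for $k$, checking that the resulting $F_i$ are common first integrals of every $\widetilde X_t$ (hence of all of $V^X$, since prolongation is a Lie algebra morphism and common first integrals pass to brackets), and bounding $\dim V^X$ by $mn$ through injectivity of the evaluation map at $m$ generic points is exactly the classical argument.

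The one genuine gap is the point you yourself flag in the converse, and it is worth stating how it closes, because you have written the rank condition on the wrong space. What is needed is that the diagonal prolongations of a basis of $V^X$ to $N^{m}$ --- the first $m$ factors only, not $N^{m+1}$ --- be linearly independent at a generic point. First, such an $m$ exists precisely because $V^X$ is finite-dimensional: the subspaces $W(p_1,\ldots,p_m)=\{Y\in V^X: Y(p_1)=\cdots=Y(p_m)=0\}$ decrease as generic points are added, so they stabilise; if they stabilised at a nonzero $Y$, that $Y$ would vanish at every generic point and hence vanish identically. Second, independence on $N^m$ is exactly what delivers your transversality condition: if a leaf tangent vector $\sum_\alpha c_\alpha\widetilde X_\alpha$ at $(p_1,\ldots,p_m,x)$ lies in the tangent space of the fibre $\{(p_1,\ldots,p_m)\}\times N$, then $\sum_\alpha c_\alpha X_\alpha(p_a)=0$ for $a=1,\ldots,m$, forcing $c=0$; so the leaves meet the last fibre transversally and one can select $n$ of the $(m+1)n-r\geq n$ Frobenius first integrals with $\partial(F_1,\ldots,F_n)/\partial x$ nonsingular. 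Requiring only that the rank of $\mathcal{D}$ on $N^{m+1}$ equal $r$ would not rule out a leaf direction contained in the last fibre. With that correction and the stabilisation argument supplied, your plan is a complete and correct proof.
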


In order to illustrate the above concepts, let us consider  a {\it Riccati equation}, i.e. 
\begin{equation}\label{Ric}
\frac{{\rm d} x}{{\rm d} t}=a_0(t)+a_1(t)x+a_2(t)x^2,\\
\end{equation}
where $a_0(t),a_1(t),a_2(t)$ are arbitrary functions. Observe that (\ref{Ric}) is  the system associated to
the $t$-dependent vector field $X=a_0(t)X_1+a_1(t)X_2+a_2(t)X_3$, where
\[
X_1=\frac{\partial}{\partial x},\qquad X_2=x\frac{\partial}{\partial x},\qquad X_3=x^2\frac{\partial}{\partial x}
\]
span a Vessiot--Guldberg Lie algebra $V$ for (\ref{Ric}) isomorphic to $\mathfrak{sl}(2,\mathbb{R})$ \cite{PW}. The Lie--Scheffers
Theorem shows that Riccati must admit a superposition rule. Indeed,
the general solution of (\ref{Ric}) can be brought into the form
$$
x(t)=\frac{x_1(t)(x_3(t)-x_2(t))+kx_3(t)(x_1(t)-x_2(t))}{(x_3(t)-x_2(t))+k(x_1(t)-x_2(t))},
$$
where $x_{(1)}(t),x_{(2)}(t),x_{(3)}(t)$ are three different particular
solutions and $k$ is an arbitrary constant.
So, Riccati equations possess a superposition rule
$\Phi:(x_{(1)},x_{(2)},x_{(3)};k)\in\mathbb{R}^3\times\mathbb{R}\mapsto
x\in\mathbb{R}$ given by
\[
x=\frac{x_1(x_3-x_2)+kx_3(x_1-x_2)}{(x_3-x_2)+k(x_1-x_2)}\]
enabling us to write their general solutions as $x(t)=\Phi(x_{(1)}(t),x_{(2)}(t),x_{(3)}(t);k)$.

There exist several ways to derive superposition rules for Lie systems
\cite{CGM07,AHW81}. The method employed in this work (see \cite[p.~4]{CGL12} and \cite{CL11Sec} for a detailed description and examples) is based on the so-called {\it diagonal
prolongations}.

\begin{definition} Given a $t$-dependent vector field $X$ on $N$, its  {\it
diagonal prolongation} to $N^{m+1}$ is the unique $t$-dependent vector field
$\widetilde X$ on $N^{m+1}$ such that:
\begin{enumerate}
\item Each vector field $\widetilde X_t$ projects onto $X_t$ via $\Pro:(x_{(0)},\ldots,x_{(m)})\in
N^{m+1}\mapsto x_{(0)}\in N$.
\item $\widetilde X$ is invariant under permutations of variables
$x_{(i)}\leftrightarrow x_{(j)}$, with $i,j=0,\ldots,m$.
\end{enumerate}
\end{definition}

Roughly speaking, we can derive superposition rules by obtaining a certain set of constants of motion for an appropriate
diagonal prolongation
of a Lie system. In the literature, this is usually performed by solving a system of PDEs by the method of characteristics
(cf.~\cite{CLS12Ham}), which  can be quite laborious~\cite{CL11Sec}. As seen in this work,
this task can be simplified in the case of {\it Lie--Hamilton systems} through our methods.

\begin{definition} A system $X$ on $N$ is said to be a {\it Lie--Hamilton system} if
$N$ can be endowed with a Poisson bivector $\Lambda$ in such a way that $V^X$ becomes a finite-dimensional real
Lie algebra of Hamiltonian vector fields relative to $\Lambda$.
\end{definition}

Lie--Hamilton systems enjoy relevant features, e.g. (\ref{SecOrdRic}) admits a $t$-dependent Hamiltonian
\begin{equation}\label{Ex}
 h(t)=h_1-a_0(t)h_2-a_1(t)h_3-a_2(t)h_4.
 \end{equation}
The functions $h_1,\ldots,h_5$ and $h_6=1$ given by (\ref{equFun})  span a six-dimensional real Lie algebra $(\mathfrak{h}_6,\{\cdot,\cdot\}_\Lambda)$.
It is also relevant that $\mathfrak{h}_6\simeq \mathfrak{sl}(2,\mathbb{R})
\oplus_s \mathfrak{h}_3$,
where $\mathfrak{sl}(2,\mathbb{R}) \simeq \langle h_2,h_3,h_4\rangle $,
$\mathfrak{h}_3\simeq  \langle h_1,h_5,h_6\rangle$ is
the radical of $\mathfrak{h}_6$, which is isomorphic to the Heisenberg--Weyl Lie algebra, and
$\oplus_s$ stands for a semidirect sum. Additionally, $\mathfrak{h}_6$ is also isomorphic to the  two-photon  Lie algebra and the  $(1+1)$-dimensional centrally extended Schr\"odinger  Lie algebra    (cf.~\cite{Schrod,BBF09}).

The $t$-dependent Hamiltonians of the form (\ref{Ex}) are called {\it
Lie--Hamiltonian structures}. More specifically, we have the following definition.

\begin{definition} A {\it Lie--Hamiltonian structure} is a triple
$(N,\Lambda,h)$, where $(N,\Lambda)$ stands for  a Poisson manifold and $h:
(t,x)\in \mathbb{R}\times N\mapsto h_t(x)=h(t,x)\in  N$ is such that
 the space $(\ham\equiv {\rm Lie}(\{h_t\}_{t\in\mathbb{R}},\{\cdot,\cdot\}_\Lambda),\{\cdot,\cdot\}_\Lambda)$ is 
a finite-dimensional
real Lie algebra.
\end{definition}

We now describe some features of Lie--Hamilton systems (see \cite{CLS12Ham} for details).

\begin{theorem} A system $X$ on $N$ is a {\it Lie--Hamilton system} if and only
if there exists a Lie--Hamiltonian structure $(N,\Lambda,h)$ such that $X_t=
-\widehat \Lambda(dh_t)$ for every $t\in\mathbb{R}$. In this case, we call $(\ham,\{\cdot,\cdot\}_\Lambda)$ a {\it Lie--Hamilton} algebra for
$X$.
\end{theorem}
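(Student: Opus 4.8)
The plan is to prove both implications by exploiting the fact, recorded in Section~\ref{LSLS}, that the assignment $f\mapsto X_f=-\widehat\Lambda(df)$ is linear and a Lie algebra anti-homomorphism, i.e.\ $X_{\{f,g\}}=-[X_f,X_g]$, whose kernel consists precisely of the Casimir functions of $(N,\Lambda)$. For the easier implication I would assume a Lie--Hamiltonian structure $(N,\Lambda,h)$ with $X_t=-\widehat\Lambda(dh_t)=X_{h_t}$ for every $t$. Since $\ham={\rm Lie}(\{h_t\}_{t\in\mathbb{R}})$ is by hypothesis a finite-dimensional real Lie algebra and $f\mapsto X_f$ is linear with $X_{\{f,g\}}=-[X_f,X_g]$, the image $\{X_f:f\in\ham\}$ is a finite-dimensional real Lie algebra of Hamiltonian vector fields. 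Each $X_t=X_{h_t}$ belongs to this image, so $V^X={\rm Lie}(\{X_t\}_{t\in\mathbb{R}})$ is contained in it; hence $V^X$ is finite-dimensional and consists of Hamiltonian vector fields, which is exactly the statement that $X$ is a Lie--Hamilton system.

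The substantial direction is the converse. Suppose $X$ is a Lie--Hamilton system, so there is a Poisson bivector $\Lambda$ making $V^X$ a finite-dimensional Lie algebra of Hamiltonian vector fields with basis $Y_1,\ldots,Y_r$ and structure constants $[Y_i,Y_j]=\sum_k c_{ij}^k Y_k$. As each $Y_i$ is Hamiltonian I may pick $f_i\in C^\infty(N)$ with $X_{f_i}=Y_i$. The obstruction to these functions closing a Lie algebra is $\Omega_{ij}:=\{f_i,f_j\}_\Lambda+\sum_k c_{ij}^k f_k$, and the key observation is that each $\Omega_{ij}$ is a Casimir function: applying $f\mapsto X_f$ and using the anti-homomorphism property gives $X_{\Omega_{ij}}=-[Y_i,Y_j]+\sum_k c_{ij}^k Y_k=0$, so $\widehat\Lambda(d\Omega_{ij})=0$. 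Consequently $\{f_i,\Omega_{jk}\}_\Lambda=0$ and $\{\Omega_{ij},\Omega_{kl}\}_\Lambda=0$, while the relation $\{f_i,f_j\}_\Lambda=-\sum_k c_{ij}^k f_k+\Omega_{ij}$ shows that the finite-dimensional space spanned by $f_1,\ldots,f_r$ together with the finitely many $\Omega_{ij}$ is closed under $\{\cdot,\cdot\}_\Lambda$, hence a finite-dimensional real Lie algebra. Writing $X_t=\sum_i b_i(t)Y_i$ and setting $h_t:=\sum_i b_i(t) f_i$ yields $X_{h_t}=X_t$, that is $X_t=-\widehat\Lambda(dh_t)$, while $\ham={\rm Lie}(\{h_t\}_{t\in\mathbb{R}})$ sits inside the finite-dimensional algebra just constructed and is therefore finite-dimensional; thus $(N,\Lambda,h)$ is the required Lie--Hamiltonian structure.

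The main obstacle I anticipate is exactly this closure step: a priori the Poisson brackets of arbitrarily chosen Hamiltonian functions $f_i$ need not reproduce the structure constants of $V^X$ on the nose, and one must control the discrepancy. The crucial point that makes the argument go through is that this discrepancy lands in the central space of Casimir functions and, being indexed by the finitely many pairs $(i,j)$, can be absorbed by enlarging the span with a finite-dimensional set of Casimirs without spoiling finite-dimensionality. I would also note that the freedom in choosing the $f_i$ up to Casimir functions does not affect the conclusion, and that the final naming of $(\ham,\{\cdot,\cdot\}_\Lambda)$ as a Lie--Hamilton algebra for $X$ is a definition requiring no further verification.
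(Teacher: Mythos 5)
Your proof is correct; note that the paper itself states this theorem without proof, deferring to reference \cite{CLS12Ham}, and your argument is precisely the standard one used there: the forward direction via the anti-homomorphism $f\mapsto X_f$, and the converse by choosing Hamiltonian functions $f_i$ for a basis of $V^X$ and observing that the obstructions $\Omega_{ij}=\{f_i,f_j\}_\Lambda+\sum_k c_{ij}^k f_k$ are Casimir functions, so that adjoining them yields a finite-dimensional Lie algebra containing all the $h_t=\sum_i b_i(t)f_i$. The only points worth making explicit are the smoothness of the coefficients $b_i(t)$ in the decomposition $X_t=\sum_i b_i(t)Y_i$ (which follows from linear independence of the $Y_i$ and smoothness of $X$, and which the paper's standing conventions allow one to gloss over) and the sign bookkeeping in $X_{\{f,g\}}=-[X_f,X_g]$, both of which you handle correctly.
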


\begin{proposition}\label{IntLieHam0} Given a Lie--Hamilton system $X$ on $N$ admitting a Lie--Hamiltonian structure $(N,\Lambda,h)$, a $t$-independent function is a constant of
motion for $X$ if and only if it Poisson commutes with the elements of $\ham$. The  family $\mathcal{I}^X$ of $t$-independent constants of
motion of $X$ form a Poisson algebra $(\mathcal{I}^X,\cdot,\{\cdot,\cdot\}_\Lambda)$.
\end{proposition}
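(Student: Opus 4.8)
The plan is to prove the two assertions of Proposition~\ref{IntLieHam0} in sequence, since the second (Poisson algebra structure) will rely on the characterisation established in the first.

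First I would prove the characterisation of $t$-independent constants of motion. Recall that a $t$-independent function $f\in C^\infty(N)$ is a constant of motion for $X$ precisely when it is constant along every integral curve of $X$, i.e.~when $\bar X f=0$ for the suspension $\bar X=\partial/\partial t+X$. Since $f$ does not depend on $t$, we have $\bar X f=X f$, and this must vanish identically in $t$. Writing $X_t=-\widehat\Lambda(dh_t)=X_{h_t}$ (the Hamiltonian vector field of $h_t$), the condition $X_t f=0$ for all $t$ becomes $X_{h_t}f=\{f,h_t\}_\Lambda=0$ for all $t\in\mathbb{R}$. Thus $f$ is a constant of motion iff it Poisson commutes with every $h_t$. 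The remaining step is to upgrade ``commutes with all $h_t$'' to ``commutes with all of $\ham$''. Here I would use that $\ham={\rm Lie}(\{h_t\}_{t\in\mathbb{R}},\{\cdot,\cdot\}_\Lambda)$ is generated by the $h_t$ under Poisson brackets, together with the Jacobi identity: if $\{f,g_1\}_\Lambda=\{f,g_2\}_\Lambda=0$ then $\{f,\{g_1,g_2\}_\Lambda\}_\Lambda=0$, and commuting with linear combinations is immediate by bilinearity. Hence the set of functions commuting with all $h_t$ coincides with the set commuting with all of $\ham$, which is exactly the claim.

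Next I would establish that $\mathcal{I}^X$ is a Poisson algebra. By the characterisation just proved, $\mathcal{I}^X$ is the \emph{Poisson centraliser} of $\ham$ inside $(C^\infty(N),\star,\{\cdot,\cdot\}_\Lambda)$, i.e.~$\mathcal{I}^X=\{f\in C^\infty(N)\mid \{f,g\}_\Lambda=0\ \forall g\in\ham\}$. I must check this subspace is closed under the pointwise product $\star$ and under the Poisson bracket $\{\cdot,\cdot\}_\Lambda$. Closure under $\{\cdot,\cdot\}_\Lambda$ follows from the Jacobi identity as above: for $f_1,f_2\in\mathcal{I}^X$ and $g\in\ham$, $\{\{f_1,f_2\}_\Lambda,g\}_\Lambda=\{\{f_1,g\}_\Lambda,f_2\}_\Lambda+\{f_1,\{f_2,g\}_\Lambda\}_\Lambda=0$. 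Closure under $\star$ follows from the Leibnitz rule: $\{f_1\star f_2,g\}_\Lambda=f_1\star\{f_2,g\}_\Lambda+\{f_1,g\}_\Lambda\star f_2=0$. Linearity is trivial, and $\mathcal{I}^X$ inherits associativity of $\star$ and the Jacobi identity and Leibnitz compatibility from $C^\infty(N)$, so $(\mathcal{I}^X,\cdot,\{\cdot,\cdot\}_\Lambda)$ is itself a Poisson algebra.

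I do not anticipate a serious obstacle; the argument is essentially a bookkeeping exercise built on the Jacobi identity and the Leibnitz rule. The one point requiring mild care is the passage from the defining condition (commuting with the generators $h_t$) to commuting with the full Lie algebra $\ham$ they generate: one should verify that the centraliser of a generating set equals the centraliser of the subalgebra generated, which is where the derivation property of $\{f,\cdot\}_\Lambda$ (equivalently the Jacobi identity) does the real work. Everything else is a direct verification of the Poisson algebra axioms restricted to the subspace $\mathcal{I}^X$.
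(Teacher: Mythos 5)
Your proof is correct, and the argument is the natural one. Note that the paper itself does not prove Proposition~\ref{IntLieHam0}; it is quoted from reference \cite{CLS12Ham}. Your route --- reducing $\bar Xf=0$ for $t$-independent $f$ to $\{f,h_t\}_\Lambda=0$ for all $t$, passing from the generators $\{h_t\}_{t\in\mathbb{R}}$ to all of $\ham$ via the Jacobi identity, and then closing $\mathcal{I}^X$ under product and bracket by the Leibnitz rule and Jacobi identity --- is exactly the standard argument one expects behind the cited result, so there is nothing to flag.
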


\section{On constants of motion for Lie--Hamilton systems}
The aim of this section is to present an analysis of the algebraic properties
 of the  constants of motion for Lie--Hamilton systems.
More specifically,
we prove that a Lie--Hamiltonian structure $(N,\Lambda,h)$ for a
Lie--Hamilton system $X$ induces a Poisson
bivector on $\mathbb{R}\times N$. This allows us
to endow the space of constants of motion for $X$ with a Poisson algebra structure, which can be used to produce new constants of motion from known ones.  Our achievements extend
to general constants of motion the results derived for $t$-independent constants of motion for Lie--Hamilton systems  in \cite{CLS12Ham}.

Given a system $X$ on $N$, a constant of motion for $X$ is a first-integral $f\in C^\infty(\mathbb{R}\times N)$ of the autonomisation  $\bar X$ of $X$, namely
\begin{equation}\label{EqIn}
\frac{\partial f}{\partial t}+Xf=\bar Xf=0,
\end{equation}
where $X$ is understood as a vector field on $\mathbb{R}\times N$. Using this, we can straightforwardly prove the following proposition.

\begin{proposition}\label{Prop:Ralg} The space $\bar{\mathcal{I}}^X$ of $t$-dependent constants of motion for a
system $X$ forms an $\mathbb{R}$-algebra $(\bar{\mathcal{I}}^X,\cdot)$.
\end{proposition}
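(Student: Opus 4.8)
The plan is to recognise $\bar{\mathcal{I}}^X$ as the kernel of the vector field $\bar X$ acting as a derivation on $C^\infty(\mathbb{R}\times N)$, and then to read off the algebra axioms directly from the Leibniz property of this derivation. By definition (\ref{EqIn}), a function $f\in C^\infty(\mathbb{R}\times N)$ belongs to $\bar{\mathcal{I}}^X$ if and only if $\bar X f=0$, where $\bar X=\partial/\partial t+X$ is regarded as a single vector field on $\mathbb{R}\times N$. Since every vector field acts on functions as an $\mathbb{R}$-linear derivation of the commutative associative algebra $(C^\infty(\mathbb{R}\times N),\cdot)$, the problem reduces to showing that the kernel of such a derivation is a subalgebra.

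First I would establish that $\bar{\mathcal{I}}^X$ is a real linear subspace. Because $\bar X$ is $\mathbb{R}$-linear, for any $f,g\in\bar{\mathcal{I}}^X$ and $\alpha,\beta\in\mathbb{R}$ we have $\bar X(\alpha f+\beta g)=\alpha\,\bar X f+\beta\,\bar X g=0$, so $\alpha f+\beta g\in\bar{\mathcal{I}}^X$. Next I would verify closure under the pointwise product, which is the only point carrying any content. As $\bar X$ satisfies the Leibniz rule, for $f,g\in\bar{\mathcal{I}}^X$ one obtains $\bar X(fg)=(\bar X f)\,g+f\,(\bar X g)=0$, whence $fg\in\bar{\mathcal{I}}^X$. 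The product of smooth functions is associative and commutative, and the constant function $1$ satisfies $\bar X 1=0$, so it lies in $\bar{\mathcal{I}}^X$ and serves as a unit. Assembling these observations shows that $(\bar{\mathcal{I}}^X,\cdot)$ is a commutative associative unital $\mathbb{R}$-algebra.

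There is no genuine obstacle: the entire statement amounts to the fact that the kernel of a derivation is stable under the underlying product, so the argument collapses to an application of $\mathbb{R}$-linearity together with the Leibniz identity. The one point deserving mild care is to treat $\bar X$ consistently as the suspension $\partial/\partial t+X$ on $\mathbb{R}\times N$, so that (\ref{EqIn}) genuinely reads $\bar{\mathcal{I}}^X=\ker\bar X$ and the derivation property is available on the whole of $C^\infty(\mathbb{R}\times N)$ rather than only along fibres; once this identification is made explicit the proof is immediate.
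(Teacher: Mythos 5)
Your proof is correct and is precisely the argument the paper has in mind: the paper states the proposition immediately after the characterisation $\bar X f=0$ of (\ref{EqIn}) and declares it straightforward, and your observation that $\bar{\mathcal{I}}^X=\ker\bar X$ is closed under linear combinations and products because $\bar X$ is an $\mathbb{R}$-linear derivation of $C^\infty(\mathbb{R}\times N)$ is exactly that straightforward argument, written out in full.
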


To generalise the second statement of Proposition \ref{IntLieHam0}   to $t$-dependent
 constants of motion, we endow $\mathbb{R}\times N$ with
a Poisson structure that makes $\bar{\mathcal{I}}^X$ into a Poisson algebra.
 
\begin{lemma} Every Poisson manifold $(N,\Lambda)$ induces a Poisson manifold $(\mathbb{R}\times N,\bar\Lambda)$ with Poisson structure
\begin{equation}\label{newBrack}\nonumber
\{f,g\}_{\bar\Lambda}(t,x)\equiv\{f_t,g_t\}_{\Lambda}(x),\qquad (t,x)\in
\mathbb{R}\times N.
\end{equation}
\end{lemma}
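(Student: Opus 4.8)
The plan is to show that $(C^\infty(\mathbb{R}\times N),\cdot,\{\cdot,\cdot\}_{\bar\Lambda})$ is a Poisson algebra, which by the definition of a Poisson manifold adopted above is exactly what is required; the bivector $\bar\Lambda$ is then recovered from $\{\cdot,\cdot\}_{\bar\Lambda}$ through (\ref{ForPoi}). The whole argument rests on one elementary observation: the prescribed bracket acts \emph{slicewise} in the parameter $t$. Writing $f_t:x\mapsto f(t,x)$ for the $t$-slice of $f\in C^\infty(\mathbb{R}\times N)$, I would first record that $(f+g)_t=f_t+g_t$, $(\lambda f)_t=\lambda f_t$ and $(fg)_t=f_tg_t$ for all $\lambda\in\mathbb{R}$, and that by construction $(\{f,g\}_{\bar\Lambda})_t=\{f_t,g_t\}_\Lambda$. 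Consequently, every algebraic identity to be verified at a point $(t,x)$ reduces to the corresponding identity for $\{\cdot,\cdot\}_\Lambda$ evaluated on the slices $f_t,g_t,h_t$ at $x\in N$, where it holds because $(C^\infty(N),\cdot,\{\cdot,\cdot\}_\Lambda)$ is a Poisson algebra.

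Next I would settle the only genuinely analytic point, namely that $\{f,g\}_{\bar\Lambda}$ is a smooth function on $\mathbb{R}\times N$ rather than merely a family parametrised by $t$. Since $\Lambda$ is a fixed bivector on $N$, it induces a bivector $\bar\Lambda$ on $\mathbb{R}\times N$ with no $\partial/\partial t$ component via the inclusion $\mathrm{T}N\hookrightarrow \mathrm{T}(\mathbb{R}\times N)$; in adapted coordinates $(t,x^1,\ldots,x^n)$ one has $\{f,g\}_{\bar\Lambda}=\sum_{i,j}\Lambda^{ij}(x)\,(\partial f/\partial x^i)(\partial g/\partial x^j)$, which is smooth in $(t,x)$ because $f,g$ are jointly smooth and the $\Lambda^{ij}$ are smooth. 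This display also exhibits $\{f,g\}_{\bar\Lambda}=\bar\Lambda(df,dg)$, identifying $\bar\Lambda$ as the Poisson bivector of the structure.

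I would then dispatch the algebraic axioms using the slicewise observation. $\mathbb{R}$-bilinearity and antisymmetry are immediate from those of $\{\cdot,\cdot\}_\Lambda$ evaluated on slices. For the Leibniz rule, $\{fg,h\}_{\bar\Lambda}(t,x)=\{f_tg_t,h_t\}_\Lambda(x)=f_t(x)\{g_t,h_t\}_\Lambda(x)+\{f_t,h_t\}_\Lambda(x)g_t(x)$, which is precisely $(f\{g,h\}_{\bar\Lambda}+\{f,h\}_{\bar\Lambda}g)(t,x)$. For the Jacobi identity the key is that, since $(\{f,g\}_{\bar\Lambda})_t=\{f_t,g_t\}_\Lambda$, one gets $\{\{f,g\}_{\bar\Lambda},h\}_{\bar\Lambda}(t,x)=\{\{f_t,g_t\}_\Lambda,h_t\}_\Lambda(x)$; summing the three cyclic terms reproduces the Jacobi identity of $\{\cdot,\cdot\}_\Lambda$ for $f_t,g_t,h_t$ at each $x$, hence vanishes.

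Since essentially no step presents difficulty, the only thing to watch is precisely the role of $t$: because $\Lambda$ never differentiates in the $t$-direction, $t$ behaves as an inert parameter and all Poisson identities descend from $N$ slice by slice. Equivalently, one could verify the integrability condition $[\bar\Lambda,\bar\Lambda]_{\rm SN}=0$ on $\mathbb{R}\times N$ directly: since the structure functions $\Lambda^{ij}$ are independent of $t$, the Schouten--Nijenhuis bracket of $\bar\Lambda$ with itself coincides with that of $\Lambda$ on $N$, which vanishes by hypothesis, and formula (\ref{ForPoi}) then yields the Poisson structure. I expect no real obstacle beyond keeping this bookkeeping of the $t$-variable straight.
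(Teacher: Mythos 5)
Your proof is correct; the paper in fact states this lemma without proof, treating the slicewise verification as immediate, and your argument supplies exactly the standard check it implicitly relies on. Your coordinate identification of $\bar\Lambda$ as $\sum_{i<j}\{x_i,x_j\}_\Lambda\,\partial/\partial x_i\wedge\partial/\partial x_j$ with no $\partial/\partial t$ component is moreover consistent with how the paper uses $\bar\Lambda$ in the proof of the subsequent lemma on $\mathcal{L}_{\bar X}\bar\Lambda=0$.
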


\begin{definition} Given a Poisson manifold $(N,\Lambda)$,
the associated Poisson manifold $(\mathbb{R}\times N,\bar \Lambda)$ is called the {\it autonomisation} of $(N,\Lambda)$. Likewise,
 the Poisson bivector $\bar {\Lambda}$ is called the {\it autonomisation} of $\Lambda$.
\end{definition}

The following lemma allows us to prove that $(\bar{\mathcal{I}}^X,\cdot,\{\cdot,\cdot\}_{\bar\Lambda})$
is a Poisson algebra.

\begin{lemma}\label{AutLieHam} Let $(N,\Lambda)$ be a Poisson manifold and
$X$ be a Hamiltonian vector field on $N$ relative to $\Lambda$. Then,
$\mathcal {L}_{\bar{X}}\bar{\Lambda}=0$.
\end{lemma}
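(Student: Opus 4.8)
The plan is to prove the equivalent statement that $\bar X$ acts as a derivation of the Poisson bracket $\{\cdot,\cdot\}_{\bar\Lambda}$, which is tantamount to $\mathcal L_{\bar X}\bar\Lambda=0$. First I would record the general identity, valid for any bivector $\pi$ and vector field $Y$ on a manifold $M$,
\[
(\mathcal L_Y\pi)(df,dg)=Y\{f,g\}_\pi-\{Yf,g\}_\pi-\{f,Yg\}_\pi,\qquad \forall f,g\in C^\infty(M),
\]
so that $\mathcal L_Y\pi=0$ if and only if $Y$ is a derivation on each factor of $\{\cdot,\cdot\}_\pi$. Applying this to $\pi=\bar\Lambda$ and $Y=\bar X=\partial/\partial t+X$ on $\mathbb R\times N$ reduces the claim to verifying
\[
\bar X\{f,g\}_{\bar\Lambda}=\{\bar Xf,g\}_{\bar\Lambda}+\{f,\bar Xg\}_{\bar\Lambda}
\]
for all $f,g\in C^\infty(\mathbb R\times N)$.

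Next I would split $\bar X=\partial/\partial t+X$ and treat the two summands separately, using the additivity of the Lie derivative in its vector-field argument together with the defining relation $\{f,g\}_{\bar\Lambda}(t,x)=\{f_t,g_t\}_\Lambda(x)$. For the $\partial/\partial t$ part, the crucial fact is that $\Lambda$, and hence $\bar\Lambda$, carries no $t$-dependence; differentiating $\{f_t,g_t\}_\Lambda$ in $t$ and using the $\mathbb R$-bilinearity of $\{\cdot,\cdot\}_\Lambda$ yields exactly $\{\partial_t f_t,g_t\}_\Lambda+\{f_t,\partial_t g_t\}_\Lambda$, which is precisely the derivation identity for the summand $\partial/\partial t$, i.e. $\mathcal L_{\partial/\partial t}\bar\Lambda=0$.

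For the $X$ part, I would use that $X$, viewed on $\mathbb R\times N$, has no $\partial/\partial t$ component and acts on functions by $(Xf)(t,x)=(X_tf_t)(x)$, so that $X\{f,g\}_{\bar\Lambda}(t,x)=X_t\bigl(\{f_t,g_t\}_\Lambda\bigr)(x)$. By hypothesis each $X_t$ is a Hamiltonian vector field relative to $\Lambda$; combining the relation $X_hF=\{F,h\}$ with the Jacobi identity for $\{\cdot,\cdot\}_\Lambda$ shows that every Hamiltonian vector field is a derivation of the Poisson bracket on $N$. Hence $X_t\{f_t,g_t\}_\Lambda=\{X_tf_t,g_t\}_\Lambda+\{f_t,X_tg_t\}_\Lambda$, which is again the required derivation identity for the summand $X$. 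Adding the two contributions reproduces the displayed equation and establishes $\mathcal L_{\bar X}\bar\Lambda=0$.

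I expect the only delicate points to be bookkeeping rather than conceptual: one must keep track of the convention that $X$ on $\mathbb R\times N$ acts fibrewise as $X_t$ with no $\partial/\partial t$ component, and one must invoke the $t$-independence of $\Lambda$ at exactly the right place so that the $\partial/\partial t$ contribution collapses to two Leibniz terms. The genuinely structural input is the passage from ``$X_t$ Hamiltonian'' to ``$X_t$ a derivation of $\{\cdot,\cdot\}_\Lambda$'', where the Jacobi identity enters; everything else is linear bookkeeping on the two summands of $\bar X$.
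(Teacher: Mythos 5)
Your argument is correct. It rests on the same two pillars as the paper's proof --- the splitting $\bar X=\partial/\partial t+X$, with the $t$-independence of $\Lambda$ disposing of the $\partial/\partial t$ summand and the Hamiltonian character of $X$ disposing of the other --- but the execution is dual to the paper's. The paper works directly on the bivector: it chooses coordinates $\{x_0\equiv t,x_1,\ldots,x_n\}$, observes that $\bar\Lambda=\sum_{i<j}\{x_i,x_j\}_\Lambda\,\partial_{x_i}\wedge\partial_{x_j}$ has no $\partial_{x_0}$ component and $x_0$-independent coefficients, so that $\mathcal L_{\bar X}\bar\Lambda$ collapses to $\mathcal L_X\Lambda=0$ by the standard fact that Hamiltonian vector fields preserve $\Lambda$. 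You instead work on functions, invoking the identity $(\mathcal L_Y\pi)(df,dg)=Y\{f,g\}_\pi-\{Yf,g\}_\pi-\{f,Yg\}_\pi$ to reduce the claim to a Leibniz property of $\bar X$ on $\{\cdot,\cdot\}_{\bar\Lambda}$, and you rederive the Hamiltonian half from the Jacobi identity rather than quoting $\mathcal L_X\Lambda=0$. Your route is coordinate-free and makes the mechanism (two Leibniz identities adding up) more transparent; the paper's is shorter because it imports $\mathcal L_X\Lambda=0$ wholesale. Two small bookkeeping points you should make explicit: that a bivector, hence $\mathcal L_Y\pi$, is determined by its values on pairs of exact differentials, and that $\partial_t$ commutes with the differential on $N$ so that $\partial_t\{f_t,g_t\}_\Lambda=\{\partial_tf_t,g_t\}_\Lambda+\{f_t,\partial_tg_t\}_\Lambda$ really does follow from the $t$-independence of $\Lambda$. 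Also note that in the lemma as stated $X$ is a single autonomous Hamiltonian vector field, so your references to ``each $X_t$'' are an unneeded (though harmless, and in fact slightly more general) gloss.
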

\begin{proof}
Given a coordinate system $\{x_1,\ldots,x_n\}$ for $N$ and $x_0 \equiv t$ in $\mathbb{R}$, we can naturally define a coordinate system $\{x_0,x_1,\ldots,x_n\}$ on $\mathbb{R}\times N$.
Since $(x_0)_t=t$ is constant as a function on $N$, then $\bar\Lambda(dx_0,df)=\{(x_0)_{t},f_t\}_\Lambda=0$ for every $f\in C^\infty(\mathbb{R}\times N)$. Additionally, 
$(x_i)_t=x_i$ for $i=1,\ldots,n$. Hence, we have that $\bar\Lambda({\rm d} x_i,{\rm d} x_j)=\{x_i,x_j\}_{\bar\Lambda}$ is a $x_0$-independent function for $i,j=0,\ldots,n$. So,	
\[
(\mathcal{L}_{\bar{X}}\bar{\Lambda})(t,x)=\left[\mathcal{L}_{\frac{\partial}{\partial x_0}+X}\left(\sum_{i<j=1}^n\{x_i,x_j\}_\Lambda\frac{\partial}{\partial x_i}\wedge\frac{\partial}{\partial x_j}\right)\right](x)=(\mathcal{L}_{X}\Lambda)(x).
\]
Since $X$ is Hamiltonian, we obtain $0=(\mathcal{L}_{X}\Lambda)(x)=(\mathcal{L}_{\bar{X}}\bar\Lambda)(t,x)=0$.
\end{proof}

Now, we can establish the main result of this section.

\begin{proposition}\label{AutLieHam2}  Let $X$ be a Lie--Hamilton system on $N$
possessing a Lie--Hamiltonian structure $(N,
\Lambda,h)$, then the space $(\overline{\mathcal{I}}^X,\cdot,\{\cdot,\cdot\}_{\bar\Lambda})$
is a Poisson algebra.
\end{proposition}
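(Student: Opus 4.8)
The plan is to recognise $\overline{\mathcal{I}}^X$ as a Poisson subalgebra of the full Poisson algebra $(C^\infty(\mathbb{R}\times N),\cdot,\{\cdot,\cdot\}_{\bar\Lambda})$ attached to the autonomisation $(\mathbb{R}\times N,\bar\Lambda)$, which is a Poisson manifold by the previous lemma. Since the Leibniz rule and the Jacobi identity are already guaranteed on this ambient algebra, it suffices to verify that $\overline{\mathcal{I}}^X$ is an $\mathbb{R}$-linear subspace closed under both the associative product $\cdot$ and the bracket $\{\cdot,\cdot\}_{\bar\Lambda}$; bilinearity, antisymmetry, the Jacobi identity and the Leibniz rule then restrict automatically to the subspace.

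Closure under the product and the linear structure is precisely Proposition \ref{Prop:Ralg}, so the only substantive point is closure under $\{\cdot,\cdot\}_{\bar\Lambda}$. Given $f,g\in\overline{\mathcal{I}}^X$, i.e.\ $\bar Xf=\bar Xg=0$, I would show $\bar X\{f,g\}_{\bar\Lambda}=0$ by writing $\{f,g\}_{\bar\Lambda}=\bar\Lambda(df,dg)$ and applying the derivation property of the Lie derivative along $\bar X$:
\[
\bar X\{f,g\}_{\bar\Lambda}=\mathcal{L}_{\bar X}\big(\bar\Lambda(df,dg)\big)=(\mathcal{L}_{\bar X}\bar\Lambda)(df,dg)+\bar\Lambda(\mathcal{L}_{\bar X}df,dg)+\bar\Lambda(df,\mathcal{L}_{\bar X}dg).
\]
The first summand vanishes by Lemma \ref{AutLieHam}; the remaining two vanish because $\mathcal{L}_{\bar X}df=d(\bar Xf)=d(0)=0$ and likewise $\mathcal{L}_{\bar X}dg=0$, using that the Lie derivative commutes with the exterior differential. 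Hence $\{f,g\}_{\bar\Lambda}\in\overline{\mathcal{I}}^X$, and with closure under both operations established, $(\overline{\mathcal{I}}^X,\cdot,\{\cdot,\cdot\}_{\bar\Lambda})$ is a Poisson subalgebra, which is the assertion.

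The main obstacle is the careful justification that Lemma \ref{AutLieHam} applies to the $t$-dependent field $\bar X$ of a Lie--Hamilton system. Although that lemma is phrased for a single Hamiltonian vector field, its proof uses only $\mathcal{L}_X\Lambda=0$, and for a Lie--Hamilton system every $X_t$ is Hamiltonian relative to $\Lambda$, so $\mathcal{L}_{X_t}\Lambda=0$ for each $t$. I would therefore check that the $t$-dependence of $X$ contributes no $\partial/\partial t$ terms to $\mathcal{L}_{\bar X}\bar\Lambda$: since $\bar\Lambda$ has $x_0$-independent structure functions and no $\partial/\partial x_0$ component while $X$ has no $\partial/\partial x_0$ component, the fibrewise identity $\mathcal{L}_{\bar X}\bar\Lambda(t,x)=\mathcal{L}_{X_t}\Lambda(x)=0$ goes through for each $t$. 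Everything else is a routine consequence of the derivation property of the Lie derivative and of the Poisson structure already available on $C^\infty(\mathbb{R}\times N)$.
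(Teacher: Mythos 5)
Your proposal is correct and follows essentially the same route as the paper: both rest on Proposition \ref{Prop:Ralg} for the associative part and on Lemma \ref{AutLieHam} to establish that the relevant flow preserves $\bar\Lambda$, whence the bracket of two first integrals of $\bar X$ is again a first integral. The only difference is bookkeeping: the paper freezes $t=t'$ and applies Lemma \ref{AutLieHam} to the autonomisation of each (time-independent) $X_{t'}$, then uses $\bar X g(t',x)=\overline{X_{t'}}g(t',x)$, whereas you apply the Lie-derivative Leibniz identity directly to the $t$-dependent suspension $\bar X$ and verify that the lemma's coordinate computation still yields $\mathcal{L}_{\bar X}\bar\Lambda=0$ — both versions are valid.
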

\begin{proof} From Proposition \ref{Prop:Ralg}, we see that
$(\bar{\mathcal{I}}^X,\cdot)$ is an $\mathbb{R}$-algebra. To demonstrate that $(\bar{\mathcal{I}}^X,\cdot,\{\cdot,\cdot\}_{\bar\Lambda})$ is a Poisson algebra, it remains to prove that the Poisson bracket of any two elements $f,g$ of $\bar{\mathcal{I}}^X$ remains in it, i.e.~$X\{f,g\}_{\bar{\Lambda}}=0$.
By taking into account that the vector fields $\{X_t\}_{t\in\mathbb{R}}$ are Hamiltonian relative to $(N,\Lambda)$ and
Lemma \ref{AutLieHam}, we find that $\bar\Lambda$ is invariant under the autonomization of each vector field $X_{t'}$ with $t'\in\mathbb{R}$, i.e. $
\mathcal{L}_{\overline{X_{t'}}}\bar\Lambda=0$. Therefore,
\begin{eqnarray}
  \bar X\{f,g\}_{\bar{\Lambda}}(t',x)&=\overline {X_{t'}}\{f,g\}_{\bar{\Lambda}}(t',x)= \{\overline{X_{t'}}f,g\}_{\bar{\Lambda}}(t',x)+
\{f,\overline{ X_{t'}}g\}_{\bar{\Lambda}}(t',x)=\\ &=\{\bar{X}f,g\}_{\bar{\Lambda}}(t',x)+
\{f,\bar{X}g\}_{\bar{\Lambda}}(t',x)=0.\nonumber
\end{eqnarray}
That is, $\{f,g\}_{\bar{\Lambda}}$ is a $t$-dependent constant of
motion for $X$.
\end{proof}

\section{Polynomial Lie integrals for Lie--Hamilton systems}

Let us formally define and investigate a remarkable class of constants
of motion for Lie--Hamilton systems appearing in the literature  \cite{Ru10,Ma95}, the hereafter called {\it Lie integrals}, and a relevant generalization of them, 
the {\it polynomial Lie integrals}. We first prove that Lie integrals can be characterised by an Euler equation on a finite-dimensional real Lie algebra of functions, 
retrieving as a particular case a result given in \cite{Ru10}. 
Then, we show that Lie integrals form a finite-dimensional real Lie algebra and we devise several methods to determine them. Our
results can easily be extended to investigate certain quantum mechanical systems \cite{LT05}. Finally, we investigate polynomial Lie integrals
and the relevence of Casimir functions to derive them.

\begin{definition} Given a Lie--Hamilton system $X$ on $N$ possessing a Lie--Hamiltonian structure
$(N,\Lambda,h)$, a {\it Lie integral} of $X$ with respect to
$(N,\Lambda,h)$ is a constant of motion $f$ of $X$ such that $\{f_t\}_{t\in\mathbb{R}}\subset \ham$. In other words, given a basis $h_1,\ldots,h_r$
of the Lie algebra $(\ham,\{\cdot,\cdot\}_\Lambda)$, we have that $\bar Xf=0$ and
$f_t=\sum_{\alpha=1}^r f_\alpha(t)h_\alpha$ for every $t\in\mathbb{R}$ and certain $t$-dependent functions $f_1,\ldots,f_r$.
\end{definition}

The Lie integrals of a Lie--Hamilton system $X$ relative to
 a Lie--Hamiltonian structure $(N,\Lambda,h)$ are the solutions of  the equation
\[
0=\bar{X}f=\frac{\partial f}{\partial t}+Xf=\frac{\partial f}{\partial t}+\{f,h\}_{\bar\Lambda}\Longrightarrow \frac{\partial f}{\partial t}=\{h,f\}_{\bar\Lambda}.
\]
Since $f$ and $h$ can be understood as curves $t\mapsto f_t$ and $t\mapsto g_t$ within $\ham$, the above equation can be rewritten as
\begin{equation}\label{EqIn2}
\frac{{\rm d}f_t}{{\rm d} t}=\{h_t,f_t\}_{\Lambda},
\end{equation}
which can be thought of as an
Euler equation
on   the Lie algebra $(\mathcal{H}_\Lambda,\{\cdot,\cdot\}_\Lambda)$ \cite{FLV10}. Equations of this type quite
frequently appear in the literature such as in the Lewis--Riesenfeld method and
works concerning Lie--Hamilton systems \cite{Ru10,Ma95,LT05}.

\subsection{Algebraic structure of Lie integrals}
Previous to the description of methods to solve equation (\ref{EqIn2}), let us prove some results about the algebraic structure of its solutions and the equation itself.
\begin{proposition}
Given a  Lie--Hamilton system $X$ with a Lie--Hamiltonian structure $(N,\Lambda,h)$, the
space $\mathfrak{L}^\Lambda_h$ of Lie integrals relative to $(N,\Lambda,h)$
gives rise to a Lie algebra $(\mathfrak{L}^\Lambda_h,\{\cdot ,\cdot
\}_{\bar{\Lambda}})$ isomorphic to $(\ham,\{\cdot,\cdot\}_{\Lambda})$.
\end{proposition}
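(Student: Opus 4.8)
The plan is to exhibit an explicit linear isomorphism between $\mathfrak{L}^\Lambda_h$ and $\ham$ given by evaluation at a fixed time, and to check that it intertwines the two brackets. First I would verify that $(\mathfrak{L}^\Lambda_h,\{\cdot,\cdot\}_{\bar\Lambda})$ is genuinely a Lie algebra, i.e.\ that it is closed under $\{\cdot,\cdot\}_{\bar\Lambda}$. If $f,g\in\mathfrak{L}^\Lambda_h$, then by Proposition \ref{AutLieHam2} their bracket $\{f,g\}_{\bar\Lambda}$ is again a constant of motion for $X$; moreover, since $\{f,g\}_{\bar\Lambda}(t,x)=\{f_t,g_t\}_\Lambda(x)$ and each slice $f_t,g_t$ lies in the finite-dimensional Lie algebra $(\ham,\{\cdot,\cdot\}_\Lambda)$, the slice $\{f_t,g_t\}_\Lambda$ lies in $\ham$ as well. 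Hence $\{f,g\}_{\bar\Lambda}\in\mathfrak{L}^\Lambda_h$, and $\mathfrak{L}^\Lambda_h$ is a Lie subalgebra of the Poisson algebra $(\bar{\mathcal{I}}^X,\cdot,\{\cdot,\cdot\}_{\bar\Lambda})$ established in Proposition \ref{AutLieHam2}.

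Next I would fix some $t_0\in\mathbb{R}$ and consider the evaluation map $\Psi:f\in\mathfrak{L}^\Lambda_h\mapsto f_{t_0}\in\ham$, which is well defined and $\mathbb{R}$-linear directly from the definition of a Lie integral. Because the bracket $\{\cdot,\cdot\}_{\bar\Lambda}$ acts slice-wise, one has $\Psi(\{f,g\}_{\bar\Lambda})=\{f,g\}_{\bar\Lambda}(t_0,\cdot)=\{f_{t_0},g_{t_0}\}_\Lambda=\{\Psi(f),\Psi(g)\}_\Lambda$, so $\Psi$ is a Lie algebra homomorphism essentially for free.

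The substance of the argument is that $\Psi$ is a bijection, and here I would use that the defining condition of a Lie integral is, via (\ref{EqIn2}), a \emph{linear} first-order ODE on the finite-dimensional space $\ham$: writing $f_t=\sum_\alpha f_\alpha(t)h_\alpha$ and $h_t=\sum_\alpha h_\alpha(t)h_\alpha$, the right-hand side $\{h_t,f_t\}_\Lambda$ is linear in the unknowns $f_\alpha(t)$ with smooth $t$-dependent coefficients determined by the $h_\alpha(t)$ and the structure constants of $\ham$. Injectivity of $\Psi$ then follows from uniqueness of solutions: if $f_{t_0}=0$, the zero curve solves the equation, so $f\equiv 0$. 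For surjectivity, given $\eta\in\ham$ I would solve (\ref{EqIn2}) with initial datum $f_{t_0}=\eta$; being a linear system with smooth coefficients, its maximal solution is defined for all $t\in\mathbb{R}$ and remains inside $\ham$, and the resulting $t$-dependent function $f$ satisfies $\bar Xf=0$ with every slice in $\ham$, hence is a Lie integral with $\Psi(f)=\eta$.

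The main obstacle is the surjectivity step, as it requires global-in-time existence of solutions of the Euler equation; this is precisely why it is essential that $\ham$ be finite-dimensional and closed under $\{\cdot,\cdot\}_\Lambda$, so that (\ref{EqIn2}) is a bona fide linear ODE on $\ham$ rather than an abstract equation in an infinite-dimensional function space. Once global existence is secured, linearity, the homomorphism property and injectivity are immediate, and I would conclude that $\Psi$ is the desired Lie algebra isomorphism $\mathfrak{L}^\Lambda_h\simeq\ham$.
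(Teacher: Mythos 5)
Your proposal is correct and follows essentially the same route as the paper: closure under $\{\cdot,\cdot\}_{\bar\Lambda}$ via Proposition \ref{AutLieHam2} together with the slice-wise action of the bracket, and then the evaluation map at a fixed time (the paper takes $t=0$) shown to be a Lie algebra isomorphism by existence and uniqueness of solutions of the linear Euler equation (\ref{EqIn2}) on the finite-dimensional space $\ham$. Your explicit remark that linearity of the ODE guarantees global-in-time existence is a point the paper leaves implicit, but it is the same argument.
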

\begin{proof}
Since the Lie integrals of $X$ are the solutions of the system of ODEs (\ref{EqIn2}) on $\ham$, they span an $\mathbb{R}$-linear space of dimension $\dim\,\ham$. In
view of Proposition \ref{AutLieHam2}, the Poisson bracket $\{f,g\}_{\bar\Lambda}$ of two constants of motion $f,g$ for $X$ is another
constant of motion. If $f$ and $g$ are Lie integrals, the function $\{f,g\}_{\bar \Lambda}$ is then a new constant of motion that can additionally be considered as a curve $t\mapsto \{f_t,g_t\}_{\Lambda}$ taking values in $\ham$, i.e.~a new Lie integral.

Consider now the linear morphism   $\evo:f\in \mathfrak{L}^\Lambda_h\rightarrow f_0\in \ham$ relating every Lie integral to its value in $\ham$ at $t=0$. As every initial condition in $\ham$ is related to a single solution of (\ref{EqIn2}), we can relate every $v\in \ham$ with a unique Lie integral $f$ of $X$ such that $f_0=v$. Therefore, $\evo$ is an isomorphism. Indeed, it is a Lie algebra isomorphism since $(\{f,g\}_{\bar\Lambda})_0=\{f_0,g_0\}_\Lambda$ for every $f,g\in\mathfrak{L}^\Lambda_h$.
\end{proof}

\begin{proposition} Given a Lie--Hamilton system $X$ on $N$ possessing a Lie--Hamiltonian structure $(N,\Lambda,h)$,
then $\mathfrak{L}^\Lambda_h$ consists of $t$-independent constants of motion if and only if $\ham$ is Abelian.
\end{proposition}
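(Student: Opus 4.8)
The plan is to prove both implications directly from the Euler equation (\ref{EqIn2}) obeyed by the Lie integrals, combined with the Lie algebra isomorphism $\evo:\mathfrak{L}^\Lambda_h\rightarrow\ham$ furnished by the preceding proposition. The guiding observation is that a Lie integral $f$ is $t$-independent precisely when ${\rm d}f_t/{\rm d}t=0$, which by (\ref{EqIn2}) is equivalent to $\{h_t,f_t\}_\Lambda=0$ for all $t$. Everything then reduces to translating the vanishing of these brackets into the commutativity of $\ham$.

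First I would dispatch the easier implication. Assuming $\ham$ is Abelian, any Lie integral satisfies $f_t\in\ham$ while each $h_t\in\ham$, so $\{h_t,f_t\}_\Lambda=0$ for every $t$; the Euler equation then forces ${\rm d}f_t/{\rm d}t=0$, i.e. $f$ is $t$-independent. Hence every element of $\mathfrak{L}^\Lambda_h$ is a $t$-independent constant of motion.

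For the converse I would exploit the surjectivity of $\evo$. Given an arbitrary $v\in\ham$, the previous proposition supplies a unique Lie integral $f$ with $f_0=v$. If every Lie integral is $t$-independent, then in particular $f_t=v$ for all $t$, and substituting into (\ref{EqIn2}) yields $\{h_t,v\}_\Lambda=0$ for every $t\in\mathbb{R}$. As $v\in\ham$ was arbitrary, each $h_t$ Poisson commutes with the whole of $\ham$, i.e. lies in its centre.

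The final step — the only one needing a little care — is to pass from ``every generator $h_t$ is central'' to ``$\ham$ is Abelian''. Here I would use that $\ham={\rm Lie}(\{h_t\}_{t\in\mathbb{R}},\{\cdot,\cdot\}_\Lambda)$ is, by definition, the smallest Lie subalgebra containing the $h_t$, whereas the centre of $\ham$ is itself a Lie subalgebra (indeed an ideal) containing all the $h_t$; minimality then gives that $\ham$ is contained in its own centre, so $\ham$ is Abelian. Equivalently, since all brackets of the generators vanish, $\ham$ collapses to the linear span of $\{h_t\}_{t\in\mathbb{R}}$, on which the Poisson bracket is identically zero. I expect this generation argument to be the main, though mild, obstacle, since it is the only place where the role of the $h_t$ as generators of $\ham$, rather than merely elements of it, is essential.
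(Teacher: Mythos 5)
Your proof is correct and follows essentially the same route as the paper: both directions are read off from the Euler equation (\ref{EqIn2}), with the isomorphism $\evo$ (equivalently, existence and uniqueness of solutions) supplying, for each $v\in\ham$, a constant Lie integral equal to $v$, whence $\{h_t,v\}_\Lambda=0$ for all $t$. Your explicit final step — passing from ``each $h_t$ is central'' to ``$\ham$ is Abelian'' via the minimality of $\ham={\rm Lie}(\{h_t\}_{t\in\mathbb{R}})$ — is a point the paper's proof glosses over, and it is handled correctly.
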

 \begin{proof}
If $(\ham,\{\cdot,\cdot\}_\Lambda)$ is Abelian, then $\{f_t,h_t\}_{\Lambda}=0$ and the system (\ref{EqIn2}) reduces to
${{\rm d}f_t}/{{\rm d} t}=0$, whose solutions are of the form $f_t=g\in\ham$, i.e.~$\mathfrak{L}^\Lambda_h=\ham$.
  Conversely, if $\mathfrak{L}^\Lambda_h=\ham$, then every $g\in \ham$ is a solution of (\ref{EqIn2}) and $\{g,h_t\}_\Lambda=0$   $\forall t\in\mathbb{R}$. 
  Hence, every $g\in \ham$ commutes with the whole $\ham$, which becomes Abelian.
\end{proof}

\subsection{Reduction techniques for the determination of Lie integrals}
Let us turn to the study of (\ref{EqIn2}) through the methods of the theory
of Lie systems. A first approach to this topic can be found in \cite{FLV10,Ru10,EV05}. To start with,
we prove that system (\ref{EqIn2}) is a Lie system related to a Vessiot--Guldberg Lie algebra $V$ isomorphic to $ ({\ham}/Z( {\ham}),\{\cdot,\cdot\}_{\Lambda})$.

Let $\{h_1,\ldots, h_r\}$ be a basis for $ {\ham}$, we can write $h_t=\sum_{\alpha=1}^rb_\alpha(t)h_\alpha$ for certain $t$-dependent functions $b_1,\ldots,b_r$. In addition, system (\ref{EqIn2}) becomes
\[
\frac{{\rm d}f_t}{{\rm d} t}=\sum_{\alpha=1}^rb_\alpha(t)Y_\alpha(f_t),
\]
where $Y_\alpha(f_t)={\rm ad}_{h_\alpha}(f_t)$, with $\alpha=1,\ldots,r$ and ${\rm ad}:f\in {\ham}\mapsto {\rm ad}_f(g)\equiv\{f,g\}_\Lambda\in  {\ham}$ is the adjoint representation of $ {\ham}$. The vector fields $Y_1,\ldots, Y_r$ are the
fundamental vector fields of the adjoint action ${\rm Ad}: \grupo\times
 {\ham}\rightarrow  {\ham}$ of a Lie group $\grupo$ with Lie algebra
isomorphic to $ {\ham}$. Consequently, they span a finite-dimensional real
Lie algebra of vector fields isomorphic to ${\rm Im}\,{\rm ad}\simeq {\ham}/\ker {\rm
ad}= {\ham}/Z( {\ham}),$
where we write $Z( {\ham})$ for the {\it center} of $ ({\ham},\{\cdot,\cdot\}_\Lambda)$. So, (\ref{EqIn2}) is a Lie system related to a Vessiot--Guldberg Lie
algebra isomorphic to  $ ({\ham}/Z( {\ham}),\{\cdot,\cdot\}_\Lambda)$.

Instead of applying the standard approaches of the theory of Lie systems to (\ref{EqIn2}) (see for instance \cite{CGM00,CGM07,Dissertationes,CarRamGra}),
we hereafter derive a more straightforward new method based upon Lie--Hamilton systems that provides similar results.

If $Z( {\ham})$ is not trivial, its elements are Lie integrals of $X$. More generally, if $ ({\ham},\{\cdot,\cdot\}_\Lambda)$ is
 solvable, ${\ham}$ admits a flag of ideals \cite{Po86}, i.e.~there exists  a family $ {\hamb}_1,\ldots, {\hamb}_{r-1}$ of ideals of $ {\ham}$ satisfying that
\[
  {\ham} \equiv {\hamb}_0\supset  {\hamb}_1\supset\ldots\supset {\hamb}_{r-1}\supset
{\hamb}_{r}=0,\qquad \dim  {\hamb}_{k-1}/ {\hamb}_{k}=1, \qquad  k=1,\ldots,r .
\]
 Hence, we can construct an adapted basis
$\{h_1,\ldots,h_r\}$ of $ {\ham}$ such that $\{h_k,\ldots,h_r\}$ form a basis for
$ {\hamb}_{k-1}$ for $k=1,\ldots,r$. If $\alpha<\beta$, we have $ {\hamb}_\alpha\supset  {\hamb}_\beta$ and, as $ {\hamb}_\beta$ is an ideal, we see that $\{h_\alpha,h_\beta\}_\Lambda\in
 {\hamb}_{\beta}$. Consequently, if we write
$f_t=\sum_{\alpha=1}^rf_\alpha(t)h_\alpha$ and $ \{h_\alpha,h_\beta\}_\Lambda=\sum_{\gamma=1}^r c_{\alpha\beta\gamma} h_\gamma$, equation (\ref{EqIn2}) reads
\[
  \qquad \sum_{\alpha=1}^r\left(\frac{{\rm d}f_\alpha}{{\rm d} t}h_\alpha+\sum_{\beta=1}^rb_\beta
f_\alpha\{h_\alpha,h_\beta\}_\Lambda\right)=\sum_{\alpha=1}^r\left(\frac{{\rm d}f_\alpha}{{\rm d} t}+\sum_{\beta,\gamma=1}^rb_\beta
f_\gamma c_{\gamma\beta\alpha}\right)h_\alpha=0,
\]
where  $c_{\gamma\beta\alpha}=0$ for $\alpha<{\rm max}(\gamma,\beta)$. Therefore, we obtain the easily integrable system
\[
\frac{{\rm d}f_\alpha}{{\rm d} t}=
\sum_{\beta=1}^rT_{\alpha\beta}(t)f_\beta
,\quad \alpha=1,\ldots,r,\quad \Longrightarrow \quad \frac{{\rm d}f}{{\rm d} t}=
T(t)f,\qquad f\in \ham,
\]
where $T(t)$ is a lower triangular $r\times r$ matrix with entries $T_{\alpha\gamma}(t)=
-\sum_{\beta=1}^rb_\beta(t) c_{\gamma\beta \alpha}$ and $f=(f_1,\ldots,f_r)^T$ is considered as an $r\times 1$ matrix. 

The previous method can   be generalised for {\em any} Lie--Hamilton algebra. Let us use a {\it Levi decomposition} \cite{Po86} for $ ({\ham},\{\cdot,\cdot\}_\Lambda)$ by writing $ {\ham}=\left( {\hamb}_{s_1}\oplus\ldots\oplus {\hamb}_{s_p}\right)\oplus_s {\hamb}_\rr$, where    $ {\hamb}_{s_1}\oplus \ldots\oplus {\hamb}_{s_p}$ is a direct sum of simple subalgebras of
$ {\ham}$, the Lie subalgebra $ {\hamb}_\rr$ is its {\it radical}, i.e.~its maximal solvable ideal, and $\oplus_s$ stands for a semidirect sum. Let us also denote $\lambda_i=\dim  {\hamb}_{s_i}$ for $i=1,\ldots,p$ and $\mm=\dim  {\hamb}_{\rr}$.
Then, every element $f\in {\ham}$ can be written in a unique way as $f_\rr+f_{s_1}+\ldots+f_{s_p}$, where $f_\rr\in {\hamb}_\rr$ and $f_{s_i}\in {\hamb}_{s_i}$ for $i=1,\ldots,p$. Proceeding as before, we can choose
a basis $h_1,\ldots,h_\mm$ adapted to a flag decomposition of the solvable Lie algebra $ {\hamb}_\rr$ and extend it to a basis of $ \ham$ by choosing a basis $h^{(s_i)}_{1},\ldots,h^{(s_i)}_{\lambda_i}$ for each $ {\hamb}_{s_i}$. In this way, we   obtain that
\begin{equation}\label{RedEq}
\frac{{\rm d}f_\rho}{
{\rm d}t}=T(t)f_\rho+\sum_{i=1}^pB_i(t)f_{s_i},\qquad \frac{{\rm d}f_{s_j}}{{\rm d}t}=C_j(t)f_{s_j},\quad j=1,\ldots,p,
\end{equation}
where, as before, $T(t)$ is a lower triangular $\mm\times \mm$ matrix, 
while $B_i(t)$ and $C_i(t)$ are $\mm \times\lambda_i $ and $\lambda_i\times\lambda_i$ matrices
with $t$-dependent coefficients for $i=1,\ldots,p$. Consequently,
 the initial system (\ref{EqIn2}) reduces to working out the equations for $f_{s_1},\ldots,f_{s_p}$ in each $ {\hamb}_{s_i}$ separately. From that, $f_\rr$ can   be  determined through an affine system whose homogeneous associated system is easily integrable.

If $ {\ham}$ is a {\em simple} Lie algebra, (\ref{RedEq}) is just an expression in coordinates
of the system (\ref{EqIn2}), which does not provide any simplification. However, when  $ {\ham}$ is a
{\em  non-simple} one,  our techniques do simplify the calculation of Lie integrals. Although similar results can be achieved 
by using the procedure developed in \cite{CarRamGra} that, nevertheless,    requires the use of certain transformations and integrations of ODEs, which are here unnecessary.

\subsection{Polynomial Lie integrals and Casimir functions}

We here investigate a generalization of Lie integrals: the hereafter called {\it polynomial Lie integrals}.
Although we prove that these constants of motion can be determined by Lie integrals, we also show that 
their determination can be simpler in some cases. In particular, we can obtain polynomial Lie integrals algebraically by means of the Casimir functions related to the Lie algebra of Lie integrals.

\begin{definition} Let $X$ be a Lie--Hamilton system admitting a compatible Lie--Hamiltonian
structure $(N,\Lambda,h)$. A {\it polynomial Lie integral} for $X$ with respect to $(N,\Lambda,h)$
is a constant of motion $f$ for $X$ of the form $f_t=\sum_{\JJ\in M}\lambda_\JJ(t)h^\JJ$,
where the $\JJ$'s are $r$-multi-indexes, i.e.~sets $(i_1,\ldots,i_r)$ of nonnegative integers, the set $M$ is a finite family of multi-indexes, the $\lambda_\JJ(t)$ are certain $t$-dependent functions, and $h^\JJ=h_1^{i_1}\cdot\ldots\cdot h_r^{i_r}$ for a fixed basis $\{h_1,\ldots,h_r\}$ for $\ham$.
\end{definition}

The study of polynomial Lie integrals can be approached through the symmetric Lie algebra $S_\mathfrak{g}$,
where $\mathfrak{g}\simeq \ham$.

\begin{lemma}
\label{62} Every Lie algebra isomorphism $\phi:(\mathfrak{g},[\cdot,\cdot]_{\mathfrak{g}})\rightarrow
(\ham,\{\cdot,\cdot\}_\Lambda)$
can be extended in a unique way to a Poisson algebra morphism $D:(S_\mathfrak{g},\cdot,\{\cdot,\cdot\}_{S_\mathfrak{g}})\rightarrow (C^\infty(N),\cdot,\{\cdot,\cdot\}_\Lambda)$.
Indeed, if $\{v_1,\ldots,v_r\}$ is a basis for $\mathfrak{g}$, then
$
D(P(v_1,\ldots,v_r))=P(\phi(v_1),\ldots,\phi(v_r)),
$
for every polynomial $P\in S_\mathfrak{g}$.
\end{lemma}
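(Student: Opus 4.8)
The plan is to build $D$ via the universal property of the symmetric algebra and then verify that the resulting algebra morphism is automatically compatible with the Poisson brackets. First I would recall that $S_\mathfrak{g}$ is the free commutative unital associative $\mathbb{R}$-algebra generated by the vector space $\mathfrak{g}$: any linear map from $\mathfrak{g}$ into a commutative associative $\mathbb{R}$-algebra extends uniquely to an algebra homomorphism. Since $(C^\infty(N),\cdot)$ is such an algebra and $\phi:\mathfrak{g}\to\ham\subset C^\infty(N)$ is in particular linear, there is a unique $\mathbb{R}$-algebra morphism $D:(S_\mathfrak{g},\cdot)\to(C^\infty(N),\cdot)$ with $D|_\mathfrak{g}=\phi$. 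Writing a general element of $S_\mathfrak{g}$ as a polynomial $P(v_1,\ldots,v_r)$ in a basis $\{v_1,\ldots,v_r\}$ of $\mathfrak{g}$, multiplicativity and linearity force $D(P(v_1,\ldots,v_r))=P(\phi(v_1),\ldots,\phi(v_r))$, which is precisely the claimed formula and already yields uniqueness as an algebra morphism.

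It then remains to show $D(\{P,Q\}_{S_\mathfrak{g}})=\{D(P),D(Q)\}_\Lambda$ for all $P,Q\in S_\mathfrak{g}$. The starting point is the identity on generators: for $v,w\in\mathfrak{g}\subset S_\mathfrak{g}$ the symmetric-algebra bracket restricts to the Lie bracket, $\{v,w\}_{S_\mathfrak{g}}=[v,w]_\mathfrak{g}$, so using that $\phi$ is a Lie algebra morphism,
\[
D(\{v,w\}_{S_\mathfrak{g}})=\phi([v,w]_\mathfrak{g})=\{\phi(v),\phi(w)\}_\Lambda=\{D(v),D(w)\}_\Lambda.
\]
This is the only place the hypothesis on $\phi$ enters. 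To propagate compatibility to all of $S_\mathfrak{g}$ I would exploit that both brackets are derivations of their (commutative) products in each entry and that $D$ intertwines the products. Fixing a generator $v$, the set $\mathcal{A}_v=\{Q\in S_\mathfrak{g}:D(\{v,Q\}_{S_\mathfrak{g}})=\{D(v),D(Q)\}_\Lambda\}$ is a linear subspace containing $1$ and, by the displayed identity, all generators; moreover the Leibniz rule shows $\mathcal{A}_v$ is closed under the product of $S_\mathfrak{g}$, since for $Q_1,Q_2\in\mathcal{A}_v$ one expands $\{v,Q_1Q_2\}_{S_\mathfrak{g}}$ by Leibniz, applies $D$, and recombines by Leibniz in $C^\infty(N)$ to obtain $Q_1Q_2\in\mathcal{A}_v$. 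Since $\mathfrak{g}\cup\{1\}$ generates $S_\mathfrak{g}$ as an algebra, $\mathcal{A}_v=S_\mathfrak{g}$.

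Finally, fixing an arbitrary $Q\in S_\mathfrak{g}$, the analogous set $\mathcal{B}_Q=\{P\in S_\mathfrak{g}:D(\{P,Q\}_{S_\mathfrak{g}})=\{D(P),D(Q)\}_\Lambda\}$ is again a subalgebra by the same Leibniz argument, now applied in the first slot, and it contains $1$ and, by the previous paragraph, every generator $v$. Hence $\mathcal{B}_Q=S_\mathfrak{g}$, which establishes Poisson compatibility for all $P,Q$ and completes the proof, uniqueness having already followed from the algebra-morphism uniqueness above. I expect the main obstacle to be precisely this bracket-compatibility step: the content is not the computation on generators but the bookkeeping that lets one pass from generators to arbitrary polynomials, and the cleanest route is the two-stage \emph{fix one slot, show the admissible set is a subalgebra} argument rather than a direct induction on polynomial degrees.
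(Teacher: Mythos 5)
Your proof is correct and follows essentially the same route as the paper's: both determine the algebra morphism $D$ from the values of $\phi$ on a basis of $\mathfrak{g}$ and then propagate bracket compatibility from generators to all of $S_\mathfrak{g}$ using the Leibniz rule together with the fact that $\phi$ is a Lie algebra morphism. The differences are purely presentational---you invoke the universal property of the symmetric algebra where the paper computes explicitly on the monomial basis, and you package the Leibniz bootstrapping as ``the admissible set is a subalgebra in each slot'' where the paper runs an induction on the total degree of the monomials---but the underlying computation is identical.
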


The proof is addressed  in the Appendix. Recall that ``$\cdot$'' denotes the standard product of elements
of $S_\mathfrak{g}$ understood as polynomial functions on $S_\mathfrak{g}$. 
It is remarkable that $D$ does not need to be injective, which causes that $S_\mathfrak{g}$ 
is not in general isomorphic to the space of polynomials on the elements of a basis of $\ham$. For instance, 
consider the Lie algebra isomorphism $\phi:(\mathfrak{sl}(2,\mathbb{R}),[\cdot,\cdot]_{\mathfrak{sl}(2,\mathbb{R})})\rightarrow (\ham,\{\cdot,\cdot\})_\Lambda$, 
with $\{v_1,v_2,v_3\}$ being a basis of $\mathfrak{sl}(2,\mathbb{R})$, of the form  $\phi(v_1)=p^2$, $\phi(v_2)=xp$ and $\phi(v_3)=x^2$ 
and $\{\cdot,\cdot\}$ being the standard Poisson structure on ${\rm T}^*\mathbb{R}$. Then, $D(v_1v_3-v_2^2)=\phi(v_1)\phi(v_3)-\phi^2(v_2)=0$.  

The following notion allows us to simplify the statement and proofs of our results.
\begin{definition} Given a curve $P_t$ in $S_\mathfrak{g}$, its {\it degree}, $\degr(P_t)$, is the highest degree of the polynomials $\{P_t\}_{t\in\mathbb{R}}$. If there exists no finite highest degree, we say that $\degr(P_t)=\infty$.
\end{definition}

\begin{proposition}\label{Th:Lift} A function $f$ is a polynomial Lie integral for a Lie--Hamilton system $X$
with respect to the Lie--Hamiltonian  structure $(N,\Lambda,h)$ if and only
if  for every $t\in\mathbb{R}$ we have $f_t=D(P_t)$, where $D$ is the Poisson algebra morphism
$D:(S_\mathfrak{g},\cdot,\{\cdot,\cdot\}_{{S}_\mathfrak{g}})\rightarrow (C^\infty(N),\cdot,\{\cdot,\cdot\}_\Lambda)$ induced by $\phi:(\mathfrak{g},[\cdot,\cdot]_{\mathfrak{g}})\rightarrow
(\ham,\{\cdot,\cdot\}_\Lambda)$, and the curve $P_t$ is a solution of finite degree for
\begin{equation}\label{PolLie}
\frac{{\rm d}P}{{\rm d} t}+\{P, w_t\}_{S_\mathfrak{g}}=0,\qquad P\in S_\mathfrak{g},
\end{equation}
where $w_t$ stands for a curve in $\mathfrak{g}$ such that $D(w_t)=h_t$ for every $t\in\mathbb{R}$.
\end{proposition}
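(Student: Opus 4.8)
The plan is to push the constant-of-motion equation (\ref{EqIn2}) through the Poisson algebra morphism $D$ of Lemma~\ref{62}, the one genuine difficulty being that $D$ need not be injective. Throughout I fix the basis $\{h_1,\dots,h_r\}$ of $\ham$ determined by $h_\alpha=\phi(v_\alpha)$, so that $h^\JJ=D(v^\JJ)$ for every multi-index $\JJ$, and I write $w_t=\phi^{-1}(h_t)\in\mathfrak{g}$, so that $D(w_t)=h_t$. Since $X$ is a Lie--Hamilton system with structure $(N,\Lambda,h)$, a function $f$ is a constant of motion exactly when it obeys (\ref{EqIn2}), i.e.~$\mathrm{d}f_t/\mathrm{d}t=\{h_t,f_t\}_\Lambda$. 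Because $D$ is $\mathbb{R}$-linear it commutes with $\mathrm{d}/\mathrm{d}t$ along curves, and because it is a Poisson morphism it satisfies $D(\{P,w_t\}_{S_\mathfrak{g}})=\{D(P),h_t\}_\Lambda$ for every $P\in S_\mathfrak{g}$; these two identities are the workhorses of both implications.

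The ``if'' direction I would dispatch directly. If $P_t$ is a finite-degree solution of (\ref{PolLie}), then boundedness of its degree lets me write $P_t=\sum_{\JJ\in M}\lambda_\JJ(t)\,v^\JJ$ over a finite index set $M$, so that $f_t:=D(P_t)=\sum_{\JJ\in M}\lambda_\JJ(t)\,h^\JJ$ already has the polynomial shape demanded by the definition. Applying $D$ to (\ref{PolLie}) and invoking the two identities above turns it into $\mathrm{d}f_t/\mathrm{d}t+\{f_t,h_t\}_\Lambda=0$, which is exactly (\ref{EqIn2}); hence $f$ is a polynomial Lie integral.

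The converse is where the non-injectivity of $D$ bites: the naive lift $Q_t:=\sum_{\JJ\in M}\lambda_\JJ(t)\,v^\JJ$ of a given polynomial Lie integral $f$ need not solve (\ref{PolLie}), since applying $D$ to the left-hand side of (\ref{PolLie}) only certifies that it lies in $\ker D$. My way around this rests on a degree-preservation remark: as $w_t\in\mathfrak{g}$ has degree one, the Leibniz rule gives $\{v_{i_1}\!\cdots v_{i_d},w_t\}_{S_\mathfrak{g}}=\sum_{l}v_{i_1}\!\cdots[v_{i_l},w_t]_{\mathfrak{g}}\!\cdots v_{i_d}$, again homogeneous of degree $d$, so that $P\mapsto\{P,w_t\}_{S_\mathfrak{g}}$ preserves the finite-dimensional space $S_\mathfrak{g}^{\le N}$ of polynomials of degree at most $N$. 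Thus (\ref{PolLie}) restricts to a \emph{linear} system of ODEs on $S_\mathfrak{g}^{\le N}$, while by the intertwining identities the map $\{h_t,\cdot\}_\Lambda$ preserves the finite-dimensional image $W:=D(S_\mathfrak{g}^{\le N})$ and (\ref{EqIn2}) restricts to a linear system on $W$.

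With this in hand the necessity follows by uniqueness. Choosing $N$ no smaller than the degree of $f$, I would let $P_t$ be the unique solution of (\ref{PolLie}) on $S_\mathfrak{g}^{\le N}$ with $P_0=Q_0$; degree preservation guarantees $\degr(P_t)\le N$, so $P_t$ has finite degree. Then $g_t:=D(P_t)$ and $f_t$ are both curves in $W$ satisfying the same linear equation $\mathrm{d}y_t/\mathrm{d}t=\{h_t,y_t\}_\Lambda$, and they coincide at $t=0$ because $g_0=D(Q_0)=f_0$. Uniqueness of solutions of a linear ODE on the finite-dimensional space $W$ forces $f_t=D(P_t)$ for all $t$, which is the asserted representation. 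The sole real obstacle is therefore the failure of $D$ to be injective, and the idea that defuses it is that bracketing with the degree-one curve $w_t$ leaves polynomial degrees unchanged, so that both equations live on finite-dimensional spaces where ODE uniqueness is available.
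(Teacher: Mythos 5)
Your argument is correct, and while the ``if'' direction is identical to the paper's (apply the Poisson morphism $D$ to (\ref{PolLie}) and use $D(w_t)=h_t$), your converse takes a genuinely different route. Both proofs rest on the same pivotal observation --- bracketing with the degree-one curve $w_t$ does not raise polynomial degree, so the relevant equations restrict to finite-dimensional spaces where linear ODE theory applies --- but they deploy it on opposite sides of $D$. The paper works in the kernel: it takes the naive lift $P_t$ of $f_t$, notes that $\xi_t:={\rm d}P_t/{\rm d}t+\{P_t,w_t\}_{S_\mathfrak{g}}$ is a curve in $\ker D$ of degree at most $\degr(P_t)$, solves the affine correction equation ${\rm d}\eta/{\rm d}t+\{\eta,w_t\}_{S_\mathfrak{g}}=\xi_t$ inside the finite-dimensional space of elements of $\ker D$ of bounded degree (checking that this space is preserved), and replaces $P_t$ by $P_t-\eta_t$, which solves (\ref{PolLie}) and still projects onto $f_t$. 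You work in the image: you solve (\ref{PolLie}) on $S_\mathfrak{g}^{\le N}$ with the naive lift as initial datum and then invoke \emph{uniqueness} for the induced linear equation on $W=D(S_\mathfrak{g}^{\le N})$ to conclude $D(P_t)=f_t$. Your version trades the paper's existence argument for a uniqueness argument; it avoids having to verify that the correction equation stays inside $\ker D$, at the small cost of checking that $\{h_t,\cdot\}_\Lambda$ preserves $W$ and that $t\mapsto f_t$ is a differentiable curve in $W$ whose derivative agrees with $\partial f/\partial t$ --- the first you do explicitly, the second is covered by the paper's standing smoothness assumptions. Both are sound ways of neutralising the non-injectivity of $D$.
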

\begin{proof}
Let $P_t$ be a particular solution of (\ref{PolLie}). Since $D$ is a Poisson algebra morphism and  $h_t=D(w_t)$ for every $t\in\mathbb{R}$, we obtain by applying $D$ to (\ref{PolLie}) that
\begin{equation}\label{PolLie3}\nonumber
\frac{\partial D(P_t)}{\partial t}+\{D(P_t), D(w_t)\}_{\bar\Lambda}=\frac{\partial D(P_t)}{\partial t}+\{D(P_t), h_t\}_{\bar\Lambda}=0.
\end{equation}
So, $D(P_t)$ is a Lie integral for $X$. Note that this does not depend on the chosen curve $w_t$ satisfying $D(w_t)=h_t$.

Conversely, given a polynomial Lie integral $f$ for $X$, there exists a curve $P_t$ of finite degree
such that $D(P_t)=f_t$ for every $t\in\mathbb{R}$. Hence, we see that
\begin{eqnarray*}\label{PolLie4}
  D\left(\frac{{\rm d}P_t}{{\rm d} t}+\{P_t, w_t\}_{S_\mathfrak{g}}\right)=\frac{\partial D(P_t)}{\partial t}+\{D(P_t), D(w_t)\}_{\bar\Lambda}=0\Longrightarrow \frac{{\rm d}P_t}{{\rm d} t}+\{P_t, w_t\}_{S_\mathfrak{g}}=\xi_t,
\end{eqnarray*}
where $\xi_t$ is a curve in $\ker D$. As $\degr(dP_t/{\rm d} t)$ and $\degr(\{P_t,w_t\}_{{S_\mathfrak{g}}})$ are at most $\degr(P_t)$, then $\degr(\xi_t)\leq \degr(P_t)$.
Next, consider the equation
\[\label{PolLie5}
\frac{{\rm d}\eta}{{\rm d} t}+\{\eta, w_t\}_{S_\mathfrak{g}}=\xi_t,\qquad \degr(\eta)\leq \degr(P)\quad\,\, {\rm and} \,\,\quad\eta\subset \ker D.
\]
Note that this equation is well defined. Indeed, since $\degr(\eta)\leq \degr(P)$  and $\degr(w_t)\leq 1$ for every $t\in\mathbb{R}$, then $\degr (\{\eta, w_t\}_{S_\mathfrak{g}})\leq\degr(P)$ for all $t\in\mathbb{R}$. In addition, as $D(\eta_t)\subset \ker D$, then $\{\eta, w_t\}_{S_\mathfrak{g}}\in \ker D$. Then, the above equation can be restricted to the finite-dimensional space of elements of $\ker D$ with degree at most $\degr(P_t)$. Given a particular
solution $\eta_t$ of this equation, which exists for the Theorem of existence and uniqueness, we have that $P_t-\eta_t$ is a solution of (\ref{PolLie}) projecting
into $f_t$.
\end{proof}

\begin{proposition}\label{Pr:De}
Every polynomial Lie integral $f$ of a Lie--Hamilton system
$X$ admitting a Lie--Hamiltonian structure $(N,\Lambda,h)$ can be brought into the form
$f=\sum_{\JJ\in M}c_\JJ l^\JJ,$
where $M$ is a finite set of multi-indexes, the $c_\JJ$'s are certain real constants, and
$l^\JJ=f_1^{i_1}\cdot\ldots\cdot f_r^{i_r}$, with $f_1,\ldots,f_r$ being a basis of the space
$\mathfrak{L}^\Lambda_h$.
\end{proposition}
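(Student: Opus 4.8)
The plan is to lift the problem to the symmetric algebra $S_\mathfrak{g}$, with $\mathfrak{g}\simeq\ham$, and to exploit that the Euler-type equation (\ref{PolLie}) governing polynomial Lie integrals is compatible with the commutative product of $S_\mathfrak{g}$. By Proposition~\ref{Th:Lift} every polynomial Lie integral is of the form $f_t=D(P_t)$ for a finite-degree solution $P_t$ of (\ref{PolLie}), so it suffices to prove that each such $P_t$ is a polynomial, with constant coefficients, in degree-one solutions of (\ref{PolLie}).

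First I would record the key algebraic fact: because $\{\cdot,w_t\}_{S_\mathfrak{g}}$ is a derivation of the product of $S_\mathfrak{g}$ (the Leibnitz rule), if $P_t$ and $Q_t$ solve (\ref{PolLie}) then so does $P_tQ_t$, since $\frac{{\rm d}}{{\rm d}t}(P_tQ_t)+\{P_tQ_t,w_t\}_{S_\mathfrak{g}}=(\dot P_t+\{P_t,w_t\}_{S_\mathfrak{g}})Q_t+P_t(\dot Q_t+\{Q_t,w_t\}_{S_\mathfrak{g}})=0$. As constants trivially solve (\ref{PolLie}), every polynomial with constant coefficients in solutions of (\ref{PolLie}) is again a solution. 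Moreover, since $w_t\in\mathfrak{g}$ has degree one, $\{P,w_t\}_{S_\mathfrak{g}}$ has degree at most that of $P$, so (\ref{PolLie}) preserves the filtration by degree and restricts, on each space of polynomials of degree at most $d$, to a linear ODE on a finite-dimensional space, for which existence and uniqueness of solutions with a prescribed initial datum hold.

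Next I would isolate the degree-one solutions. Restricted to $\mathfrak{g}\subset S_\mathfrak{g}$, equation (\ref{PolLie}) reduces to the lift to $\mathfrak{g}$ of the Euler equation (\ref{EqIn2}), a linear ODE on $\mathfrak{g}$. Fixing a basis $\{v_1,\ldots,v_r\}$ of $\mathfrak{g}$, let $u^{(\alpha)}_t$ be its solution with $u^{(\alpha)}_0=v_\alpha$; these are degree-one solutions of (\ref{PolLie}). Given an arbitrary finite-degree solution $P_t$ of degree $d=\degr(P_t)$, I expand its initial value as $P_0=\sum_{\JJ\in M}c_\JJ v^\JJ$ with $v^\JJ=v_1^{i_1}\cdots v_r^{i_r}$. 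By the previous paragraph $\sum_{\JJ\in M}c_\JJ (u^{(1)}_t)^{i_1}\cdots(u^{(r)}_t)^{i_r}$ is a solution of (\ref{PolLie}) of degree at most $d$ agreeing with $P_t$ at $t=0$, so uniqueness on the space of polynomials of degree at most $d$ forces $P_t=\sum_{\JJ\in M}c_\JJ (u^{(1)}_t)^{i_1}\cdots(u^{(r)}_t)^{i_r}$.

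Finally I would push everything down through $D$. Applying the Poisson algebra morphism $D$ and setting $f_\alpha\equiv D(u^{(\alpha)}_t)$ gives $f=\sum_{\JJ\in M}c_\JJ f_1^{i_1}\cdots f_r^{i_r}=\sum_{\JJ\in M}c_\JJ l^\JJ$. Each $f_\alpha$ is a degree-one polynomial Lie integral, hence an ordinary Lie integral in $\mathfrak{L}^\Lambda_h$; since $D|_{\mathfrak{g}}=\phi$ is a Lie algebra isomorphism onto $\ham$, the values $(f_\alpha)_0=\phi(v_\alpha)$ form a basis of $\ham$, and by the isomorphism $\evo:\mathfrak{L}^\Lambda_h\to\ham$ sending a Lie integral to its value at $t=0$, the family $\{f_1,\ldots,f_r\}$ is a basis of $\mathfrak{L}^\Lambda_h$ (passing to any other basis only relabels the constants $c_\JJ$). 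The hard part will be the uniqueness step: one must verify carefully that (\ref{PolLie}) genuinely preserves the degree filtration, so that it restricts to a linear ODE on a finite-dimensional space of polynomials and the two solutions being compared indeed lie in one common such space, legitimising the appeal to uniqueness.
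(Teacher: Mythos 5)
Your proposal is correct and follows essentially the same route as the paper: reduce to the restriction of (\ref{PolLie}) to polynomials of degree at most $\degr(P_t)$, use the Leibnitz rule to see that products of solutions are solutions, match initial conditions at $t=0$, invoke uniqueness for the resulting finite-dimensional linear ODE, and push down through $D$. The only (cosmetic) difference is direction: you build the degree-one solutions from a basis of $\mathfrak{g}$ and then check their images form a basis of $\mathfrak{L}^\Lambda_h$ via $\evo$, whereas the paper starts from a basis $f_1,\ldots,f_r$ of $\mathfrak{L}^\Lambda_h$ and pulls it back to curves $\phi^{-1}(f_j)_t$ in $\mathfrak{g}$; your explicit verification of the filtration-preservation and product-of-solutions steps is a welcome tightening of what the paper leaves implicit.
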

\begin{proof}
From Proposition \ref{Th:Lift}, we have that $f_t=D(P_t)$ for a solution $P_t$ of finite degree $p$ for (\ref{PolLie}). So, it is a solution of the restriction of this system
to $\mathbb{P}(p,\mathfrak{g})$, i.e.~the elements of $S_{\mathfrak{g}}$ of degree at most $p$. Given the isomorphism $\phi:\mathfrak{g}\rightarrow \ham$, define $\phi^{-1}(f_j)$, with $j=1,\ldots,r$, to be the curve in $\mathfrak{g}$ of the form $t\mapsto \phi^{-1}(f_j)_t$. Note that $v_1\equiv\phi^{-1}(f_1)_0,\ldots, v_r\equiv\phi^{-1}(f_r)_0$ form a basis of $\mathfrak{g}$. Hence, their polynomials up to order $p$ span a basis for $\mathbb{P}(p,\mathfrak{g})$ and we can write $P_0=\sum_{\JJ\in M}c_\JJ v^\JJ$, where $v^\JJ=v_1^{i_1}\cdot\ldots\cdot v_r^{i_r}$. As $P'_t=\sum_{I\in M}c_I[\phi^{-1}(f_1)]^{i_1}_t\cdots [\phi^{-1}(f_r)]^{i_r}_t$ and $P_t$ are solutions with the same initial condition of the restriction of (\ref{PolLie}) to $\mathbb{P}(p,\mathfrak{g})$, they must be the same in virtue of the Theorem of existence and uniqueness of systems of differential equations. Applying $D$, we obtain that $f_t=D(P_t)=D(\sum_{\JJ\in M}c_\JJ [\phi^{-1}(f_1)_t]^{i_1}\cdots [\phi^{-1}(f_r)_t]^{i_r})=\sum_{\JJ\in 
M}c_\JJ l^\JJ_t$.
\end{proof}

\begin{corollary}\label{ST}
Let $X$ be a Lie--Hamilton system that possesses a Lie--Hamiltonian structure $(N,\Lambda,h)$ inducing a Lie algebra isomorphism $\phi:\mathfrak{g}\rightarrow\ham$ and a Poisson algebra morphism $D:S_\mathfrak{g}\rightarrow C^\infty(N)$. The function $\FF=D(\casS)$, where $\casS$ is a Casimir element of $S_\mathfrak{g}$, is a
$t$-independent constant of motion of $X$. If $\casU$ is
a Casimir element of $U_\mathfrak{g}$, then $F=D(\lambda^{-1}(\casU))$ is
$t$-independent constant of motion for $X$.
\end{corollary}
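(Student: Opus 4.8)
The plan is to reduce both assertions to the characterisation of $t$-independent constants of motion given in Proposition \ref{IntLieHam0}: a $t$-independent function is a constant of motion for $X$ if and only if it Poisson commutes with every element of $\ham$. Since $\casS$ and $\casU$ are fixed elements of $S_\mathfrak{g}$ and $U_\mathfrak{g}$ respectively (not curves), their images under $D$ and $D\circ\lambda^{-1}$ are automatically $t$-independent, so it only remains to verify the Poisson-commutation condition in each case.

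For the first statement, I would first observe that, by Lemma \ref{62}, the restriction of $D$ to $\mathfrak{g}\subset S_\mathfrak{g}$ coincides with $\phi$, whence $D(\mathfrak{g})=\phi(\mathfrak{g})=\ham$. Thus every element of $\ham$ has the form $\phi(v)=D(v)$ for some $v\in\mathfrak{g}$. Because $\casS$ is a Casimir element of $S_\mathfrak{g}$, we have $\{\casS,v\}_{S_\mathfrak{g}}=0$ for all $v\in\mathfrak{g}$. Applying the Poisson algebra morphism $D$ and using that it respects the brackets gives
\[
\{D(\casS),\phi(v)\}_\Lambda=\{D(\casS),D(v)\}_\Lambda=D\left(\{\casS,v\}_{S_\mathfrak{g}}\right)=0,\qquad \forall v\in\mathfrak{g}.
\]
Hence $\FF=D(\casS)$ Poisson commutes with the whole of $\ham$, and Proposition \ref{IntLieHam0} shows that it is a $t$-independent constant of motion of $X$.

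For the second statement, I would invoke the property recorded in the preliminaries that the inverse symmetriser map $\lambda^{-1}$ sends Casimir elements of $\mathfrak{g}$ (equivalently, Casimir elements of $U_\mathfrak{g}$) to Casimir elements of $S_\mathfrak{g}$. Consequently $\lambda^{-1}(\casU)$ is a Casimir element of $S_\mathfrak{g}$, and the first statement applied with $\casS=\lambda^{-1}(\casU)$ immediately yields that $\FF=D(\lambda^{-1}(\casU))$ is a $t$-independent constant of motion for $X$.

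I do not expect any genuine obstacle here: the whole argument is essentially a one-line application of the fact that $D$ is a Poisson algebra morphism, combined with the characterisation in Proposition \ref{IntLieHam0}. The only points requiring a little care are verifying that $D(\mathfrak{g})=\ham$, so that commuting with the generators spanning $\mathfrak{g}$ is equivalent to commuting with all of $\ham$, and correctly citing the preliminaries fact that $\lambda^{-1}$ preserves Casimir elements; both are immediate from the setup established earlier in the paper.
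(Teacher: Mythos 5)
Your proof is correct, and its computational core coincides with what underlies the paper's (unwritten) argument: apply the Poisson algebra morphism $D$ to the vanishing bracket of the Casimir and conclude. The only difference is in framing: the paper presents the statement as a corollary of Proposition \ref{Th:Lift}, viewing the Casimir element $\casS$ as a constant ($t$-independent) solution of the Euler-type equation (\ref{PolLie}) on $S_\mathfrak{g}$, so that $D(\casS)$ is a $t$-independent polynomial Lie integral; you instead verify directly that $D(\casS)$ Poisson-commutes with $\ham=\phi(\mathfrak{g})=D(\mathfrak{g})$ and invoke the characterisation of $t$-independent constants of motion in Proposition \ref{IntLieHam0}. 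Your route is slightly more elementary in that it bypasses the polynomial-Lie-integral machinery entirely, and your handling of the second assertion (reducing it to the first via the fact that $\lambda^{-1}$ sends Casimir elements of $U_\mathfrak{g}$ to Casimir elements of $S_\mathfrak{g}$) is exactly what the paper intends.
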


Note that if $C$ is a constant of motion for $X$, it is also so for any other $X'$ whose $V^{X'}\subset V^X$.    
From  Proposition \ref{Pr:De} and Corollary \ref{ST}, it follows that $F=D(\casS)=\sum_{\JJ\in M}c_\JJ l^\JJ$. Therefore,
the knowledge of Casimir elements provides not only constants of motion for Lie--Hamilton systems, but also information
about the Lie integrals of the system.  

As Casimir functions are
known for many Lie
algebras, we can use them to derive constants of motion for the corresponding Lie--Hamilton systems algebraically instead of applying the usual procedure, i.e.~by solving a system of  PDEs or ODEs.

In particular, Casimir functions for (semi)simple Lie algebras of arbitrary dimension are known \cite{GO64,PP68}. The same is true for the so-called ``quasi-simple" Lie algebras, which can be obtained from simple Lie algebras through contraction techniques \cite{Vulpi}. Moreover, the Casimir invariants (Casimir elements of the Poisson algebra $(C^\infty(\mathfrak{g}^*),\{\cdot,\cdot\})$) for all real Lie algebras with dimension $d\leq 5$ were given in \cite{invariantsWint} (recall that the Casimir invariants for some of the solvable cases are not polynomial, i.e. they cannot be considered as elements of $S_\mathfrak{g}$), and the literature dealing with Casimir invariants for solvable and nilpotent Lie algebras is not scarce (see, e.g. \cite{campoamor,campoamor1,campoamor2}).

\section{Superposition rules from Poisson coalgebras}

We here prove that each Lie--Hamiltonian structure $(N,\Lambda,h)$ for a Lie--Hamilton system $X$ gives rise  in a natural way    to a Poisson coalgebra  $(S_\mathfrak{g},\cdot,\{\cdot,\cdot\}_{S_\mathfrak{g}},\Delta)$ where $\mathfrak{g}
\simeq(\ham,\{\cdot,\cdot\}_\Lambda)$.
This allows us to use the coproduct of this coalgebra
to construct new Lie--Hamiltonian structures
for all the diagonal prolongations of $X$ and to derive algebraically $t$-independent constants of  motion for 
such diagonal prolongations. In turn, these constants can further be employed
to obtain a superposition rule for the initial system. Our findings, which are only applicable to 
``primitive" Poisson coalgebras,
are rigorous demonstrations and generalisations of previous achievements established in \cite{BBHMR09,BR,BCR96}.

\subsection{Poisson coalgebras for Lie--Hamiltonian structures}

\begin{lemma}\label{PAM} Given a Lie--Hamilton system $X$ with a Lie--Hamiltonian structure $(N,\Lambda,h)$,
the space $(S_\mathfrak{g},\cdot,\{\cdot,\cdot\}_{S_\mathfrak{g}},\Delta)$, with $\mathfrak{g}
\simeq(\ham,\{\cdot,\cdot\}_\Lambda)$, is a Poisson coalgebra with a coproduct ${\Delta}:
S_\mathfrak{g}\rightarrow
S_\mathfrak{g}\otimes S_\mathfrak{g}$ satisfying
\begin{equation}\label{Con}
{\Delta}(v)=v\otimes 1+1\otimes v,\qquad \forall  v\in\mathfrak {g}\subset S_\mathfrak{g}.
\end{equation}
\end{lemma}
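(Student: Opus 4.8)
The plan is to reduce the entire statement to a verification on the generators $v\in\mathfrak{g}\subset S_\mathfrak{g}$, exploiting that $S_\mathfrak{g}$ is freely generated as a commutative algebra by $\mathfrak{g}$. First I would fix the data: since $X$ is a Lie--Hamilton system with Lie--Hamiltonian structure $(N,\Lambda,h)$, the associated space $(\ham,\{\cdot,\cdot\}_\Lambda)$ is a finite-dimensional real Lie algebra, so I may take an abstract finite-dimensional real Lie algebra $\mathfrak{g}\simeq(\ham,\{\cdot,\cdot\}_\Lambda)$ and form its symmetric algebra $S_\mathfrak{g}$, which by the construction recalled in Section~\ref{LSLS} is a commutative Poisson algebra $(S_\mathfrak{g},\cdot,\{\cdot,\cdot\}_{S_\mathfrak{g}})$. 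Because $\cdot$ is commutative, $S_\mathfrak{g}\otimes S_\mathfrak{g}$ is itself a Poisson algebra under the product and bracket defined in Section~\ref{LSLS}.

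Next I would construct $\Delta$. By the universal property of the symmetric algebra, the linear map $v\mapsto v\otimes 1+1\otimes v$ from $\mathfrak{g}$ into the commutative algebra $S_\mathfrak{g}\otimes S_\mathfrak{g}$ extends uniquely to an associative algebra homomorphism $\Delta:S_\mathfrak{g}\to S_\mathfrak{g}\otimes S_\mathfrak{g}$ satisfying (\ref{Con}); in particular $\Delta(1)=1\otimes 1$. It then remains to check that $\Delta$ preserves the Poisson brackets and that $\Delta$ is coassociative.

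For the first point I would verify the identity $\Delta(\{P,Q\}_{S_\mathfrak{g}})=\{\Delta(P),\Delta(Q)\}_{S_\mathfrak{g}\otimes S_\mathfrak{g}}$ initially for $P=v$, $Q=w$ with $v,w\in\mathfrak{g}$. Using that $1$ is a Casimir element, so that $\{v,1\}_{S_\mathfrak g}=0$, together with the definition of the bracket on the tensor product, a direct expansion of $\{v\otimes 1+1\otimes v,\,w\otimes 1+1\otimes w\}$ collapses to $[v,w]\otimes 1+1\otimes[v,w]$, which is exactly $\Delta([v,w]_\mathfrak{g})=\Delta(\{v,w\}_{S_\mathfrak{g}})$. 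To pass from generators to arbitrary $P,Q$, I would observe that, since $\Delta$ is an algebra homomorphism and both Poisson brackets are derivations in each entry with respect to the (commutative) associative products, both sides of the desired identity are biderivations in $(P,Q)$ agreeing on the generating set $\mathfrak{g}$; hence they agree on all of $S_\mathfrak{g}$ by the Leibniz rule.

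Finally, for coassociativity I would check $(\Delta\otimes{\rm Id})\circ\Delta=({\rm Id}\otimes\Delta)\circ\Delta$ on a generator $v\in\mathfrak{g}$, where both compositions send $v$ to $v\otimes 1\otimes 1+1\otimes v\otimes 1+1\otimes 1\otimes v$; since the two compositions are algebra homomorphisms $S_\mathfrak{g}\to S_\mathfrak{g}^{(3)}$ coinciding on generators, they coincide everywhere. The one point deserving care --- and the only genuine obstacle --- is the extension from generators to the whole algebra in the bracket-preservation step: one must justify that verifying compatibility on $\mathfrak{g}$ suffices, which rests precisely on $\Delta$ being an algebra morphism and on the biderivation property of $\{\cdot,\cdot\}_{S_\mathfrak{g}}$ and $\{\cdot,\cdot\}_{S_\mathfrak{g}\otimes S_\mathfrak{g}}$. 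Everything else is a routine computation using $\{v,1\}_{S_\mathfrak g}=0$ and the definitions of Section~\ref{LSLS}.
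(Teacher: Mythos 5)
Your proposal is correct and follows essentially the same route as the paper's proof: construct $\Delta$ as the unique algebra morphism determined by its values on $\mathfrak{g}$, verify the Poisson compatibility on generators, and extend using the Leibniz rule (the paper implements your ``biderivations agreeing on generators'' step as an explicit induction on the total degree $|\JJ|+|\KK|$ of monomials). The only cosmetic difference is that you invoke the universal property of $S_\mathfrak{g}$ where the paper writes out the basis expansion $\Delta(P)=\sum_\JJ\lambda_\JJ\Delta(v_1^{i_1})\cdots\Delta(v_r^{i_r})$ explicitly.
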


\begin{proof}
We know that $(S_\mathfrak{g},\cdot,\{\cdot,\cdot\}_{S_\mathfrak{g}})$ and $(S_\mathfrak{g}\otimes S_\mathfrak{g},\cdot_{S_\mathfrak{g}\otimes S_\mathfrak{g} },\{\cdot,\cdot\}_{S_\mathfrak{g}\otimes S_\mathfrak{g} })$ are Poisson algebras. The coassociativity property for the coproduct map (\ref{Con}) is straightforward. Therefore, let us prove
that there exists a Poisson algebra morphism $\Delta:(S_\mathfrak{g},\cdot,\{\cdot,\cdot\}_{S_\mathfrak{g}})
\rightarrow (S_\mathfrak{g}\otimes S_\mathfrak{g},\cdot_{S_\mathfrak{g}\otimes S_\mathfrak{g}},\{\cdot,\cdot\}_{S_\mathfrak{g}\otimes S_\mathfrak{g}})$
satisfying (\ref{Con}), which turns $(S_\mathfrak{g},\cdot,\{\cdot,\cdot\}_{S_\mathfrak{g}},\Delta)$ into a Poisson coalgebra.

The elements of $S_\mathfrak{g}$ of the form $v^\JJ\equiv v_1^{i_1}\cdot\ldots\cdot v_r^{i_r}$, where the $\JJ$'s are $r$-multi-index with $r=\dim\,\mathfrak{g}$, form a basis for $S_\mathfrak{g}$ (considered as a linear space).
 Then, every $P\in S_\mathfrak{g}$ can be written in a unique way as $
P=\sum_{\JJ\in M}\lambda_\JJ v^\JJ$, 
where the $\lambda_\JJ$ are real constants and $\JJ$ runs all the elements of a finite set $M$. In view of this, 
an $\mathbb{R}$-algebra morphism  $\Delta:S_\mathfrak{g}\rightarrow
S_\mathfrak{g}\otimes S_\mathfrak{g}$ is determined by the image of the elements of a basis for $\mathfrak{g}$, i.e.
\begin{equation}\label{dec}
\Delta(P)=\sum_\JJ\lambda_\JJ\Delta(v^\JJ)=\sum_\JJ\lambda_\JJ\Delta(v^{i_1}_1)\cdot\ldots\cdot\Delta(v^{i_r}_r).
\end{equation}
Therefore, two $\mathbb{R}$-algebra morphisms that coincide on the elements on $\mathfrak{g}$ are necessarily the same.
 Hence, if
there exists such  a morphism, it is unique. Let us prove that there exists an $\mathbb{R}$-algebra morphism
$\Delta$ satisfying (\ref{Con}).

From (\ref{dec}), we easily see that $\Delta$ is $\mathbb{R}$-linear. Let us also prove that $\Delta(PQ)=\Delta(P)\Delta(Q)$
for every $P,Q\in S_\mathfrak{g}$, which shows that $\Delta$ is an $\mathbb{R}$-algebra morphism. If we write $Q=\sum_{\KK\in M}\lambda_\KK v^\KK$, we obtain that
\[
  \Delta(PQ)=\sum_\LL\left(
\sum_{\JJ+\KK=\LL}\lambda_\JJ\lambda_\KK\right)\Delta(v^\LL)=\sum_\JJ\lambda_\JJ\Delta(v^\JJ)\sum_\KK\lambda_\KK\Delta(v^\KK )=\Delta(P)\Delta(Q).\nonumber
\]

Finally, we show that  $\Delta$ is also a Poisson morphism. By linearity, this reduces to proving that $\Delta(\{v^\JJ,v^\KK\}_{S_\mathfrak{g}})=\{\Delta (v^\JJ),\Delta( v^\KK)\}_{S_\mathfrak{g}\otimes S_\mathfrak{g}}$.  If $|\JJ|=0$ or $|\KK|=0$, this result is immediate as the Poisson bracket involving a constant is zero. For the remaining cases and starting by $|\JJ|+|\KK|=2$, we have that ${\Delta}(\{v_\alpha,v_\beta\}_{S_\mathfrak{g}})=\{{\Delta}( v_\alpha),{\Delta}(
v_\beta)\}_{S_\mathfrak{g}\otimes S_\mathfrak{g}}$, $\forall \alpha,\beta=1,\ldots,r.$ Proceeding by induction, we prove that this holds for every value of $|\JJ|+|\KK|$; by writing $v^\JJ= v^{\bar \JJ}v_\gamma^{i_\gamma}$ and using induction hypothesis, we get
\begin{eqnarray}
  \Delta\left(\{v^\JJ,v^\KK\}_{S_\mathfrak{g}}\right)&=\Delta\left (\{v^{\bar \JJ}v_\gamma^{i_\gamma},v^\KK\}_{S_\mathfrak{g}}\right)=\Delta\left(\{v^{\bar \JJ},v^\KK\}_{S_\mathfrak{g}}v_\gamma^{i_\gamma}+v^{\bar \JJ}\{v_\gamma^{i_\gamma},v^\KK\}_{S_\mathfrak{g}}\right)\cr
  &= \left\{\Delta (v^{\bar \JJ}),\Delta(v^\KK)\right\}_{S_\mathfrak{g}\otimes S_\mathfrak{g}}\!\! \Delta( v_\gamma^{i_\gamma})+\Delta (v^{\bar \JJ})\left\{\Delta (v_\gamma^{i_\gamma}),\Delta (v^\KK)\right\}_{S_\mathfrak{g}\otimes S_\mathfrak{g}} \cr
  &=\left\{\Delta (v^{\bar \JJ})\Delta (v_\gamma^{i_\gamma}),\Delta (v^\KK)\right\}_{S_\mathfrak{g}\otimes S_\mathfrak{g}}=\left\{\Delta (v^\JJ),\Delta (v^\KK)\right\}_{S_\mathfrak{g}\otimes S_\mathfrak{g}}.\nonumber
 \end{eqnarray}\end{proof}

The coproduct defined in the previous lemma gives rise to a new Poisson algebra morphism as stated in the
following immediate lemma.
\begin{lemma}
\label{L72}
The map $\Delta^{(m)}:(S_\mathfrak{g},\cdot,\{\cdot,\cdot \}_{S_\mathfrak{g}})\rightarrow (S_\mathfrak{g}^{(m)},\cdot_{S_\mathfrak{g}^{(m)}},\{\cdot,\cdot\}_{S_\mathfrak{g}^{(m)}})$ , with $m>1$, defined by recursion following (\ref{copr}) with $\Delta^{(2)}=\Delta$ given by (\ref{Con}), is a Poisson algebra morphism. 

\end{lemma}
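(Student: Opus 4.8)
\emph{The plan.} I would prove the statement by induction on $m$, leveraging that the case $m=2$ is already settled: $\Delta^{(2)}=\Delta$ is a Poisson algebra morphism by Lemma \ref{PAM}. So suppose, as inductive hypothesis, that $\Delta^{(m-1)}:(S_\mathfrak{g},\cdot,\{\cdot,\cdot\}_{S_\mathfrak{g}})\rightarrow (S_\mathfrak{g}^{(m-1)},\cdot_{S_\mathfrak{g}^{(m-1)}},\{\cdot,\cdot\}_{S_\mathfrak{g}^{(m-1)}})$ is a Poisson algebra morphism. By the recursion (\ref{copr}) we have $\Delta^{(m)}=({\rm Id}^{\otimes(m-2)}\otimes\Delta)\circ\Delta^{(m-1)}$, so $\Delta^{(m)}$ is the composite of $\Delta^{(m-1)}$ with the map ${\rm Id}^{\otimes(m-2)}\otimes\Delta$, where I identify $S_\mathfrak{g}^{(m-1)}\cong S_\mathfrak{g}^{(m-2)}\otimes S_\mathfrak{g}$ and let $\Delta$ act on the last tensor factor. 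Since the composite of two Poisson algebra morphisms is again a Poisson algebra morphism (composition preserves both the associative product and the Lie bracket), it suffices to show that ${\rm Id}^{\otimes(m-2)}\otimes\Delta$ is a Poisson algebra morphism.

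\emph{The key step.} The map ${\rm Id}^{\otimes(m-2)}\otimes\Delta$ is the tensor product of the identity on $S_\mathfrak{g}^{(m-2)}$ (trivially a Poisson morphism) with the Poisson morphism $\Delta:S_\mathfrak{g}\rightarrow S_\mathfrak{g}\otimes S_\mathfrak{g}$ of Lemma \ref{PAM}. I would therefore isolate the general fact that, for commutative Poisson algebras, the tensor product $f\otimes g:A\otimes B\rightarrow A'\otimes B'$ of two Poisson algebra morphisms $f:A\rightarrow A'$ and $g:B\rightarrow B'$ is itself a Poisson algebra morphism for the tensor-product Poisson structures of Section \ref{LSLS}. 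This is verified directly on decomposable elements: compatibility with the product $(a\otimes b)\star_{A\otimes B}(c\otimes d)=(a\star_A c)\otimes(b\star_B d)$ is immediate from $f,g$ being algebra morphisms, and compatibility with the bracket follows by applying $f\otimes g$ to $\{a\otimes b,c\otimes d\}=\{a,c\}_A\otimes(b\star_B d)+(a\star_A c)\otimes\{b,d\}_B$ and using that $f$ and $g$ preserve both the product and the bracket; $\mathbb{R}$-bilinearity then extends this from decomposable to arbitrary elements. All the $S_\mathfrak{g}^{(k)}$ are commutative, so these tensor-product structures are well defined and the fact applies at each stage of the recursion.

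\emph{Main obstacle.} There is no genuine analytic difficulty here; the content is entirely the verification already performed for $\Delta$ in Lemma \ref{PAM}, which is why the statement is flagged as immediate. The only point requiring care is the bookkeeping in the recursion: one must read (\ref{copr}) as a composition, identify $S_\mathfrak{g}^{(m-1)}$ with $S_\mathfrak{g}^{(m-2)}\otimes S_\mathfrak{g}$ via the associativity of the tensor-product Poisson structure, and track that $\Delta$ is inserted on the final factor while the identity acts on the remaining ones. Once the two elementary closure properties (composition and tensor product of Poisson morphisms) are in hand, the induction closes without further work.
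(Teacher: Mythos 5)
Your proposal is correct and follows exactly the route the paper intends: the paper declares the lemma ``immediate'' on the strength of the remark after (\ref{copr}) that ``such an induction ensures that $\Delta^{(m)}$ is also a Poisson map,'' and your induction via closure of Poisson algebra morphisms under composition and under tensor product (with the identity factor) is precisely the argument being elided. Nothing is missing; you have simply written out the details the authors left implicit.
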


The injection $\iota:\mathfrak{g}\rightarrow  \ham\subset
C^\infty(N)$ is a Lie algebra morphism that can be extended to a Poisson algebra morphism $D:S_\mathfrak{g}\rightarrow C^\infty(N)$
of the form
$D(P(v_1,\ldots,v_r))=P(\iota(v_1),\ldots,\iota(v_r))$. Recall that this map need not to be
injective.

\begin{lemma}
\label{TuTu}
The Lie algebra morphism $\mathfrak{g}\hookrightarrow C^\infty(N)$ gives rise to a family of
Poisson algebra morphisms $D^{(m)}:S^{(m)}_\mathfrak{g} \hookrightarrow C^\infty(N)^{(m)} \subset
C^\infty(\, {N^m}\,)$ satisfying, for all $v_1,\ldots,v_m\in \mathfrak{g}\subset S_\mathfrak{g}$,  that
\begin{equation}\label{induced}
  \left[D^{(m)}(v_{1}\otimes\ldots\otimes
v_{m})\right]\!(x_{(1)},\ldots,x_{(m)})\!=\![D(v_{1})](x_{(1)})\cdot\ldots\cdot
[D(v_{m})](x_{(m)}),
\end{equation}
where $x_{(i)}$ is a point of the manifold $N$ placed in the $i$-position within the product $N\times\ldots\times N\,\equiv N^m$.

\end{lemma}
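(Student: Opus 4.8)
The plan is to build $D^{(m)}$ as the $m$-fold tensor power of the Poisson algebra morphism $D:S_\mathfrak{g}\rightarrow C^\infty(N)$ from Lemma \ref{62}, followed by the canonical identification of $C^\infty(N)^{(m)}$ with a subalgebra of $C^\infty(N^m)$. Concretely, since $S_\mathfrak{g}^{(m)}=S_\mathfrak{g}\otimes\ldots\otimes S_\mathfrak{g}$ is spanned $\mathbb{R}$-linearly by decomposable tensors $P_1\otimes\ldots\otimes P_m$, I would define $D^{\otimes m}(P_1\otimes\ldots\otimes P_m)\equiv D(P_1)\otimes\ldots\otimes D(P_m)\in C^\infty(N)^{(m)}$ and extend by linearity, and then compose with the embedding $\psi_m:f_1\otimes\ldots\otimes f_m\mapsto\big[(x_{(1)},\ldots,x_{(m)})\mapsto f_1(x_{(1)})\cdots f_m(x_{(m)})\big]$. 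Setting $D^{(m)}=\psi_m\circ D^{\otimes m}$ and evaluating on generators $v_1\otimes\ldots\otimes v_m$ with $v_i\in\mathfrak{g}$, where $D(v_i)=\iota(v_i)$, the factorisation formula (\ref{induced}) holds immediately; it then propagates to all of $S_\mathfrak{g}^{(m)}$ by $\mathbb{R}$-linearity and multiplicativity. Injectivity of $\psi_m$ (functions on $N^m$ that factorise as nonzero products are linearly independent when their factors are) yields the asserted inclusion $C^\infty(N)^{(m)}\subset C^\infty(N^m)$.

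The core of the argument is to show that both $D^{\otimes m}$ and $\psi_m$ are Poisson algebra morphisms for the tensor-product Poisson structures introduced in Section \ref{LSLS}. First I would establish the two-factor statement: if $D_A:A\rightarrow A'$ and $D_B:B\rightarrow B'$ are Poisson morphisms between \emph{commutative} Poisson algebras, then $D_A\otimes D_B$ is one too. Compatibility with the product is immediate from $(D_A\otimes D_B)\big((a\otimes b)\star(c\otimes d)\big)=D_A(ac)\otimes D_B(bd)=(D_A(a)\star D_A(c))\otimes(D_B(b)\star D_B(d))$. For the bracket I would apply $D_A\otimes D_B$ to the biderivation formula $\{a\otimes b,c\otimes d\}=\{a,c\}_A\otimes b\star_B d+a\star_A c\otimes\{b,d\}_B$ and use that each factor is a Poisson morphism respecting both operations, recovering exactly $\{(D_A\otimes D_B)(a\otimes b),(D_A\otimes D_B)(c\otimes d)\}$. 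An induction on the number of factors, grouping $S_\mathfrak{g}^{(m)}=S_\mathfrak{g}^{(m-1)}\otimes S_\mathfrak{g}$ and invoking the two-factor result with the already-commutative products, then gives that $D^{\otimes m}$ is a Poisson morphism.

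Next I would verify that $\psi_m$ intertwines the tensor-product Poisson bracket on $C^\infty(N)^{(m)}$ with the product Poisson structure $\Lambda^{(m)}$ on $N^m$ induced by $\Lambda$. Compatibility with pointwise multiplication is clear, and for the bracket the point is that the $i$-th copy of $\Lambda$ differentiates only the $x_{(i)}$-variables, so $\{\psi_m(f_1\otimes\ldots\otimes f_m),\psi_m(g_1\otimes\ldots\otimes g_m)\}_{\Lambda^{(m)}}$ reduces to the sum over $i$ in which only the $i$-th factors are bracketed and the others multiplied, which is precisely $\psi_m$ applied to the tensor bracket. Composing, $D^{(m)}=\psi_m\circ D^{\otimes m}$ is a Poisson morphism satisfying (\ref{induced}).

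The main obstacle is the bracket-compatibility in the two-factor tensor step, since the tensor bracket is \emph{not} the tensor of the two brackets but the biderivation $\{a,c\}\otimes bd+ac\otimes\{b,d\}$; pushing $D_A\otimes D_B$ through the mixed products $b\star_B d$ and $a\star_A c$ forces one to use commutativity of $\star_{A'}$ and $\star_{B'}$, which is exactly the hypothesis under which the tensor-product Poisson algebra was defined in Section \ref{LSLS} and which holds here because $S_\mathfrak{g}$ and $C^\infty(N)$ are commutative. The remaining verifications—well-definedness of $D^{\otimes m}$ on the spanning set, injectivity of $\psi_m$, and the factorisation identity—are routine bookkeeping.
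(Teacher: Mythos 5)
Your construction is correct and is exactly the one the paper intends: the paper states Lemma \ref{TuTu} without proof, taking it as immediate that $D^{(m)}$ is the $m$-fold tensor power of the morphism $D$ from Lemma \ref{62} composed with the canonical identification of $C^\infty(N)^{(m)}$ with a subalgebra of $C^\infty(N^m)$ carrying the product Poisson structure $\Lambda^m=\sum_{i}\Lambda(x_{(i)})$. Your verifications — that tensor products of Poisson morphisms between commutative Poisson algebras are Poisson morphisms for the biderivation bracket of Section \ref{LSLS}, and that the evaluation map $\psi_m$ intertwines that bracket with $\Lambda^m$ — correctly fill in the details the paper leaves implicit.
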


\subsection{Constants of   motion from Poisson coalgebras}

From the above results, we can easily demonstrate the following statement which
shows that the diagonal prolongations of a Lie--Hamilton system $X$ are also
Lie--Hamilton ones admitting a structure induced by that of $X$.

\begin{proposition}\label{MT1} If $X$ is a Lie--Hamilton system on $N$ with a Lie--Hamiltonian structure $(N,\Lambda,h)$,
then the diagonal
prolongation $\widetilde X$ to each $N^{m+1}$ is also a Lie--Hamilton system endowed with a Lie--Hamiltonian structure $(N^{m+1},
\Lambda^{m+1},\widetilde h)$ given by
$\Lambda^{m+1}(x_{(0)},\ldots,x_{(m)})=\sum_{a=0}^m\Lambda(x_{(a)})$, where we make use of the vector bundle
isomorphism ${\rm T}N^{m+1}\simeq {\rm T}N\oplus\cdots\oplus{\rm T}N$,
and $\widetilde h_t=D^{(m+1)}({\Delta}^{(m+1)}(h_t))$, where $D^{(m+1)}$ is the Poisson algebra morphism (\ref{induced}) induced
by the Lie algebra morphism $\mathfrak{g}\hookrightarrow \ham\subset C^\infty(N)$.
\end{proposition}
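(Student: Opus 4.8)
The plan is to verify the two required properties of a Lie--Hamiltonian structure directly, namely that $\Lambda^{m+1}$ is a genuine Poisson bivector and that $\widetilde h_t = D^{(m+1)}(\Delta^{(m+1)}(h_t))$ furnishes the diagonal prolongation $\widetilde X$ with Hamiltonian vector fields spanning a finite-dimensional real Lie algebra of functions. First I would establish that $(N^{m+1},\Lambda^{m+1})$ is a Poisson manifold. Since $\Lambda^{m+1}(x_{(0)},\ldots,x_{(m)}) = \sum_{a=0}^m \Lambda(x_{(a)})$ acts on each factor $N$ independently through the splitting ${\rm T}N^{m+1}\simeq {\rm T}N\oplus\cdots\oplus{\rm T}N$, its Schouten--Nijenhuis bracket with itself decomposes as a sum of the brackets $[\Lambda,\Lambda]_{\rm SN}$ on each factor, each of which vanishes because $\Lambda$ is Poisson; cross terms vanish because the factors involve disjoint sets of coordinates. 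Thus $[\Lambda^{m+1},\Lambda^{m+1}]_{\rm SN}=0$ and $\Lambda^{m+1}$ is Poisson. This is essentially the statement that $C^\infty(N)^{(m+1)}$ with the product Poisson structure is exactly $(A^{(m+1)},\{\cdot,\cdot\})$ as constructed by induction in the Preliminaries.

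The core of the argument is to identify $\widetilde h_t$ as a Hamiltonian generating $\widetilde X_t$ and to check that the functions $\{\widetilde h_t\}_{t\in\mathbb{R}}$ close on a finite-dimensional Lie algebra. By Lemma~\ref{L72}, $\Delta^{(m+1)}:S_\mathfrak{g}\rightarrow S_\mathfrak{g}^{(m+1)}$ is a Poisson algebra morphism, and by Lemma~\ref{TuTu}, $D^{(m+1)}:S_\mathfrak{g}^{(m+1)}\rightarrow C^\infty(N^{m+1})$ is a Poisson algebra morphism onto functions relative to $\Lambda^{m+1}$. The composition $D^{(m+1)}\circ\Delta^{(m+1)}$ is therefore a Poisson algebra morphism from $S_\mathfrak{g}$ into $(C^\infty(N^{m+1}),\cdot,\{\cdot,\cdot\}_{\Lambda^{m+1}})$. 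I would first compute $\widetilde h_t$ explicitly on a basis: writing $h_t = \sum_\alpha b_\alpha(t)\,\iota(v_\alpha)$ via the isomorphism $\phi:\mathfrak{g}\simeq\ham$ and using the primitive coproduct formula $\Delta^{(m+1)}(v) = \sum_{a=0}^m 1\otimes\cdots\otimes v\otimes\cdots\otimes 1$ together with (\ref{induced}), one finds that $\widetilde h_t(x_{(0)},\ldots,x_{(m)}) = \sum_{a=0}^m h_t(x_{(a)})$, the sum of copies of $h_t$ on each factor. Its Hamiltonian vector field relative to $\Lambda^{m+1}$ is then precisely $\sum_{a=0}^m X_t$ acting on the respective factor, which is by definition the diagonal prolongation $\widetilde X_t$; the permutation invariance and the projection property of Definition of diagonal prolongation are manifest from the symmetric form of $\widetilde h_t$.

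It then remains to show $\widetilde\ham \equiv {\rm Lie}(\{\widetilde h_t\}_{t\in\mathbb{R}},\{\cdot,\cdot\}_{\Lambda^{m+1}})$ is finite-dimensional. Because $\Delta^{(m+1)}$ is a Poisson morphism and $\mathfrak{g}\simeq\ham$ is finite-dimensional with the $v_\alpha$ spanning it, the images $\Delta^{(m+1)}(v_\alpha)$ span a finite-dimensional Lie subalgebra of $S_\mathfrak{g}^{(m+1)}$ isomorphic to a quotient of $\mathfrak{g}$ (the coproduct of a Lie bracket is the bracket of coproducts). Applying the Poisson morphism $D^{(m+1)}$ sends this onto a finite-dimensional real Lie algebra of functions containing all the $\widetilde h_t$, so $\widetilde\ham$ is finite-dimensional. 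The main obstacle I anticipate is the bookkeeping needed to confirm that the Hamiltonian vector field of $\widetilde h_t$ under $\Lambda^{m+1}$ really is the diagonal prolongation $\widetilde X_t$ rather than some other lift: this requires carefully using the factorwise form of $\Lambda^{m+1}$ so that $-\widehat{\Lambda^{m+1}}(d\widetilde h_t)$ splits into one Hamiltonian vector field $X_t=-\widehat\Lambda(dh_t)$ on each factor, matching both conditions in the definition of diagonal prolongation. Everything else is a routine application of the morphism properties assembled in Lemmas~\ref{PAM}, \ref{L72}, and~\ref{TuTu}.
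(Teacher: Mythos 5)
Your proof is correct and follows exactly the route the paper intends: the paper states Proposition \ref{MT1} without proof as an easy consequence of Lemmas \ref{PAM}, \ref{L72} and \ref{TuTu}, and your argument simply makes explicit the computation $\widetilde h_t(x_{(0)},\ldots,x_{(m)})=\sum_{a=0}^m h_t(x_{(a)})$, the identification of its Hamiltonian vector field with the diagonal prolongation via the factorwise form of $\Lambda^{m+1}$, and the finite-dimensionality of ${\rm Lie}(\{\widetilde h_t\}_{t\in\mathbb{R}})$ as the image of $\mathfrak{g}$ under the Poisson morphism $D^{(m+1)}\circ\Delta^{(m+1)}$. Nothing to correct.
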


The above results enable us to prove the following theorem that provides  a method
to obtain $t$-independent constants of motion for the diagonal
prolongations of a Lie--Hamilton system. From this theorem, one
may obtain superposition rules for Lie--Hamilton
systems in an algebraic way.  Additionally,  this theorem is a generalization,  only valid in the case of primitive coproduct maps, of the integrability theorem for coalgebra symmetric systems given in \cite{BR}.

\begin{theorem}\label{MT}
 If $X$ is a Lie--Hamilton system with a
Lie--Hamiltonian structure $(N,\Lambda,h)$ and $\casS$ is a Casimir element of $(S_\mathfrak{g}
,\cdot,\{,\}_{S_\mathfrak{g}})$, where $\mathfrak{g}\simeq \ham$, then:\\
(i)
The functions defined as
\begin{equation}\label{invA}
\FF^{(k)} = D^{(k)}(\Delta^{(k)}({\casS})) ,   \qquad k=2,\ldots,m,
\end{equation}
are  $t$-independent constants of motion  for the diagonal prolongation
$\widetilde X$ to $N^m$. Furthermore, if all the $\FF^{(k)}$ are non-constant functions,  they form a set of  $(m-1)$      functionally independent functions  in involution.
\\
(ii) The functions given by
\begin{equation}\label{invB}
\FF_{ij}^{(k)}=S_{ij} ( \FF^{(k)}   ) , \qquad 1\le  i<j\le  k,\qquad k=2,\ldots,m,
\end{equation}
where $S_{ij}$ is the permutation of variables $x_{(i)}\leftrightarrow
x_{(j)}$, are $t$-independent constants of motion  for the diagonal prolongation
$\widetilde X$ to $N^m$.
\end{theorem}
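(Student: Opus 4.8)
The plan is to exploit systematically the fact, already established in the preceding lemmas, that both $\Delta^{(k)}$ and $D^{(k)}$ are Poisson algebra morphisms, so that their composition transports Casimir elements and involution relations without any direct computation on $N^m$.

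For part (i), first I would observe that by Lemma~\ref{L72} the map $\Delta^{(k)}:S_\mathfrak{g}\rightarrow S_\mathfrak{g}^{(k)}$ is a Poisson algebra morphism, and by Lemma~\ref{TuTu} so is $D^{(k)}:S_\mathfrak{g}^{(k)}\rightarrow C^\infty(N^k)$; hence $D^{(k)}\circ\Delta^{(k)}$ is a Poisson algebra morphism for each $k$. The key step is then to show that $\FF^{(k)}=D^{(k)}(\Delta^{(k)}(\casS))$ Poisson commutes with the Lie--Hamilton algebra of the diagonal prolongation. Because $\casS$ is a Casimir of $S_\mathfrak{g}$, it satisfies $\{\casS,v\}_{S_\mathfrak{g}}=0$ for all $v\in\mathfrak{g}$; applying the morphism $D^{(k)}\circ\Delta^{(k)}$ and using that $\widetilde h_t=D^{(k)}(\Delta^{(k)}(h_t))$ from Proposition~\ref{MT1}, I would deduce $\{\FF^{(k)},\widetilde h_t\}_{\Lambda^k}=0$ for every $t$. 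By Proposition~\ref{IntLieHam0}, this is exactly the condition for $\FF^{(k)}$ to be a $t$-independent constant of motion of $\widetilde X$. For the involution and functional independence claim, the crucial observation is the coassociativity/nesting structure: the functions $\FF^{(k)}$ for successive $k$ depend on disjointly enlarging sets of variables, and since $\Delta^{(k)}$ arises by iterating the coproduct, the image $\Delta^{(k)}(\casS)$ can be obtained from $\Delta^{(k+1)}(\casS)$ by a further embedding. I would use this recursive compatibility to show $\{\FF^{(i)},\FF^{(j)}\}_{\Lambda^m}=0$, reducing it again to the Casimir property of $\casS$ transported through the morphisms; functional independence follows because each $\FF^{(k)}$ is the first of the family to involve the variable $x_{(k)}$, so no nontrivial functional relation among them can hold when all are non-constant.

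For part (ii), I would argue that the permutation $S_{ij}$ interchanging $x_{(i)}\leftrightarrow x_{(j)}$ is a Poisson automorphism of $(N^m,\Lambda^m)$, since $\Lambda^m=\sum_a\Lambda(x_{(a)})$ is manifestly symmetric under relabelling of the factors. Moreover, the diagonal prolongation $\widetilde X$ is by its very definition invariant under permutations of the variables, so $S_{ij}$ maps constants of motion of $\widetilde X$ to constants of motion of $\widetilde X$. Applying $S_{ij}$ to the already established $\FF^{(k)}$ therefore yields the new constants of motion $\FF^{(k)}_{ij}$, completing the proof.

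The main obstacle I anticipate is not the constant-of-motion property, which follows cleanly from the morphism structure, but rather the sharp claim in part (i) that the $\FF^{(k)}$ are \emph{functionally independent} and in involution. Establishing involution requires care with how $\Delta^{(k)}$ embeds into $\Delta^{(k+1)}$ via the coassociativity relation $(\Delta\otimes\mathrm{Id})\circ\Delta=(\mathrm{Id}\otimes\Delta)\circ\Delta$, and one must verify that the bracket $\{\FF^{(i)},\FF^{(j)}\}$ genuinely reduces to the image of $\{\casS,\text{(something)}\}_{S_\mathfrak{g}}$ rather than to a more complicated expression; pinning down the precise variable-dependence structure so that functional independence is manifest is the delicate point.
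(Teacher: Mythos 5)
Your overall strategy coincides with the paper's: transport the Casimir property of $\casS$ through the Poisson algebra morphisms $\Delta^{(k)}$ and $D^{(k)}$, and obtain part (ii) from the fact that $S_{ij}$ is a Poisson automorphism of $(N^m,\Lambda^m)$ preserving the (symmetric) prolonged Hamiltonian. Part (ii) of your argument is essentially identical to the paper's and is fine. However, in part (i) there is a genuine gap, and you have in fact mislocated where the difficulty sits.

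You claim the constant-of-motion property ``follows cleanly'' from $\{\casS,v\}_{S_\mathfrak{g}}=0$ by applying $D^{(k)}\circ\Delta^{(k)}$ and using $\widetilde h_t=D^{(k)}(\Delta^{(k)}(h_t))$. But the diagonal prolongation $\widetilde X$ lives on $N^m$, and by Proposition \ref{MT1} its Hamiltonian is $D^{(m)}(\Delta^{(m)}(h_t))$, not $D^{(k)}(\Delta^{(k)}(h_t))$. Applying the single morphism $D^{(k)}\circ\Delta^{(k)}$ to the Casimir identity only yields $\{\FF^{(k)},D^{(k)}(\Delta^{(k)}(h_t))\}_{\Lambda^k}=0$, i.e.\ that $\FF^{(k)}$ is conserved by the prolongation to $N^k$. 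To conclude conservation on $N^m$ for $k<m$ you need precisely the cross-level identity that you defer to the involution discussion and never prove, namely
\begin{equation*}
\{\Delta^{(j)}(\casS),\Delta^{(k)}(v)\}_{S^{(k)}_\mathfrak{g}}
=\{\Delta^{(j)}(\casS),\Delta^{(j)}(v)\}_{S^{(j)}_\mathfrak{g}}
=\Delta^{(j)}\bigl(\{\casS,v\}_{S_\mathfrak{g}}\bigr)=0,
\qquad j\le k,\ v\in\mathfrak{g},
\end{equation*}
where $S^{(j)}_\mathfrak{g}$ is embedded in $S^{(k)}_\mathfrak{g}$ as $P\mapsto P\otimes 1\otimes\cdots\otimes 1$. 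This identity is the whole content of the proof: it holds because $\Delta^{(k)}(v)-\Delta^{(j)}(v)$ is supported in the last $k-j$ tensor factors, and the tensor-product Poisson bracket of elements supported in disjoint factors vanishes; pushing it through $D^{(k)}$ then gives simultaneously (a) $\{\FF^{(j)},D^{(m)}(\Delta^{(m)}(v_t))\}_{\Lambda^m}=0$, i.e.\ the constant-of-motion property for every level $j\le m$, and (b) $\{\FF^{(j)},\FF^{(k)}\}_{\Lambda^m}=0$, since $\FF^{(k)}$ is a polynomial in the $D^{(k)}(\Delta^{(k)}(v_i))$ and the bracket is a derivation. So the ``delicate point'' you flag is not an optional refinement needed only for involution --- it is the key lemma on which both claims rest, and your write-up leaves it unestablished. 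Your functional-independence remark (each $\FF^{(k)}$ is the first to involve $x_{(k)}$) matches the paper's argument and is acceptable at the same level of rigour.
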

  \begin{proof}
Every $P\in S^{(j)}_\mathfrak{g}$ can naturally be considered
as an element
$P\otimes \stackrel{(k-j)-{\rm times}}{\overbrace{ 1\otimes\ldots\otimes
1}}\,\, \in
S^{(k)}_\mathfrak{g}.
$
Since $j\le k$, we have that $\{\Delta^{(j)}(\bar v),\Delta^{(k)}( v)\}_{S^{(k)}_\mathfrak{g}}=
\{\Delta^{(j)}(\bar v),\Delta^{(j)}( v)\}_{S^{(j)}_\mathfrak{g}}$, $\forall \bar v, v\in\mathfrak{g}$. So,
\[
\{\Delta^{(j)}({\casS}),\Delta^{(k)}(v)\}_{S^{(k)}_\mathfrak{g}}=
\{\Delta^{(j)}({\casS}),\Delta^{(j)}(v)\}_{S^{(j)}_\mathfrak{g}}=\Delta^{(j)}\left(\{{\casS},v\}_{S^{(j)}_{ {\mathfrak{g}}  }}\right)=0.
\]
Hence, by using that every function $f\in C^\infty(N^j)$ can be understood as a function $\pi^*f\in C^\infty(N^k)$, being
$\pi:N^j\times N^{k-j}\rightarrow N^j$ the projection onto the first factor, and by applying the Poisson algebra morphisms   introduced in Lemma~\ref{TuTu} we get
\[
\left \{D^{(j)}(\Delta^{(j)}({\casS})),D^{(k)}(\Delta^{(k)}(v))\right\}_{\Lambda^k}
=\left \{\FF^{(j)} ,D^{(k)}(\Delta^{(k)}(v))\right\}_{\Lambda^k}=0 , \,\, \forall v\in\mathfrak{g},
\]
 which leads to $ \left\{\FF^{(j)},\FF^{(k)} \right\}_{\Lambda^k}=0$, that is,
the functions  (\ref{invA}) are in involution. By construction (see Lemma~\ref{L72}), if these are non-constant, then they are functionally independent functions since $\FF^{(j)}$ lives in  $N^{j}$, meanwhile  $\FF^{(k)}$ is defined on  $N^{k}$.

Let us prove now that all the functions (\ref{invA}) and (\ref{invB}) are   $t$-independent constants of motion for $\widetilde X$.
Using that $\ham\simeq \mathfrak{g}$ and
$X_t=-\widehat\Lambda\circ d h_t$, we see that $X$ can be brought into the form $X_t=-\widehat\Lambda\circ d\circ
D(v_t)$ for a unique curve $t\rightarrow v_t$ in $\mathfrak{g}$. From this and Proposition \ref{MT1}, it follows
\[
  \widetilde X_t=-\Lambda^m\circ d
D^{(m)}(\Delta^{(m)}(v_t) )\Longrightarrow
\widetilde X_t
( \FF^{(k)})=\left\{D^{(k)}(\Delta^{(k)}({\casS})),D^{(m)}(\Delta^{(m)}(v_t))\right\}_{\Lambda^m}=0.
\]
Then,  $\FF^{(k)}$ is a common
first-integral for every $\widetilde X_t$. Finally, consider the permutation operators $S_{ij}$, with $1\le i <j \le k$ for $k=2,\dots,m$. Note that
\begin{eqnarray*}
0=S_{ij} \left\{\FF^{(k)},D^{(m)}(\Delta^{(m)}(v_t))\right\}_{\Lambda^m} =  \left\{S_{ij}  ( \FF^{(k)} ),S_{ij} \left(D^{(m)}(\Delta^{(m)}(v_t)) \right)\right\}_{\Lambda^m} \\=\left\{ F_{ij}^{(k) }, D^{(m)}(\Delta^{(m)}(v_t))  \right\}_{\Lambda^m} =\widetilde X_t   (F_{ij}^{(k) }  ) .
\end{eqnarray*}
Consequently, the functions $F_{ij}^{(k) }$ are $t$-independent constants of motion for $\widetilde X$.
  \end{proof}

Note that the ``omitted" case with $k=1$ in the set of constants (\ref{invA}) is, precisely,  the one provided by Corollary~\ref{ST} as $\FF^{(1)}\equiv \FF=D({\casS})$. Depending on the system $X$, or more specifically, on the associated $\ham$, the function $\FF$ can be 
either a useless trivial constant or a relevant function. It is also worth noting that  constants (\ref{invB}) need not be functionally independent, but we can always  choose those fulfilling such a property. Finally, observe that if $X'$ is such that $V^{X'}\subset V^X$, then the functions (\ref{invA}) and (\ref{invB}) are also constants of motion for the diagonal prolongation of $X'$ to $N^m$.

\section{Applications}
Let us illustrate our methods through the
analysis of systems of physical and mathematical relevance. We firstly use the techniques described in Section 5.3 to easily retrive the so-called Lewis--Riesenfeld invariant
for Ermakov systems \cite{LA08} and the $t$-independent constant of motion for a  system of  Riccati equations employed to construct
the superposition rule for Riccati equations
\cite{CGM07}. Secondly, we use the coalgebra approach developed in Section 6 to deduce superposition rules for
Kummer--Schwarz equations \cite{CGL12,Be97},  Smorodinsky--Winternitz systems~\cite{WSUF65,WSUF67} with time-dependent frequency and also for a classical system with trigonometric nonlinearities,
whose integrability has recently drawn some attention \cite{ADR11,AW05}. As we avoid the integration of PDEs or ODEs used in standard methods, our techniques are simpler than previous ones.

\subsection{Classical Ermakov systems}

Let us consider the classical Ermakov system \cite{Dissertationes}:
\[
\left\{\begin{array}{rl}
\frac{{\rm d}^2x}{{\rm d} t^2}&=-\omega^2(t)x+\frac{b}{x^3},\\
\frac{{\rm d}^2y}{{\rm d} t^2}&=-\omega^2(t)y,
\end{array}\right.
\]
with a non-constant $t$-dependent frequency $\omega(t)$, being $b$  a real constant. This system appears in a number of
applications related to problems in quantum and classical mechanics \cite{LA08}.
By writting this system as a first-order one
\begin{equation}\label{ClasErm}
\left\{\begin{array}{rlrl}
\frac{{\rm d} x}{{\rm d} t}&=v_x,\qquad &\frac{{\rm d}v_x}{{\rm d} t}&=-\omega^2(t)x+\frac{b}{x^3},\\
\frac{{\rm d}y}{{\rm d} t}&=v_y,\qquad &\frac{{\rm d}v_y}{{\rm d} t}&=-\omega^2(t)y,
\end{array}
\right.
\end{equation}
we can apply the theory of Lie systems. Indeed, this is a Lie system related to a Vessiot--Guldberg Lie
algebra $V$ isomorphic to $\mathfrak{sl}(2,\mathbb{R})$  \cite{SIGMA}. In fact, system
(\ref{ClasErm}) describes the integral curves of the $t$-dependent vector field $X=X_3+\omega^2(t)X_1$,
where the vector fields
\begin{eqnarray*}
  X_1\!=\!-x\frac{\partial}{\partial v_x}\!-\!y\frac{\partial}{\partial v_y}, \
X_2\!=\!\frac12 \left(\!{v_x}\frac{\partial}{\partial v_x}\!+\!{v_y}\frac{\partial}{\partial v_y}\!-\!x\frac{\partial}{\partial x}\!-\!y\frac{\partial}{\partial y}\!\right),  
X_3=v_x\frac{\partial}{\partial x}\!+\!
v_y\frac{\partial}{\partial y}\!+\!\frac{b}{x^3}\frac{\partial}{\partial v_x},
\end{eqnarray*}
satisfy the commutation relations
\begin{equation}
\label{sl2R}
[X_1,X_2]=X_1,\qquad [X_1,X_3]= 2X_2,\qquad [X_2,X_3]=X_3.
\end{equation}

As a first new result we show that this is a Lie--Hamilton system.  The vector fields are Hamiltonian
with respect to the  Poisson bivector $\Lambda=\partial/\partial x\wedge \partial/\partial v_x+
\partial/\partial y\wedge \partial/\partial v_y$ provided that  $X_\alpha=- \widehat\Lambda(d h_\alpha)$ for $\alpha=1,2,3$. Thus,
 we find the  following Hamiltonian functions which form a basis for $(\ham,\{\cdot,\cdot\}_\Lambda)\simeq (\mathfrak{sl}(2,\mathbb{R}),[\cdot,\cdot])$:
\[
h_1=\frac 12(x^2+y^2),\qquad h_2=-\frac 12 (xv_x+yv_y),\qquad h_3=\frac 12\left(v_x^2+v_y^2+\frac{b}{x^2}\right),
\]
 as they fulfil
\begin{equation}
\label{sl2Rh}
\{h_1,h_2\}=-h_1,\qquad \{h_1,h_3\}=-2h_2,\qquad \{h_2,h_3\}=-h_3.
\end{equation}

Since $X=X_3+\omega^2(t)X_1$ and $\omega(t)$ is not a constant, every $t$-independent constant of motion $f$ for $X$ is a common first-integral for $X_1,X_2,X_3$. 
Instead of searching an $f$ by solving the system of
PDEs given by $X_1f=X_2f=X_3f=0$,
we use Corollary~\ref{ST}. This easily provides such a first integral  through the Casimir element  of the symmetric algebra of
${\mathfrak{sl}(2,\mathbb{R})}$. Explicitly,
 given a basis $\{\ve_1,\ve_2,\ve_3\}$
for $\mathfrak{sl}(2,\mathbb{R})$ satisfying
\begin{equation}
\label{sl2Rve}
[\ve_1,\ve_2]=-\ve_1,\qquad  [\ve_1,\ve_3]=-2\ve_2,\qquad [\ve_2,\ve_3]=-\ve_3,
\end{equation}
the Casimir element
 of $\mathfrak{sl}(2,\mathbb{R})$ reads $\casU=\frac 12 (\ve_1\widetilde\otimes \ve_3+\ve_3\widetilde\otimes \ve_1)-\ve_2\widetilde\otimes \ve_2\in  U_{\mathfrak{sl}(2,\mathbb{R})}$. Then, the inverse of symmetrizer
morphism (\ref{symmap}), $\lambda^{-1}: U_{\mathfrak{sl}(2,\mathbb{R})} \rightarrow S_{\mathfrak{sl}(2,\mathbb{R})}
$, gives rise to  the Casimir element of $S_{\mathfrak{sl}(2,\mathbb{R})}$:
\begin{equation}
\label{casSL2}
\casS=\lambda^{-1}(\casU)=\ve_1\ve_3-\ve^2_2 .
 \end{equation}
 According to Lemma \ref{62} we consider the  Poisson algebra morphism $D$  induced by the isomorphism $\phi:  {\mathfrak{sl}(2,\mathbb{R})} \to \ham$ defined by $\phi(\ve_\alpha)= h_\alpha$ for $\alpha=1,2,3$. 
 Subsequently, via Corollary~\ref{ST}, we obtain
\[
  F=D(\casS)= \phi(\ve_1)\phi(\ve_3)-\phi^2(\ve_2)
=
h_1h_3-h_2^2=(v_y x-v_x y)^2+b\left(1+\frac{y^2}{x^2}\right) .
\]
 In this way, we recover,
 up to an additive and multiplicative constant, the well-known Lewis--Riesenfeld
invariant \cite{LA08}. Note that when $\omega(t)$ is a constant, then $V^X\subset V$ and the function $F$ is also a constant of motion
for $X$ (\ref{ClasErm}).

\subsection{Riccati equations}

Let us turn to the system of Riccati equations  on $\mathcal{O}=\{(x_1,x_2,x_3,x_4)\,|\, x_i\neq x_j,i\neq j=1,\ldots,4\}\subset \mathbb{R}^4,$ given by
\begin{equation}\label{CoupledRic}
\frac{{\rm d} x_i}{{\rm d} t}=a_0(t)+a_1(t)x_i+a_2(t)x_i^2,\qquad i=1,\ldots,4,\\
\end{equation}
where $a_0(t),a_1(t),a_2(t)$ are arbitrary $t$-dependent functions.
The knowledge of a non-constant $t$-independent constant of motion for any system of this type
leads to obtaining a superposition rule for Riccati equations \cite{CGM07}.
Usually, this requires the integration of a system of PDEs~\cite{CGM07} or ODEs~\cite{PW}.
As in the previous subsection, we obtain such a $t$-independent constant of motion through algebraic methods by showing that (\ref{CoupledRic})
is a Lie--Hamilton system with a given Lie--Hamiltonian structure and obtaining an associated polynomial Lie integral.

Observe that (\ref{CoupledRic}) is  a Lie system related to
a $t$-dependent vector field $X=a_0(t)X_1+a_1(t)X_2+a_2(t)X_3$, where
\[
X_1=\sum_{i=1}^4\frac{\partial}{\partial x_i},\qquad X_2=\sum_{i=1}^4x_i\frac{\partial}{\partial x_i},\qquad X_3=\sum_{i=1}^4x_i^2\frac{\partial}{\partial x_i}
\]
span a Vessiot--Guldberg Lie algebra $V$ for (\ref{CoupledRic}) isomorphic to $\mathfrak{sl}(2,\mathbb{R})$ satisfying the same commutation relations (\ref{sl2R}). For simplicity, we assume $V^X=V$. Nevertheless, our final results are valid for any other case.

To show that (\ref{CoupledRic}) is a Lie--Hamilton system for arbitrary functions $a_0(t)$, $a_1(t)$, $a_2(t)$, we need to search for a symplectic form $\omega$ such that $V$ consists of Hamiltonian vector fields. By impossing $\mathcal{L}_{X_\alpha}\omega=0$, for $\alpha=1,2,3$, we obtain the 2-form
\[
\omega=\frac{{\rm d} x_1\wedge {\rm d} x_2}{(x_1-x_2)^2}+\frac{{\rm d} x_3\wedge {\rm d} x_4}{(x_3-x_4)^2},
\]
which is closed and non-degenerate on $\mathcal{O}$. Now, observe that
$\iota_{X_\alpha}\omega={\rm d}h_\alpha$, with $\alpha=1,2,3$ and
\[
  h_1=\frac{1}{x_1-x_2}+\frac{1}{x_3-x_4},\quad
h_2=\frac 12\left(\frac{x_1+x_2}{x_1-x_2}+\frac{x_3+x_4}{x_3-x_4}\right),\quad
h_3=\frac{x_1 x_2}{x_1-x_2}+\frac{x_3 x_4}{x_3-x_4}.
\]
So, $h_1,h_2$ and $h_3$ are Hamiltonian functions for $X_1$, $X_2$ and $X_3$, correspondingly. Using the Poisson bracket $\{\cdot,\cdot\}_\omega$ induced by $\omega$, we obtain that $h_1,h_2$ and $h_3$ satisfy the commutation relations (\ref{sl2Rh}), 
and $(\langle h_1,h_2,h_3\rangle,\{\cdot,\cdot\}_\omega) \simeq \mathfrak{sl}(2,\mathbb{R})$.
Next, we again express  $\mathfrak{sl}(2,\mathbb{R})$ in the   basis $\{\ve_1,\ve_2,\ve_3\}$  with Lie brackets   (\ref{sl2Rve}) and Casimir function (\ref{casSL2}), 
and we consider the Poisson algebra morphism $D: S_{{\mathfrak{sl}(2,\mathbb{R})}}\to   C^\infty ({\cal O})$  given by the isomorphism $\phi(\ve_\alpha)=\,h_\alpha$ for $\alpha=1,2,3$.
As $(\mathcal{O},\{\cdot,\cdot\}_\omega,h_t=a_0(t)h_1+a_1(t)h_2+a_2(t)h_3)$ is a Lie--Hamiltonian structure for $X$ and applying Corollary~\ref{ST}, we obtain  
the $t$-independent constant of motion for $X$:
\[
F=D(\casS)= h_1h_3-h_2^2=\frac{(x_1-x_4)(x_2-x_3)}{(x_1-x_2)(x_3-x_4)}.
\]
As in the previous example, if $V^X\subset V$, then $F$ is also a constant of motion for $X$. It is worth noting that $F$ is the
known constant of motion obtained for deriving a superposition rule for Riccati equations
\cite{PW,CGM07}, which is here deduced through a simple algebraic calculation.

It is also interesting that $V$ also becomes  a Lie algebra of Hamiltonian vector fields with respect to a
 second symplectic structure given by $\omega=\sum_{1\le i<j}^4\frac{{\rm d} x_i\wedge {\rm d} x_j}{(x_i-x_j)^2}\,$.
Consequently, the system (\ref{CoupledRic}) can be considered, in fact, as a {\it bi--Lie--Hamilton system}.

\subsection{Second-order Kummer--Schwarz equations in Hamiltonian form}
It was proved  in  \cite{CLS12Ham}  that the   second-order Kummer--Schwarz equations \cite{CGL11,BB95}
admit a $t$-dependent Hamiltonian which can be used to work out their Hamilton's equations, namely
\begin{equation}\label{Hamil}
\left\{
\begin{array}{rl}
\frac{{\rm d} x}{{\rm d} t}&=\frac{px^3}{2},\\
\frac{{\rm d} p}{{\rm d} t}&=-\frac{3p^2x^2}{4}- \frac{b_0}{4}+\frac{4b_1(t)}{x^2},
\end{array}\right.
\end{equation}
where $b_1(t)$ is a non-constant $t$-dependent function, $(x,p)\in {\rm T}^*\mathbb{R}_0$ with  $\mathbb{R}_0\equiv \mathbb{R}-\{0\}$, and $b_0$ is a real constant.
 This is  a  Lie system  associated to the $t$-dependent vector field
$X=X_3+b_1(t)X_1$~\cite{CLS12Ham}, where the vector fields
\[
  \qquad X_1=\frac{4}{x^2}\frac{\partial}{\partial p},\qquad
X_2=x\frac{\partial}{\partial x}-p\frac{\partial}{\partial p},\qquad
X_3=\frac{px^3}{2}\frac{\partial}{\partial
x}-\frac 14\left({3p^2x^2}+ b_0\right) \frac{\partial}{\partial p}
\]
span a Vessiot--Guldberg Lie algebra $V$   isomorphic to $\mathfrak{sl}(2,\mathbb{R})$   fulfilling   (\ref{sl2R}).
 Moreover, $X$ is a Lie--Hamilton system, as $V$ consists of
Hamiltonian vector fields with respect to the Poisson bivector
$\Lambda={\partial}/{\partial x}\wedge\partial/ \partial p$ on ${\rm T}^* \mathbb{R}_0$. Indeed,
$X_\alpha=-\widehat\Lambda(dh_\alpha)$, with $\alpha=1,2,3$ and
\begin{equation}\label{FunKS}
h_1=\frac 4 x,\qquad h_2= xp,\qquad h_3=\frac 14\left(p^2x^3+ b_0x \right)
\end{equation}
are a basis of    a Lie algebra isomorphic to $\mathfrak{sl}(2,\mathbb{R})$ satisfying the commutation relations (\ref{sl2Rh}).
Therefore, (\ref{Hamil}) is a Lie--Hamilton system possessing a Lie--Hamiltonian structure
$({\rm T}^*\mathbb{R}_0,\Lambda,h)$, where $h_t=h_3+b_1(t)h_1$. 

To obtain a superposition rule for $X$ we need to determine an integer $m$ so
that the diagonal prolongations  of $X_\alpha$ to ${\rm T}^*\mathbb{R}^m_0$ $(\alpha=1,2,3)$
become linearly independent at a generic point (see \cite{CGM07,CGL12}). This happens for $m=2$. We consider a coordinate system in ${\rm T}\mathbb{R}^3_0$, namely $\{ x_{(1)}, p_{(1)},
x_{(2)}, p_{(2)},x_{(3)}, p_{(3)} \}$. A superposition rule for $X$ can be obtained
by determining two common first integrals for the diagonal prolongations
$\widetilde X_\alpha$ to ${\rm T}^*\mathbb{R}^3_0$
 satisfying
\begin{equation}\label{F1F2}
\frac{\partial(F_1,F_2)}{\partial(x_{(1)},p_{(1)})}\neq 0.
\end{equation}
Instead of searching $F_1,F_2$ in the standard way, i.e.~by solving the system of PDEs given by $\widetilde X_\alpha f =0$,
we make use of Theorem \ref{MT}. This provides such first integrals through the Casimir element $\casS$  (\ref{casSL2}) of the symmetric algebra of
$\ham\simeq {\mathfrak{sl}(2,\mathbb{R})}$. Indeed, the coproduct (\ref{Con}) enables
us to define the elements
\begin{eqnarray*}
  \Delta (C)=\Delta(\ve_1)\Delta(\ve_3)-\Delta(\ve_2)^2=(\ve_1 \otimes  1 + 1 \otimes  \ve_1)(\ve_3 \otimes  1 + 1 \otimes  \ve_3)\!-\!(\ve_2 \otimes
 1 + 1 \otimes  \ve_2)^2,
\\
  \Delta^{(3)} (C)=\Delta^{(3)}(\ve_1)\Delta^{(3)}(\ve_3)-\Delta^{(3)}(\ve_2)^2=(\ve_1 \otimes  1\otimes  1+1 \otimes  \ve_1 \otimes  1 + 1 \otimes  1\otimes
  \ve_1)\times\\  (\ve_3 \otimes  1 \otimes  1 + 1 \otimes  \ve_3 \otimes  1 + 1 \otimes 1 \otimes
  \ve_3)-(\ve_2 \otimes  1 \otimes
 1 + 1 \otimes  \ve_2 \otimes  1 + 1 \otimes 1 \otimes  \ve_2)^2,
\end{eqnarray*}
for $S_{\mathfrak{sl}(2,\mathbb{R})}^{(2)}$ and $S_{\mathfrak{sl}(2,\mathbb{R})}^{(3)}$, respectively. By applying $D$, $D^{(2)}$ and $D^{(3)}$ coming from
   the isomorphism $\phi(\ve_\alpha)=\,h_\alpha$ for the Hamiltonian functions (\ref{FunKS}),     we obtain,  via Theorem \ref{MT},
 the following constants of motion of the type (\ref{invA}):
 \[
\begin{array}{rl}
   &F=D(\casS) = h_1(x_1,p_1) h_3(x_1,p_1)-h_2^2(x_1,p_1)=  b_0,\nonumber \\
  &\FF^{(2)} = D^{(2) } (\Delta(\casS) )
=  \left(h_1(x_1,p_1)+h_1(x_2,p_2)\right)\left(h_3(x_1,p_1)+h_3(x_2,p_2)\right) \nonumber\\   &-\left(h_2(x_1,p_1)+h_2(x_2,p_2)\right)^2
=\frac{b_0(x_1 +x_2)^2+(p_1x_1^2-p_2x_2^2)^2}{x_1x_2}=\frac{b_0(x_1^2+x_2^2)+(p_1x_1^2-p_2x_2^2)^2}{x_1x_2}+2b_0, \nonumber\\
  &\FF^{(3)}  =D^{(3) } (\Delta(\casS) )
= \sum_{i=1}^3 h_1(x_i,p_i) \sum_{j=1}^3h_3(x_j,p_j)-\left( \sum_{i=1}^3 h_2(x_i,p_i)   \right)^2\nonumber\\
  &= \sum_{1\le i<j}^3 \frac{b_0(x_i +x_j)^2+(p_ix_i^2-p_jx_j^2)^2}{x_ix_j} - 3 b_0,
 \nonumber
  \label{intKS}
\end{array}
\]
    where, for the sake of simplicity, hereafter we   denote $(x_i,p_i)$ the coordinates $(x_{(i)},p_{(i)})$. Thus $F$ simply  gives rise to the constant $b_0$, while $\FF^{(2)}$ and $\FF^{(3)}$ are, by construction,   two functionally independent constants of motion for $\widetilde X$ fulfilling (\ref{F1F2}) which, in  turn, allows us to   to derive a superposition rule for $X$. Furthermore, the function $\FF^{(2)}\equiv \FF^{(2)}_{12}$ provides two other constants of the type (\ref{invB}) given by  $\FF_{13}^{(2)}=S_{13}(\FF^{(2)})$ and  $\FF_{23}^{(2)}=S_{23}(\FF^{(2)})$ that verify $  \FF^{(3)}=  \FF^{(2)}+\FF_{13}^{(2)}+\FF_{23}^{(2)}-3b_0 .$
Since it is simpler to work with $\FF_{23}^{(2)}$ than with $  \FF^{(3)}$, we choose the pair $ \FF^{(2)}$, $\FF_{23}^{(2)}$ as the two  functionally independent  first integrals   to obtain a superposition rule. We set
\begin{equation}
\label{xa}
\FF^{(2)}=k_1+2b_0,\qquad \FF_{23}^{(2)}=k_2+2b_0,
\end{equation}
and   compute $x_1,p_1$ in terms of the other variables and $k_1,k_2$.
From (\ref{xa}), we have
\begin{equation}
\label{p1}
p_1=p_1(x_1,x_2,p_2,x_3,p_3,k_1)=\frac{p_2x_2^2\pm\sqrt{k_1 x_1 x_2-b_0(x_1^2+x_2^2)}}{x_1^2}.
\end{equation}
Substituying in the second relation within (\ref{xa}), we obtain
 \begin{equation}
\label{x1}\nonumber
x_1=x_1(x_2,p_2,x_3,p_3,k_1,k_2)= \frac{\aA^2\bB_++b_0 \bB_- (x_2^2-x_3^2) \pm 2 \aA\sqrt{\cC}}{\bB_-^2+4b_0 \aA^2} ,
\end{equation}
provided that the functions $\aA,\bB_\pm,\cC$ are defined by
\[
 \begin{array}{rl}
&\aA=p_2x_2^2-p_3 x_3^2,\qquad \bB_\pm=k_1 x_2\pm k_2 x_3,\\
&\cC=\aA^2\left[ k_1k_2x_2x_3-2b_0^2(x_2^2+x_3^2)-
b_0\aA^2\right]+b_0x_2x_3B_-(k_2x_2-k_1x_3)-b_0^3(x_2^2-x_3^2)^2.
\end{array}
\]
By introducing this result  into (\ref{p1}), we obtain $p_1=p_1(x_2,p_2,x_3,p_3,k_1,k_2)$ which, along with $x_1=x_1(x_2,p_2,x_3,p_3,k_1,k_2)$, provides  a superposition rule for $X$.

In particular for (\ref{Hamil})  with  $b_0=0$ it results
 \[
  x_1= \frac{ \aA^2\left(\bB_+   \pm 2  \sqrt{k_1k_2x_2x_3} \,\right)}{\bB_-^2} ,
\quad p_1= \bB_-^3\frac{\left( \bB_- p_2 x_2^2\pm \aA\sqrt{k_1 x_2\left(\bB_+ \pm 2\sqrt{k_1k_2 x_2 x_3} \,\right)} \,\right) }  {\aA^4\left( \bB_+ \pm 2\sqrt{k_1k_2 x_2 x_3}\, \right)^2} ,
\]
where the functions $\aA,\bB_\pm$ remain in the above same  form. As the constants of motion were derived for non-constant $b_1(t)$, when $b_1(t)$ is constant
we have $V^{X}\subset V$. As a consequence, the functions $F$, $F^{(2)}$, $F^{(3)}$  and so on are still constants of motion for the diagonal prolongation $\widetilde X$ and the superposition rules are still valid for any system  (\ref{Hamil}).

\subsection{Smorodinsky--Winternitz systems with a time-dependent frequency}

Let us focus on the  $n$-dimensional  Smorodinsky--Winternitz systems \cite{WSUF65,WSUF67} with unit mass and  
a non-constant time-dependent frequency $\omega(t)$ whose  Hamiltonian is given by
\[
h=\frac 12  \sum_{i=1}^n p_i^2+ \frac 12 \omega^2(t) \sum_{i=1}^n x_i^2+\frac 12 \sum_{i=1}^n \frac{b_i}{x_i^2},
\]
where the $b_i$'s are $n$ real constants.
  The corresponding Hamilton's equations read
\begin{equation}\label{ClasErm2}
\left\{\begin{array}{rl}
\frac{{\rm d} x_i}{{\rm d} t}&=p_i,\\
\frac{{\rm d} p_i}{{\rm d} t}&=-\omega^2(t)x_i+\frac{b_i}{x_i^3},\\
\end{array}
\right.\qquad i=1,\ldots,n.
\end{equation}
 These systems have been recently attracting quite much attention in classical and
quantum mechanics for their special properties \cite{CLS12Ham,GPS06,SIGMAvulpi,YNHJ11}. Observe that Ermakov systems (\ref{ClasErm}) arise  
as the particular case   of  (\ref{ClasErm2}) for  $n=2$ and $b_2=0$. For $n=1$
the above system maps into the Milne--Pinney equations, which are of interest
in the study of several cosmological models \cite{Dissertationes,Pi50,CL08b}, through the diffeomorphism $(x,p)\in {\rm T}^*\mathbb{R}_0\rightarrow (x,v=p)\in {\rm T}\mathbb{R}_0$.

Let us show that the
system (\ref{ClasErm2}) can be endowed with a Lie--Hamiltonian structure.
This system
 describes the integral curves of the $t$-dependent vector
field on ${\rm T}^*\mathbb{R}^{n}_0$ given by $X=X_3+\omega^2(t)X_1$, where the vector fields
\begin{eqnarray}
  X_1=-\sum_{i=1}^nx_i\frac{\partial}{\partial p_i},\quad
X_2=\frac{1}{2}\sum_{i=1}^n\left(p_i\frac{\partial}{\partial p_i}-x_i\frac{\partial}{\partial x_i}\right), \quad X_3=\sum_{i=1}^n
\left(p_i\frac{\partial}{\partial x_i}+
\frac{b_i}{x_i^3}\frac{\partial}{\partial p_i}\right),
\label{VGSec}\nonumber
\end{eqnarray}
fulfil the commutation rules (\ref{sl2R}). Hence, (\ref{ClasErm2}) is a Lie system. The space ${\rm T}^*\mathbb{R}^n_0$ admits a natural Poisson bivector $\Lambda=\sum_{i=1}^n\partial/\partial x_i\wedge \partial/\partial p_i$ related to the
restriction to this space of the canonical symplectic structure
on ${\rm T}^*\mathbb{R}^n$.  Moreover, the preceding vector
fields are Hamiltonian
vector fields with Hamiltonian functions
\begin{equation}
\label{SWh}
h_1=\frac 12\sum_{i=1}^nx_i^2,\qquad h_2=-\frac 12\sum_{i=1}^nx_ip_i,\qquad h_3=\frac 12\sum_{i=1}^n\left(p_i^2+\frac{b_i}{x_i^2}\right)
\end{equation}
which satisfy the commutation relations (\ref{sl2Rh}), so that $\ham\simeq \mathfrak{sl}(2,\mathbb{R})$.
Consequently, every curve $h_t$ that takes values in the Lie algebra spanned by
$h_1,h_2$ and
$h_3$ gives rise to a Lie--Hamiltonian structure $({\rm
T}^*\mathbb{R}_0^n,\Lambda,h)$. Then,
the system (\ref{ClasErm2}), described by the $t$-dependent vector field
$X=X_3+\omega^2(t)X_1=-\widehat\Lambda(dh_3+\omega^2(t)dh_1),
$ is a Lie--Hamilton system with a
Lie--Hamiltonian structure $({\rm T}^*\mathbb{R}_0^n,\Lambda,
h_t=h_3+\omega^2(t)h_1)$.

Subsequently, we derive an explicit superposition rule for   the  simplest case of the system (\ref{ClasErm2}) corresponding to $n=1$,
and  proceed as in the previous subsection. The prolongations of $X_\alpha$ $(\alpha=1,2,3)$ again become linearly independent for $m=2$ and we need to obtain two first integrals for the diagonal prolongations $\widetilde X_\alpha$ of ${\rm T}^*\mathbb{R}^3_0$ fulfilling (\ref{F1F2}) for the coordinate system $\{ x_{(1)}, p_{(1)},
x_{(2)}, p_{(2)},x_{(3)}, p_{(3)} \}$ of ${\rm T}^*\mathbb{R}^3_0$.
Similarly to the previous example, we have an injection $D:\mathfrak{sl}(2,\mathbb{R})\rightarrow C^\infty({\rm T}^*\mathbb{R}_0)$ which leads to the morphisms   $D^{(2)}$ and $D^{(3)}$. Then, by taking into account the Casimir function (\ref{casSL2}) and the Hamiltonians (\ref{SWh}), we apply Theorem \ref{MT} obtaining the following first integrals:
 \[
\begin{array}{rl}
 \FF^{(2)} = D^{(2) } (\Delta(\casS) )
&= \frac 14
(x_1p_2-x_2p_1)^2+\frac{b(x_1^2+x_2^2)^2}{4x_1^2x_2^2}, \nonumber\\
 \FF^{(3)}  =D^{(3) } (\Delta(\casS) )
&=  \frac 14 \sum_{1\le i<j}^3 \left(
(x_ip_j-x_jp_i)^2+\frac{b(x_i^2+x_j^2)^2}{x_i^2x_j^2} \right)- \frac 34 b,
 \nonumber\\
    \FF_{13}^{(2)}=S_{13}(\FF^{(2)}),\quad  &\quad \FF_{23}^{(2)}=S_{23}(\FF^{(2)}),\qquad   \FF^{(3)}=  \FF^{(2)}+\FF_{13}^{(2)}+\FF_{23}^{(2)}-3b/4 ,
   \label{intSW}
\end{array}
\]
    where  $(x_i,p_i)$ denote the coordinates $(x_{(i)},p_{(i)})$; notice that   $F=D(\casS) =    { b}/4$.
 We choose $\FF^{(2)}$ and $ \FF_{23}^{(2)}$ as the two functionally independent constants of motion and we shall use $ \FF_{13}^{(2)}$ in order to simplify the results. Recall that
 these functions are exactly the first integrals obtained in other works, e.g.~\cite{SIGMA}, for describing superposition rules of dissipative Milne--Pinney
equations (up to the diffeomorphism
$\varphi:(x,p)\in {\rm T}^*\mathbb{R}_0\mapsto (x,v)=(x,p)\in {\rm
T}\mathbb{R}_0$ to system (\ref{ClasErm2}) with $n=1$), and lead straightforwardly to
deriving a superposition rule for these equations \cite{CL08b}.

 Indeed, we set
 \begin{equation}
\label{ff1}
 \FF^{(2)}=\frac{k_1}4+\frac{b}2 ,\qquad   \FF_{23}^{(2)}=\frac{k_2}4+\frac{b}2 ,\qquad   \FF_{13}^{(2)}=\frac{k_3}4+\frac{b}2 ,\qquad
\end{equation}
 and from the first equation we
    obtain
 $p_1$ in terms of the remaining variables and $k_1$:
\begin{equation}
\label{pp1}
p_1=p_1(x_1,x_2,p_2,x_3,p_3,k_1)=\frac{p_2x_1^2x_2\pm\sqrt{k_1 x_1^2 x_2^2-b(x_1^4+x_2^4)}}{x_1 x_2^2}.
\end{equation}
By  introducing this value in the second expression  of (\ref{ff1}),
 one can determine the expression of $x_1$ as a function of $x_2,p_2,x_3,p_3$ and the constants $k_1,k_2$. Such a result is rather simplified when the third constant of  (\ref{ff1}) enters, 
 yielding
 \begin{equation}
\begin{array}{rl}
x_1&=x_1(x_2,p_2,x_3,p_3,k_1,k_2)=x_1(x_2,x_3,k_1,k_2,k_3)\\
&=\left[ {\MM_1 x_2^2+\MM_2 x_3^2\pm \sqrt{\MM \left[k_3 x_2^2 x_3^2 -b (x_2^4+x_3^4) \right]}}    \right]^{1/2} ,
 \end{array}
 \label{xx1}
\end{equation}
where the constants $\MM_1,\MM_2,\MM$ are defined in terms of $k_1,k_2,k_3$ and $b$  as
\[
  \MM_1=\frac{2b k_1-k_2k_3}{4b^2-k_3^2},\qquad \MM_2=\frac{2b k_2-k_1k_3}{4b^2-k_3^2},\qquad
\MM=\frac{4\left[4b^3 +k_1 k_2 k_3 - b(k_1^2+k_2^2+k_3^2) \right]}{(4b^2-k_3^2)^2}.
\]
 And by introducing (\ref{xx1}) into (\ref{pp1}), we   obtain $p_1=p_1(x_2,p_2,x_3,p_3,k_1,k_2)=p_1(x_2,p_2,x_3,k_1,k_2,k_3),$
  which together with   (\ref{xx1})  provide a superposition rule for (\ref{ClasErm2}) with $n=1$.
  These expressions constitute the known superposition rule for Milne--Pinney equations \cite{CL08b}.
Observe that,
instead of solving systems of PDEs for obtaining  the first integrals   as
in \cite{SIGMA,CL08b}, we have obtained them algebraically in a simpler way. When $b=0$ we recover, as expected, 
the superposition rule for the harmonic oscillator with a $t$-dependent frequency. Similarly to previous examples, the above
superposition rule is also valid when $\omega(t)$ is constant.

\subsection{A classical system with trigonometric nonlinearities}
Let us study a final example appearing in the
study of integrability of classical systems \cite{ADR11,AW05}. Consider the
system
\[
\left\{\begin{array}{rl}
\frac{{\rm d} x}{{\rm d} t}&=\sqrt{1-x^2}\left(B_x(t)\sin\,p-B_y(t)\cos p\right),\\
\frac{{\rm d} p}{{\rm d} t}&=-(B_x(t) \cos p+B_y (t)\sin p)\frac{x}{\sqrt{1-x^2}}-B_z(t),
\end{array}\right.
\]
where $B_x(t),B_y(t), B_z(t)$ are arbitrary $t$-dependent
functions and $(x,p)\in {\rm T}^*{\rm I}$, with ${\rm I}=(-1,1)$. This system describes the integral curves of the $t$-dependent vector
field
\begin{eqnarray*}
  X=\sqrt{1-x^2}(B_x(t)\sin\,p\!-\!B_y(t)\cos p)\frac{\partial}{\partial x}\!-\!\left[\frac{(B_x(t) \cos
p\!+\!B_y (t)\sin p)x}{\sqrt{1-x^2}}\!+\!B_z(t)\right]\frac{\partial}{\partial p},
\end{eqnarray*}
which can be brought into the form $X=B_x(t)X_1+B_y(t)X_2+B_z(t)X_3$, where
\begin{eqnarray*}
  X_1=\sqrt{1-x^2}\sin\, p\frac{\partial}{\partial x}\!-\!\frac{x}{\sqrt{1-x^2}}\cos
p\frac{\partial}{\partial p},\,\,
X_2=-\sqrt{1-x^2}\cos\,p\frac{\partial}{\partial x}\!-\!\frac{x}{\sqrt{1-x^2}}\sin
p\frac{\partial}{\partial p},
\end{eqnarray*}
and $X_3=-\partial/\partial p$ satisfy the commutation relations
\[
 [X_1,X_2]=X_3, \qquad [X_3,X_1]=X_2, \qquad [X_2,X_3]=X_1.
\]
In other words, $X$ describes a Lie system associated with a Vessiot--Guldberg
Lie algebra isomorphic to $\mathfrak{su}(2)$. As in the previous examples, we assume $V^X=V$. Now, the vector fields
 $X_\alpha$ $(\alpha=1,2,3)$ are Hamiltonian ones with Hamiltonian functions given by
\begin{equation}
\label{so3}
h_1=-\sqrt{1-x^2}\cos p,\qquad h_2=-\sqrt{1-x^2}\sin p,\qquad h_3=x,
\end{equation}
thus spanning a real Lie algebra isomorphic to $\mathfrak{su}(2)$. Indeed,
\[
 \{h_1,h_2\}=-h_3 ,\qquad \{h_3,h_1\}=-h_2 ,\qquad \{h_2,h_3\}=-h_1 .
\]
Next we consider a basis $\{\ve_1,\ve_2,\ve_3\}$
for $\mathfrak{su}(2)$ satisfying
\[
[\ve_1,\ve_2]=-\ve_3,\qquad  [\ve_3,\ve_1]=-\ve_2,\qquad [\ve_2,\ve_3]=-\ve_1,
\]
so that  $\mathfrak{su}(2)$ admits the Casimir $\casU= \ve_1\widetilde\otimes \ve_1+\ve_2\widetilde\otimes \ve_2+\ve_3\widetilde\otimes \ve_3\in  U_{\mathfrak{su}(2)}$. Then,   the Casimir element of $S_{\mathfrak{su}(2)}$ reads 
$
\casS=\lambda^{-1}(\casU)=\ve_1^2+ \ve^2_2 + \ve^2_3 .
$

The diagonal prolongations of $X_1,X_2,X_3$ are linearly independent at a generic point for $m=2$ and we have to derive
two first integrals for the diagonal prolongations $\widetilde X_1,\widetilde X_2,\widetilde X_3$ on ${\rm T}^*{\rm I}^3$ satisfying (\ref{F1F2}) 
working with the coordinates $\{ x_{(1)}, p_{(1)}, x_{(2)}, p_{(2)},x_{(3)}, p_{(3)} \}$ of ${\rm T}^*{\rm I}^3$. Then, by taking into account the Casimir function $C$, 
the Hamiltonians (\ref{so3}), the isomorphism $\phi(\ve_\alpha)=\,h_\alpha$ and the injection $D:\mathfrak{sl}(2,\mathbb{R})\rightarrow C^\infty({\rm T}^*{\rm I})$, we apply Theorem \ref{MT} obtaining the following first integrals:
 \[
\begin{array}{rl}
  \FF^{(2)}
&= 2\left(\sqrt{1-x^2_1}\sqrt{1-x^2_2}\cos (p_1-p_2)+x_1x_2+1\right) , \nonumber\\
  \FF^{(3)}
&=  2  \sum_{1\le i<j}^3 \left(
\sqrt{1-x^2_i}\sqrt{1-x^2_j}\cos (p_i-p_j)+x_ix_j \right)+3,
 \nonumber\\
    \FF_{13}^{(2)}&=S_{13}(\FF^{(2)}),   \qquad \FF_{23}^{(2)}=S_{23}(\FF^{(2)}),\qquad   \FF^{(3)}=  \FF^{(2)}+\FF_{13}^{(2)}+\FF_{23}^{(2)}-3  ,
 \end{array}
\]
and  $F=D(\casS) =   1$.
 We again choose $\FF^{(2)}$ and $ \FF_{23}^{(2)}$ as the two functionally independent constants of motion,
  which provide us, after   cumbersome but straightforward computations, with a superposition rule
for these systems.
This leads to a quartic equation, whose solution can be obtained
through known methods. All our results are also valid for the case when $V^X\subset V$.

\section{Conclusions and outlook}
 We have proved several new properties of the constants of motion for Lie--Hamilton systems. New methods for their calculation have been devised, and Poisson coalgebra techniques have been developed for obtaining superposition rules.

Our achievements strongly simplify the search for constants of motion and superposition rules by avoiding the integration of ODEs and PDEs required by standard methods. 
We have provided generalisations of previous results on Lie--Hamilton systems \cite{CLS12Ham,Ru10} and coalgebra integrability of autonomous Hamiltonian systems \cite{BBHMR09,BR}. Finally, we illustrated our approach by analysing several non-autonomous systems of interest.

In the future, we aim to apply our techniques to new relevant systems. Furthermore, we expect to extend our 
formalism to Lie systems admitting 
a Vessiot--Guldberg Lie algebra
of Hamiltonian vector fields with respect to a Dirac structure. This would enable us to use our procedures to study
a broader variety of systems. Moreover, we also expect to analyse the use of Poisson coalgebra techniques 
to devise an algebraic approach to Lie--Hamilton systems with mixed superposition rules.

\section*{Acknowledgements}

The research of J. Cari\~nena, J. de Lucas and C. Sard\'on was
supported by the Polish National Science Centre
under the grant HARMONIA Nr 2012/04/M/ST1/00523
and the research project MTM--2009--11154.
J. de Lucas would like to thank a grant FMI40/10 from the Direcci\'on General de Arag\'on
to perform a research stay in
Zaragoza. C. Sard\'on acknowledges a fellowship provided by the
University of Salamanca and partial financial support by research proyect
FIS2009-07880
(DGICYT).   A.~Ballesteros and F.J.~Herranz acknowledge partial financial suport from the project MTM--2010--18556.

\appendix
\section*{Appendix}
\setcounter{section}{1}

We now detail the proof of Lemma \ref{62}. The elements of $S_\mathfrak{g}$ given by $v^\JJ\equiv v_1^{i_1}\cdot\ldots\cdot v_r^{i_r}$,
where the $\JJ$'s are $r$-multi-indexes, ``$\cdot$" denotes the product of elements of $\mathfrak{g}$ as functions on $\mathfrak{g}^*$ and $\{v_1,\ldots,v_r\}$ is a basis for $\mathfrak{g}$, form a basis of $S_\mathfrak{g}$.
 Then, every $P\in S_\mathfrak{g}$ can be written in a unique way as $P=\sum_{\JJ\in M}\lambda_\JJ v^\JJ$, where $M$ is a finite family of multi-indexes and each $\lambda_\JJ\in\mathbb{R}$. 
 Hence, the $\mathbb{R}$-algebra morphism  $D:(S_\mathfrak{g},\cdot)
\rightarrow
(C^\infty(N),\cdot)$ extending $\phi : \mathfrak{g}\rightarrow \ham$ is determined by the image of the elements of a basis for $\mathfrak{g}$. Indeed,
\begin{equation}\label{dec2}
D(P)=\sum_\JJ\lambda_\JJ D(v^\JJ)=\sum_\JJ\lambda_\JJ\phi(v^{i_1}_1)\cdot\ldots\cdot \phi(v^{i_r}_r).
\end{equation}
Let us prove that $D$ is also an $\mathbb{R}$-algebra morphism. From (\ref{dec2}),
we see that $D$ is linear. Moreover, $D(PQ)=D(P)D(Q)$
for every $P,Q\in S_\mathfrak{g}$. In fact, if we write $Q=\sum_{\KK\in M}\lambda_\KK v^\KK$, we obtain
\begin{eqnarray*}
  D(PQ)\!=\!\!D\!\left(\!\sum_\JJ\!\lambda_\JJ v^\JJ\!\sum_\KK\!\lambda_\KK v^\KK\!\!\right)\!\!=\!\!\sum_\LL
\!\!\sum_{\JJ\!+\!\KK=\!\LL}\!\!\!\!\lambda_\JJ\lambda_\KK D(v^\LL)\!=\!\!\sum_\JJ\!\lambda_\JJ D(v^\JJ)\!\!\sum_\KK\!\lambda_\KK D(v^\KK)\!\!=\!\!D(P)D(Q),\nonumber
\end{eqnarray*}
where $\JJ+\KK=(i_1+j_1,\ldots,i_r+j_r)$ with $\JJ=(i_1,\ldots,i_r)$ and $\KK=(j_1,\ldots, j_r)$.

Let us show that  $D$ is also a Poisson morphism. By linearity, this reduces to proving that
$D\left ( \{v^\JJ,v^\KK\}_{S_\mathfrak{g}}\right)=\{D(v^\JJ),D (v^\KK)\}_\Lambda$ for arbitrary $\JJ$ and $\KK$. Define $|\JJ|=i_1+\ldots+i_r$.
If $|\KK|=0$ or $|\JJ|=0$ this is satisfied, as a Poisson bracket vanishes when any entry is a constant. We now prove by induction 
the remaining cases. For $|\JJ|+|\KK|=2$, we have 
\[
  {D}\left( \{v_\alpha,v_\beta\}_{S_\mathfrak{g}} \right)={\phi}([v_\alpha,v_\beta]_{\mathfrak{g}})=\{ \phi (v_\alpha),\phi (v_\beta)\}_\Lambda=\{{D} (v_\alpha),{D}
(v_\beta)\}_\Lambda ,\quad \forall \alpha,\beta=1,\ldots,r.
\]
If $D$ is a Poisson morphism for $|\JJ|+|\KK|=m>2$,
then for $|\JJ|+|\KK|=m+1$ we can set $v^\JJ=v^{\bar \JJ}v^{i_{\gamma}}_\gamma$ for $i_\gamma\neq 0$ and some $\gamma$
to obtain
\begin{eqnarray*}
 && D\left(\{v^\JJ,v^\KK\}_{S_\mathfrak{g}}\right)=D\left(\{v^{\bar \JJ}v_\gamma^{i_\gamma},v^\KK\}_{S_\mathfrak{g}}\right)=D\left(\{v^{\bar \JJ},v^\KK\}_{S_\mathfrak{g}}
v_\gamma^{i_\gamma}+v^{\bar \JJ}\{v_\gamma^{i_\gamma},v^\KK\}_{S_\mathfrak{g}}\right)\cr
  &&\qquad \qquad\qquad \quad\!
 =
\{D(v^{\bar \JJ}),D(v^\KK)\}_\Lambda D (v_\gamma^{i_\gamma})+
D( v^{\bar \JJ})\{D(v_\gamma^{i_\gamma}),D (v^\KK)\}_\Lambda\cr
  &&\qquad\qquad\qquad\quad\!	 =\{D( v^{\bar \JJ})D (v_\gamma^{i_\gamma}),D (v^\KK)\}_\Lambda=
\{D(v^\JJ),D( v^\KK)\}_\Lambda.\nonumber
\end{eqnarray*}
By induction, $D\left ( \{v^\JJ,v^\KK\}_{S_\mathfrak{g}}\right)=\{D(v^\JJ),D (v^\KK)\}_\Lambda$ for any $I$ and $J$.

\end{document}